\definecolor{IndustrialBlue}{RGB}{29,88,167}
\pgfplotsset{compat=1.18}
\tikzstyle{item} = [rectangle, rounded corners, minimum width=3cm, minimum height=1cm,text centered, draw=black]
\tikzstyle{arrow} = [thick,->,>=stealth]
\tikzset{
  item/.style={draw, rectangle, rounded corners, align=center, minimum width=2.5cm, minimum height=1cm},
  highlight/.style={draw=cyan, thick, dotted, rounded corners, inner sep=0.5cm}
}
\long\def\ca#1\cb{} 
\newcommand{\ket}[1]{|#1\rangle}               
\newcommand{\GHZ}{\text{GHZ}}
\newcommand{\HC}{\mathcal{H}}
\newcommand{\Tr}{{\rm Tr}}
\renewcommand{\geq}{\geqslant}
\renewcommand{\leq}{\leqslant}
\renewcommand{\ge}{\geqslant}
\renewcommand{\le}{\leqslant}
\renewcommand{\Re}{\text{Re}}
\newcommand{\psitilde}{\widetilde{\smash{\psi}}}
\newcommand{\ltg}{{L^\tau_{\mathrm{global}}}}
\newcommand{\ltl}{{L^\tau_{\mathrm{local}}}}
\newcommand{\R}{\mathbb{R}}
\newcommand{\C}{\mathbb{C}}
\let\Re\relax
\DeclareMathOperator{\Re}{Re}
\newcommand{\wt}{\widetilde}
\DeclareMathOperator{\tr}{tr}
\DeclareMathOperator*{\E}{\mathbb{E}}
\newcommand{\Eunif}{{\mathcal{E}_{\mathrm{unif}}}}
\newcommand{\Bgammaba}{{{\mathbf{\Gamma}}_{BA}}}
\newcommand{\D}{{\mathbf{D}}}
\newtheorem{theorem}{Theorem}
\newtheorem{corollary}[theorem]{Corollary}
\newtheorem{lemma}[theorem]{Lemma}
\newtheorem{proposition}[theorem]{Proposition}
\newtheorem{definition}{Definition}
\renewcommand{\bm}[1]{\boldsymbol{#1}}            
\newcommand{\var}{\textnormal{Var}}
\renewcommand\thesection{\Roman{section}}
\renewcommand\p@subsection{\thesection.}
\renewcommand\p@subsubsection{\thesection.\Alph{subsection}.}
\begin{document}
\title{Localizing entanglement in high-dimensional states}
\author{Christopher Vairogs}
\affiliation{Department of Physics, University of Illinois Urbana-Champaign}
\author{Akanksha Chablani}
\affiliation{Department of Mathematics, University of Illinois Urbana-Champaign}
\author{Leo Lee}
\affiliation{Department of Mathematics, University of Illinois Urbana-Champaign}
\author{Hanyang Sha}
\affiliation{Department of Mathematics, University of Illinois Urbana-Champaign}
\author{Abigail Vaughan-Lee}
\affiliation{Department of Physics, University of Illinois Urbana-Champaign}
\affiliation{Department of Mathematics, University of Illinois Urbana-Champaign}

\author{Jacob L. Beckey}
\affiliation{Department of Mathematics, University of Illinois Urbana-Champaign}
\affiliation{Illinois Quantum Information Science and Technology Center (IQUIST), University of Illinois Urbana-Champaign}

\begin{abstract} 
In this work, we study the asymptotic behavior of protocols that localize entanglement in large multi-qubit states onto a subset of qubits by measuring the remaining qubits. We use the maximal average $n$-tangle that can be generated on a fixed subsystem by measuring its complement -- either with local or global measurements -- as our key figure of merit. These quantities are known respectively as the localizable entanglement (LE) and the entanglement of assistance (EA). We build upon the work of Ref.~\cite{Vairogs2024} that proposed a polynomial-time test, based on the EA, for whether it is possible to transform certain graph states into others using local measurements. We show, using properties of the EA, that this test is effective and useful in large systems for a wide range of sizes of the measured subsystem. In particular, we use this test to demonstrate the surprising result that general local unitaries and global measurements will typically not provide an advantage over the more experimentally feasible local Clifford unitaries and local Pauli measurements in transforming large linear cluster states into GHZ states. Finally, we derive concentration inequalities for the LE and EA over Haar-random states which indicate that the localized entanglement structure has a striking dependence on the locality of the measurement. In deriving these concentration inequalities, we develop several technical tools that may be of independent interest.
\end{abstract}

\maketitle

\section{Introduction}

Many quantum protocols rely on having access to highly-entangled multipartite states of a particular form. A common approach to creating such states is to measure and discard part of a larger quantum system in a more easily-prepared state. Broadly speaking, we say that such protocols attempt to \textit{localize} entanglement on the target system. Protocols in this vein are relevant across many fields of quantum information science, including measurement-based quantum computation~\cite{raussendorf2001oneway, raussendorf2003measurement, Gross2007BeyondOneWay}, state preparation via dynamic quantum circuits~\cite{Lu2022, Piroli2021}, and random state generation~\cite{Cotler2023, Ippoliti2023DynamicalPurification, Varikuti2024unravelingemergence}. The aim of this paper is to study the ability of protocols to localize entanglement onto a subsystem in the limit of \textit{large total system size}. 

To do so, we will rely on recently proposed variants of the entanglement of assistance (EA) and localizable entanglement (LE)~\cite{Verstraete-LE, Popp-LE, sadhukhan2017multipartite, banerjee2021localizing, harikrishnan2023localizing, amaro2018estimating, amaro2020scalable,banerjee2020uniform,banerjee2022hierarchies,krishnan2023controlling} as key figures of merit. The entanglement of assistance is defined as the maximum average value of entanglement, as measured by a prespecified entanglement measure, between a collection of subsystems that may be produced by performing measurements on all other subsystems. The localizable entanglement is defined analogously, but with the requirement that the measurements over subsystems must be local. Recent work~\cite{Vairogs2024} has proposed to use the LE and EA defined by the multipartite entanglement measure known as the $n$-tangle~\cite{Wong2001, Jaeger2003Invariance} as benchmarks for entanglement localizing protocols. Following up on this prior research, we take this approach in our study. Thus, we will refer to the LE and EA defined in terms of the $n$-tangle simply as the LE and EA, respectively. 

Herein, we are primarily concerned with the behavior of the LE and EA over large multipartite systems due to the importance of scalability in quantum science. Furthermore, due to measure concentration phenomena, in which the values of nicely behaved functions over large data sets deviate from a constant value with only small probability, we may avoid the intricacies of analyzing the LE and EA on particular states. Thus, we opt instead for an understanding of how these benchmarks behave over \textit{typical} states. In particular, we will consider graph states and Haar-random states over many qubits.

Graph states have broad applications across quantum information processing tasks, and so are a natural case to consider. In this paper, we argue how we may interpret the values that the EA assumes on graph states in terms of a simple and efficient test~\cite{Vairogs2024} for whether it is possible to transform a single copy of a given graph state into a graph state of maximal $n$-tangle using a restricted set of operations. The allowed operations we will consider consist of one round of arbitrary measurements on a prespecified collection of qubits followed by arbitrary local unitaries. From this point of view, we show that computing the average value of the EA across an ensemble of random graph states is equivalent to finding the probability that this test is conclusive and, hence, useful. Consequently, we may view the typical values of the EA simultaneously as a reflection of our ability to localize entanglement in large graph states but also as a benchmark of a criterion for deciding the possibility of certain graph-state manipulations. We then rigorously derive an approximation for the mean EA across an ensemble of random graph states. Using this approximation, we demonstrate that the graph transformation test is useful in deciding whether it is possible to extract a fixed state of maximal $n$-tangle, such as a GHZ state, when fewer than half the qubits are measured out and provide an interpretation for the case that more than half are measured. 

As a special case, we restrict our analysis to linear cluster states, an important subclass of graph states for applications. Building off of prior work that characterizes the ways in which it is possible to extract a GHZ state from a linear cluster state with local Clifford unitaries and local Pauli measurements~\cite{deJong2024}, we use the aforementioned graph state transformation test to study when general local unitaries and global measurements provide an advantage in extracting GHZ states. For this problem, we are particularly concerned with the idea of a  \textit{measurement configuration}, which is simply a selection of a qubit subsystems to be measured. We show that in almost all measurement configurations on a linear cluster state in which it is impossible to extract a GHZ state using local Pauli measurements and local Clifford unitaries, it is simultaneously impossible to do so using global measurements and general local unitaries in the limit of large system size.

Finally, to study the generic behavior of the LE and EA, we derive concentration inequalities for Haar-random states. We show that when less than half the qubits of a many-qubit system are measured, the values of the LE and EA cluster near their minimal value of zero. On the contrary, when more than half the qubits of a large system are measured, the typical values of the LE lie near zero, while the EA clusters near its maximal value of one. Thus, these concentration inequalities reveal a separation between the amount of entanglement that local and global measurements may localize in the limit of large system size and also a sharp dependence on the relative size of the measured subsystem. Furthermore, we discuss how these concentration inequalities bound our ability to extract states of maximal $n$-tangle, such as GHZ states, from typical states. 

This paper is organized as follows. We begin with some essential facts about the LE and EA defined in terms of the $n$-tangle in Section~\ref{sec:localizing-entanglement} before discussing graph states and the test for graph state transformations described above in Section~\ref{sec:graph-state-background}. Section~\ref{sec:graph-overview} establishes a connection between the mean value of the EA and the probability the test is conclusive. We compute this probability and provide supporting numerics in Sections~\ref{sec:asymp-sol-prob} and~\ref{sec:numerical-pr-sol}, respectively. Section~\ref{sec:concentration} provides rigorous concentration inequalities for the LE and EA over Haar-random states. Finally, further implications and open problems are discussed in Section~\ref{sec:future-directions}.

\section{Background}\label{sec:background}

\begin{figure}[hbtp] \centering
    \begin{tabular}{|l|l|}
        \hline 
        Variable & Definition \\
        \hline
        $\mathcal{H}$ & Hilbert space \\
        $| \Psi \rangle$ & multi-qubit state \\
        $d_A$ & dimension of $\mathcal{H}_A$\\
        $N_A$ & number of qubits in $\mathcal{H}_A$\\
        $\tau (| \Psi \rangle)$ & the $n$-tangle of $| \Psi \rangle$ \\
        $L^{\tau}_{\rm local}(| \Psi \rangle)$ & LE of $| \Psi \rangle$  w.r.t. the $n$-tangle\\
        $L^{\tau}_{\rm global} ( | \Psi \rangle )$ & EA of $| \Psi \rangle$ w.r.t. the $n$-tangle \\
        $\mathcal{C}(\mathcal{H})$ & collection of all ordered $\perp_N$ bases of $\mathcal{H}$ \\ 
        $\mathcal S(\mathcal H)$ & set of all normalized states in $\mathcal H$ \\
        $\mu_H$ & the Haar measure \\
        \hline
    \end{tabular}
    \caption{\centering Table of notation used throughout the text.}
    \label{table:example_table_figure} 
\end{figure}

\subsection{Localizing entanglement}\label{sec:localizing-entanglement}

Since we are interested in understanding protocols that localize entanglement in large systems, we want a way of quantifying entanglement across multiple subsystems, rather than across a particular bipartition, as one might with the entanglement entropy or generalized bipartite concurrence~\cite{Rungta2001, Mintert2005}. From an entanglement-theoretic viewpoint, a function of a quantum state that suitably quantifies some essential features of entanglement must \textit{(i)} vanish on product states and \textit{(ii)} remain non-increasing on average under local operations and classical communication (LOCC). Such functions are known as \textit{entanglement measures}. 

In this paper, we will use the entanglement measure known as the $n$-\textit{tangle} to quantify entanglement in multi-qubit states. For an $n$-qubit pure state $|\psi\rangle$, we will define the \textit{spin-flipped state} $|\psitilde\rangle$ as
\begin{equation}
    |\psitilde\rangle \coloneqq \sigma_y^{\otimes n}|\psi^\star\rangle,
\end{equation}
where $\sigma_y$ is the Pauli $y$-matrix and $|\psi^\star\rangle$ denotes the complex conjugate of $|\psi\rangle$ with respect to the computational basis. We will define the $n$-tangle $\tau(|\psi\rangle)$ of an $n$-qubit pure state $|\psi\rangle$ as
\begin{equation}~\label{eq:n-tangle-def}
    \tau(|\psi\rangle) \coloneqq |\langle \psi|\psitilde\rangle|.
\end{equation}
While the $n$-tangle is typically defined in the literature as the square of the quantity appearing in~\eqref{eq:n-tangle-def}, the above function is still a legitimate entanglement measure, as discussed in~\cite{Vairogs2024}. We choose to use the definition in~\eqref{eq:n-tangle-def} because it generalizes the well-known \textit{concurrence}~\cite{Wootters1997, Wooters1998EOF} in the sense that the above-defined $n$-tangle is equal to the concurrence when $n=2$, and also because the results of~\cite{Vairogs2024}, which are crucial for our analysis, use this convention. It is also known that the $n$-tangle is generically zero on states over an odd number of qubits, so we will only evaluate the $n$-tangle on even-qubit systems.

Note that as an entanglement measure, the $n$-tangle has a multipartite nature since its definition does not stipulate a bipartite cut of the $n$-qubit system. While there are a plethora of known multipartite entanglement measures, many of which are tuned to detect certain properties in quantum states, the $n$-tangle has several unique advantages. First, it is convenient that it has a particularly compact definition and is simple to compute, requiring no solutions to optimization problems. As mentioned before, the $n$-tangle may also be seen as a multipartite generalization of the two-qubit concurrence. Furthermore, the $n$-tangle plays an important role in the framework of stochastic local operations and classical communication (SLOCC). If a state $|\psi\rangle \in (\mathbb{C}^2)^{\otimes n}$ can be interconverted with nonzero probability by LOCC with state $|\varphi\rangle \in (\mathbb{C}^2)^{\otimes n}$, then $|\psi\rangle$ is said to be equivalent under stochastic LOCC (SLOCC) to $|\varphi\rangle$. It is known that two states are SLOCC-equivalent if and only if one may be obtained from the other by applying invertible local linear operators~\cite{Dur2000}. Therefore, local $\mathrm{SL}(2, \mathbb{C})$ (determinant-one) operators describe SLOCC-equivalent states up to scalar multiples. Significantly, the $n$-tangle is invariant under such $\mathrm{SL}(2, \mathbb{C})$ operations~\cite{Jaeger2003Invariance} and, hence, may be useful in characterizing SLOCC-inequivalence. For instance, the $n$-tangle may distinguish between the SLOCC classes of the generalized GHZ and W states~\cite{Dur2000} in $n$-qubit systems due to the fact that $\tau(|\GHZ\rangle) = 1$ while $\tau(|\mathrm{W}\rangle) = 0$.

Having established the $n$-tangle as our quantifier of multipartite entanglement, we proceed with our goal of studying how entanglement may be localized in large systems.   
Consider an $N$-partite state $|\Psi\rangle \in \mathcal{H}_1 \otimes ... \otimes \mathcal{H}_N$, where $\mathcal{H}_1,...,\mathcal{H}_N$ are the Hilbert spaces of of the $N$ subsystems.  Suppose that a measurement is performed over some subsystem $A \subset [N] := \{1,...,N\}$. Let $B \subset [N]$ denote the complement of $A$. Let $\mathcal{H}_A$ and $\mathcal{H}_B$ denote the Hilbert spaces of $A$ and $B$, respectively. Define $N_A \coloneqq |A|$, $N_B \coloneqq |B|$, $N \coloneqq N_A + N_B$. Throughout this work, \textit{we demand that $N_B$ is even}, so that the $n$-tangle assumes non-vanishing values on states over the $B$ subsystem. Let also $\Pi_A$ be a rank-one projective measurement on subsystem $A$ whose outcomes lead to post-measurement states $|\psi_i\rangle_B$ on $B$ with probability $p_i$. The authors of~\cite{Vairogs2024} proposed to use the quantity 
\begin{equation}\label{eq:mea-def}
    \ltg(|\Psi\rangle_{AB}) \coloneqq \max_{\Pi_A} \sum_i p_i \tau(|\psi_i\rangle_B),
\end{equation}
where the maximization is over all rank-one projective measurements on subsystem $A$, as a benchmark for our ability to \textit{localize} entanglement. We will refer to $L^\tau(|\Psi\rangle_{AB})$ as the \textit{entanglement of assistance} (EA). Essentially, the EA is the maximal average amount of entanglement, as measured by the $n$-tangle, that can be localized on $B$ by performing projective measurements on $A$. Crucially, this definition allows for the maximal value in~\eqref{eq:mea-def} to be achieved by \textit{global measurements}, whose operators may be entangled across the $N_A$ subsystems of $A$. However, it can be forbiddingly difficult to implement such collective measurements experimentally. Consequently, it was also proposed in~\cite{Vairogs2024} to restrict the optimization of~\eqref{eq:mea-def} to \textit{local measurements}, whose operators factorize as a tensor product across the $N_A$ systems in $A$ to obtain an alternative practically-motivated benchmark:
\begin{equation}\label{eq:lme-def}
    \ltl(|\Psi\rangle_{AB}) \coloneqq \max_{\Pi_A} \sum_i p_i \tau(|\psi_i\rangle_B).
\end{equation}
We will refer to $\ltl(|\Psi\rangle_{AB})$ as the \textit{localizable entanglement} (LE) of $|\Psi\rangle_{AB}$. We note that these definitions of the localizable entanglement and entanglement of assistance differ from those in Refs.~\cite{Verstraete-LE,Divincenzo1998} in two important ways. First, our $B$ system may be composed of more than just two qubits -- an important extension for many promising applications -- and the entanglement measure over the $B$ subsystem is taken to be the $n$-tangle, as opposed to the concurrence~\cite{Wooters1998EOF,Rungta2001}. We note that there is not a unique notion of localizable entanglement and recent works have considered different entanglement measures~\cite{Vairogs2024} and the ability to localize other functionals of quantum states~\cite{du2025Certifying}.

A key observation of~\cite{Vairogs2024} allows us to compute the EA in a simple way using the quantum fidelity function. Recall that for density matrices $\rho, \sigma$ over the same Hilbert space, the (square-root) fidelity is defined as $F(\rho, \sigma) \coloneqq \Tr\left[\sqrt{\sqrt{\rho}\sigma \sqrt{\rho}}\right]$. Theorem 1 of~\cite{Vairogs2024} shows that for arbitrary multi-qubit state $|\Psi\rangle_{AB} \in \mathcal{H}_A \otimes \mathcal{H}_B$, we have
\begin{equation}\label{eq:ltg-is-F}
    \ltg(|\Psi\rangle_{AB}) = F(\Psi_B, \widetilde \Psi_B),
\end{equation}
where $\Psi_B \coloneqq \tr_A[\Psi]$ and for any density matrix $\rho$ over $\mathcal{H}_B$, we have $\tilde{\rho} \coloneqq \sigma_y^{\otimes N_B} \rho^\star \sigma_y^{\otimes N_B}$ with $\rho^\star$ denoting the complex conjugate of $\Psi_B$ with respect to the computational basis. 
We note that~\eqref{eq:ltg-is-F} requires nothing more than knowledge of the reduced state $\Psi_B$ to compute $\ltg(|\Psi\rangle_{AB})$, while the definition of $\ltg(|\Psi\rangle_{AB})$ involves a complicated optimization over all projective measurements. One of our main contributions is developing methods of proving concentration of $\ltg(|\Psi\rangle_{AB})$, despite not having a closed form for the expression. While localizable entanglement is well-defined for all quantum states, we will focus many of our results on graph states -- a class of states with many exciting potential applications. We now review the properties of graph states needed to understand our main results.

\subsection{Graph states}\label{sec:graph-state-background}

Graph states are a family of quantum states that have served as an important example for many quantum information-theoretic concepts while also playing a crucial role as a resource in measurement-based quantum computing~\cite{raussendorf2001oneway, raussendorf2003measurement}. A graph state corresponds to a simple graph where the vertices represent qubits and the edges represent entangling operations applied between the corresponding qubits. Thus, the graph that defines a graph state encodes its structure of entanglement. 

More formally, let $G = (V, E)$ be a graph with a vertex set $V$ representing the qubits, and an edge set $E \subseteq \{\{a, b\}: a, b \in V~\mathrm{and}~a\neq b\}$. Then the graph state $|G\rangle$ is defined as 
\begin{equation}
    |G\rangle := \displaystyle \prod_{\{a, b\} \in E} CZ_{ab} \ket{+}^{\otimes N}
\end{equation}
where $CZ_{ab}$ is the controlled-Z gate acting between qubits $a$ and $b$. We will sometimes refer to $G$ as the \textit{underlying graph} of $|G\rangle$. 

A simple rule for determining the EA values of graph states is provided in~\cite{Vairogs2024}. Consider the following construction for abitrary graph state $|G\rangle \in \mathcal{H}_A \otimes \mathcal{H}_B$ whose underlying graph has edge set $E$. Let $\Bgammaba$ denote the $N_B \times N_A$ block of the adjacency matrix of $G$ that describes the connectivity between the vertices corresponding to qubits in $A$ and $B$. So, for any qubits $a \in A$ and $b \in B$, we have $(\Bgammaba)_{ba} = 0$ if $\{a,b\} \notin E$ and $(\Bgammaba)_{ba} = 1$ if $\{a, b\} \in E$. Define $\D \in \mathbb{F}_2^B$ to be the binary vector whose $b$-th component is $1$ if $b \in B$ has an even degree in the subgraph $G - A$ induced by $B$ and $0$ otherwise. Lemma 10 of~\cite{Vairogs2024} shows that for arbitary graph state $|G\rangle \in \mathcal{H}_A \otimes \mathcal{H}_B$, we have $F(G_B, \tilde{G}_B) = 1$ if the matrix equation $\Bgammaba  \mathbf{x} = \D$ has a solution in $\mathbb{F}_2^A$ and $F(G_B, \tilde{G}_B) = 0$ otherwise. The relation~\eqref{eq:ltg-is-F} then gives us the following rule, which is obtained in Ref.~\cite{Vairogs2024}:
\begin{theorem}[Theorem 11, \cite{Vairogs2024}]\label{thm:ltg-in-terms-of-matrices}
    Assume $N_B$ is even. Let $|G\rangle \in \mathcal{H}_A \otimes \mathcal{H}_B$ be an arbitrary graph state. If the matrix equation $\Bgammaba \mathbf{x} = \D$ has a solution $\mathbf{x} \in \mathbb{F}_2^A$, then $\ltg(|G\rangle) = 1$. If no such solution exists, then $\ltg(|G\rangle) = 0$.
\end{theorem}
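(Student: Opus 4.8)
The plan is to derive Theorem~\ref{thm:ltg-in-terms-of-matrices} as the composition of two facts already in hand. By Eq.~\eqref{eq:ltg-is-F} (Theorem~1 of Ref.~\cite{Vairogs2024}), for any multi-qubit state one has $\ltg(|G\rangle)=F(G_B,\widetilde G_B)$ with $G_B=\tr_A[|G\rangle\!\langle G|]$ and $\widetilde G_B=\sigma_y^{\otimes N_B}G_B^\star\sigma_y^{\otimes N_B}$, so it is enough to evaluate this single fidelity when $|G\rangle$ is a graph state. That evaluation is precisely Lemma~10 of Ref.~\cite{Vairogs2024}, which gives $F(G_B,\widetilde G_B)=1$ when $\Bgammaba\mathbf{x}=\D$ has a solution over $\mathbb{F}_2$ and $F(G_B,\widetilde G_B)=0$ otherwise. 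Substituting this dichotomy into Eq.~\eqref{eq:ltg-is-F} proves the theorem verbatim; at the level of Theorem~\ref{thm:ltg-in-terms-of-matrices} itself there is nothing further to do.

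For completeness I would also reproduce the argument behind Lemma~10, since that is where the content lives and the later concentration results rest on the same structure. First, the reduced state is a (generally mixed) stabilizer state: $G_B=2^{-N_B}\sum_{s\in S_B}s$, where $S_B$ is the subgroup of stabilizer elements of $|G\rangle$ that act as the identity on $A$. Parametrizing the stabilizer group of $|G\rangle$ by $\mathbf{y}\in\mathbb{F}_2^V$ through the canonical generators $K_v=X_v\prod_{w\sim v}Z_w$, the condition ``identity on $A$'' forces $\mathbf{y}$ to be supported on $B$ and to satisfy $\Bgammaba\trp\mathbf{y}_B=0$; the restriction of the corresponding element to $B$ is then the Pauli string with $X$-support $\mathbf{y}_B$ and $Z$-support $\Gamma_{G-A}\mathbf{y}_B$. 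Second, complex conjugation followed by conjugation by $\sigma_y^{\otimes N_B}$ multiplies each Pauli string $s$ by $(-1)^{w(s)}$, where $w(s)$ is the number of its non-identity tensor factors, and the ``phase-free product'' property of a stabilizer group makes $s\mapsto w(s)\bmod 2$ an $\mathbb{F}_2$-linear functional on $S_B$; hence $\widetilde G_B$ is the stabilizer state with the same Pauli labels as $G_B$ but sign character $\epsilon(s)=(-1)^{w(s)}$. Third, two stabilizer states with a common label group have fidelity $1$ if their sign characters agree and $0$ otherwise, because distinct syndromes give mutually orthogonal code spaces; so $F(G_B,\widetilde G_B)=1$ iff $\epsilon$ is trivial on $S_B$. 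Finally, a short computation --- using that the induced subgraph $G-A$ contributes a vanishing mod-$2$ quadratic form --- shows $w(K_{\mathbf{y}_B}|_B)\equiv\langle\mathbf{y}_B,\D\rangle\pmod 2$, so $\epsilon$ is trivial iff $\D\perp\ker\Bgammaba\trp$, i.e.\ iff $\D$ lies in the column span of $\Bgammaba$, i.e.\ iff $\Bgammaba\mathbf{x}=\D$ is solvable.

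The only genuinely delicate step --- and therefore the main obstacle --- is this last piece of phase bookkeeping: checking that $s\mapsto w(s)\bmod 2$ is a well-defined $\mathbb{F}_2$-linear functional on $S_B$ and that it equals $\langle\,\cdot\,,\D\rangle$ once the $X$- and $Z$-supports of elements of $S_B$ are expressed through the adjacency matrix of $G-A$. Everything else --- the stabilizer description of $\tr_A[|G\rangle\!\langle G|]$, the $0$--$1$ fidelity dichotomy for stabilizer states sharing a label group, and the $\mathbb{F}_2$-duality $(\ker\Bgammaba\trp)^\perp=\operatorname{colspan}\Bgammaba$ --- is routine. I therefore expect the final write-up to be short: invoke Eq.~\eqref{eq:ltg-is-F}, invoke Lemma~10, and, if a self-contained treatment is wanted, relegate the phase computation to an appendix.
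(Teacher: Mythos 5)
Your first paragraph is exactly the paper's derivation: it obtains Theorem~\ref{thm:ltg-in-terms-of-matrices} by combining Eq.~\eqref{eq:ltg-is-F} (Theorem~1 of Ref.~\cite{Vairogs2024}) with the fidelity dichotomy of Lemma~10 of Ref.~\cite{Vairogs2024}, and nothing more is done in the text. The additional stabilizer-group reconstruction of Lemma~10 goes beyond what the paper includes (which simply cites the lemma), but it is consistent with the cited result and does not change the argument.
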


In other words, if $\Bgammaba \mathbf{x} = \mathbf{D}$ has no solution, then it is impossible to transform $|G\rangle$ into any state with nonzero $n$-tangle using any measurements on the $A$ subsystem. On the other hand, if a solution to $\Bgammaba \mathbf{x} = \D$ exists, then some (possibly global) projective measurement over $A$ will produce a state with an $n$-tangle value of one for every measurement outcome. 

On a related note, recent literature has addressed the problem of characterizing ways of transforming graph states into GHZ states. Building on previous work~\cite{Hahn2019, Mannalath2022}, the authors of~\cite{deJong2024} complete a characterization of all the strategies with which it is possible to extract a GHZ state from a linear cluster state using just Pauli measurements and local Clifford operations. In a similar vein,~\cite{Frantzeskakis2023} proposes a method to extract with nonzero error a GHZ state from a linear cluster state prepared with a particular coherent error using local unitaries and local projective measurements. Since the satisfiability of the matrix equation $\Bgammaba \mathbf{x} = \mathbf{D}$ yields insight into allowed graph transformations, analyzing the behavior of the LE and EA on graph states via this matrix equation is useful for understanding when it is possible to extract GHZ states and other states of maximal $n$-tangle from more general graph states.

\section{Main Results}
We are now ready to discuss our main results. The first set of results concerns concentration phenomena in graph states. 
\subsection{Concentration Phenomena in Graph States}

\subsubsection{Overview}\label{sec:graph-overview}

Suppose we wish to produce a \textit{target} graph state $|G'\rangle_B$ with a \textit{maximal} $n$-tangle value of one, such as the GHZ state, over the subsystem $B$ via measurements on the $A$ subsystem and local unitaries on $B$. In this scenario, we allow for the possibility of classical communication. This implies that the choice of local unitaries may depend on the outcome of the measurement.

An initial obstacle is the very large number of graph states over $AB$ from which we might consider extracting $|G'\rangle_B$. It is challenging to check numerically or otherwise whether any of the infinitely many possible measurements and local unitaries will yield $|G'\rangle_B$ from any candidate state when total system size is large. In general, it is an NP-complete problem to decide whether it is possible to deterministically transform one graph state into another using even just local Clifford gates,  local Pauli measurements, and classical communication~\cite{Dahlberg2020, Dahlberg2020-2}.

To remedy these problems, note that we cannot obtain $|G'\rangle_B$ from any \textit{source} graph state $|G\rangle_{AB}$ for which $L^\tau(|G\rangle_{AB}) = 0$ using even global measurements on $A$ and general local unitaries on $B$. Moreover, such a transformation is forbidden even probabilistically. Thus, Theorem~\ref{thm:ltg-in-terms-of-matrices} implies that we can eliminate any graph state $|G\rangle_{AB}$ with an unsatisfiable matrix equation $\Bgammaba \mathbf{x} = \D$ as a candidate for a $|G'\rangle_B$-extraction protocol. However, Theorem~\ref{thm:ltg-in-terms-of-matrices} does \emph{not} imply that a satisfiable matrix equation for $|G\rangle_{AB}$ means that we may extract $|G'\rangle_B$ from $|G\rangle_{AB}$. In this way, we have a test for the viability of extracting $|G'\rangle_B$ from $|G\rangle_{AB}$ using one round of (possibly global) measurements over $A$ followed by local unitaries on $B$, but with the caveat that the test has an inconclusive outcome. We will refer to this test as the \textit{matrix equation test}: 
\begin{figure}[htb] \centering
    \begin{tabular}{cc}
        $\Bgammaba\mathbf{x} = \D$\textit{ is unsatisfiable} &  \ \ $\Bgammaba \mathbf{x} = \D$ \textit{is satisfiable}\\ $\Downarrow$
        & \ \ $\Downarrow$ \\
        impossible to extract $|G'\rangle_B$ & \ \    inconclusive\\
        with nonzero probability & 
    \end{tabular}
\end{figure}

This test is convenient because checking the satisfiability of a linear binary matrix equation requires only~\emph{polynomial time}, which circumvents the NP-completeness of conclusively deciding whether a source graph state may be transformed into a target graph state. However, the matrix equation test will not be useful if it is inconclusive for most graph states over $AB$. Thus, in order for it to be effective at cutting down the candidate source states from which we might extract a target state, it must be unsatisfiable for a \textit{majority} of graph states over $AB$. 

Given this observation, a natural question arises:~\emph{What is the probability that a random graph state over $AB$ yields a satisfiable matrix equation $\Bgammaba \mathbf{x} = \D$ for large system sizes?} Note that this query may have a strong dependence on the particular distribution used to weight the graph states. To account for this, we will entertain several natural choices for this distribution. For a fixed selection of subsystems $A$, $B$ and for an ensemble $\mathcal{E} = \{(p_G, |G\rangle_{AB})\}$ of graph states over $AB$, we let $p_s(\mathcal{E})$ denote the probability that a graph state drawn from $\mathcal{E}$ yields a matrix equation with a solution. 

Before discussing particular ensembles, we emphasize that the question above offers a natural framework for studying our ability to localize entanglement in graph states. Indeed, by considering the dichotomous behavior of $\ltg(|G\rangle_{AB})$ implied by Theorem~\ref{thm:ltg-in-terms-of-matrices}, we see that for a given ensemble $\mathcal{E}$ of graph states over $AB$, 
\begin{equation}\label{eq:prob-ltg-equiv}
    p_s(\mathcal{E}) = \mathbb{E}_{|G\rangle \sim \mathcal{E}}[\ltg(|G\rangle_{AB})\rangle].
\end{equation}In other words, by determining the probability that $\Bgammaba \mathbf{x} = \D$ has a solution, we manage to say something about the typical value of $\ltg(|G\rangle_{AB})$ on graph states and about the utility of the matrix equation test at the same time.

In what follows, we compute $p_s(\mathcal{E})$ for various ensembles $\mathcal{E}$ of graph states both analytically and numerically. Of particular interest is the uniformly weighted ensemble of all graph states. Another important special case is an ensemble over all graph states designed to weight topologically equivalent graphs equally.

\subsubsection{Uniformly random ensemble}\label{sec:asymp-sol-prob}

The first significant result of our work on graph states is a rigorously-derived approximation for the probability of a solution over the uniformly weighted graph state distribution. Let us fix a particular bipartition $A|B$ of the $N$ qubits and let $\Eunif$ denote the ensemble of all graph states over $AB$ with uniform distribution. That is, each graph state from $\Eunif$ occurs with a probability of $2^{-N(N-1)/2}$ 
since the total number of $N$-vertex graphs is $2^{N(N-1)/2}$. Since each graph is weighted equally, $p_s(\Eunif)$ gives us a good overall picture of the effectiveness of the matrix equation test. 

\begin{theorem}\label{thm:solution-prob}
    Assume that $N_B$ is even. Let $r>0$ be arbitrary. We may approximate $p_s(\Eunif)$ as 
    \begin{align}\label{eq:pr-sol-bounds}
        \frac{d_A + 1}{(1+r)(d_A + d_B - 1)} - \varepsilon_1&\leq p_s(\Eunif) \\ \nonumber
        &\leq \frac{d_A + 1}{(1-r)(d_A + d_B - 1)} + \varepsilon_2,
    \end{align}
    where the error terms $\varepsilon_1$ and $\varepsilon_2$ are defined in terms of $d_A, d_B$, and $r$ by 
    \begin{align}
        \varepsilon_1 &\coloneqq \frac{d_Ad_B(d_A - 1)(d_B - 1)}{r^2(1+r)(d_A + d_B - 1)^3} \\
        \varepsilon_2 &\coloneqq \frac{(d_A - 1)(d_B - 1)}{r^2 (d_A + d_B - 1)^3}.  
    \end{align}
    In particular, for fixed $d_B$ and $r$, the errors scale as $\varepsilon_1 = O(1/d_A)$ and $\varepsilon_2 = O(1/d_A^2)$. Similarly, we have $\varepsilon_1 = O(1/d_B)$ and $\varepsilon_2 = O(1/d_B^2)$ for fixed $d_A$ and $r$.
\end{theorem}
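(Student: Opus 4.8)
\emph{Proof strategy.} The plan is to recast the statement as a question about a uniformly random matrix over $\mathbb{F}_2$ and then establish the bounds by a first- and second-moment analysis combined with a Chebyshev truncation. For a uniformly random $N$-vertex graph with the fixed bipartition $A\mid B$, the within-$A$, within-$B$, and cross blocks of the adjacency matrix are governed by disjoint families of independent fair bits and are therefore mutually independent; the within-$A$ block is irrelevant to $\ltg$. Hence $\Bgammaba$ is a uniformly random $N_B\times N_A$ matrix over $\mathbb{F}_2$. Writing $\mathbf 1$ for the all-ones vector, one has $\D=\mathbf 1+\Gamma_{BB}\mathbf 1$ over $\mathbb{F}_2$; since the degree-parity vector $\Gamma_{BB}\mathbf 1$ of a uniformly random graph on $N_B$ vertices is uniform on the even-weight subspace $W\subset\mathbb{F}_2^{N_B}$ (every even-weight vector is the degree-parity sequence of equally many graphs), and $\mathbf 1\in W$ because $N_B$ is even, the vector $\D$ is uniform on $W$ and independent of $\Bgammaba$. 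By Theorem~\ref{thm:ltg-in-terms-of-matrices} together with \eqref{eq:prob-ltg-equiv}, this reduces the claim to estimating $p_s(\Eunif)=\Pr[\D\in\Im\Bgammaba]$.

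\emph{Moments.} Write $M:=\Bgammaba$, let $X$ be the number of solutions $\mathbf x\in\mathbb{F}_2^{N_A}$ of $M\mathbf x=\D$, and let $K:=|\ker M|$. A nonempty solution set is a coset of $\ker M$, so $X=K$ when the system is solvable and $X=0$ otherwise; equivalently $X/K=\mathbf 1\{\text{solvable}\}$, which yields the identity $p_s=\mathbb E[X/K]$ as well as the pointwise bounds $K\ge 1$ and $0\le X\le\min(K,d_A)$. Splitting off the term $\mathbf x=0$ (which solves the system exactly when $\D=0$, of probability $2/d_B$ as $\D$ is uniform on $W$) and using that $M\mathbf x$ is uniform on $\mathbb{F}_2^{N_B}$ for $\mathbf x\ne 0$, a short first-moment computation gives $\mathbb E[X]=(d_A+1)/d_B$ and $\mathbb E[K]=(d_A+d_B-1)/d_B$; observe that the target main term $(d_A+1)/(d_A+d_B-1)$ is exactly $\mathbb E[X]/\mathbb E[K]$. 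A pair-counting second-moment computation — the only non-routine case being a linearly independent pair $\mathbf x,\mathbf x'$, for which $(M\mathbf x,M\mathbf x')$ is uniform on $(\mathbb{F}_2^{N_B})^2$ — gives $\Var(K)=(d_A-1)(d_B-1)/d_B^2$. Put $q:=\Var(K)/(r^2\mathbb E[K]^2)=(d_A-1)(d_B-1)/(r^2(d_A+d_B-1)^2)$, so that Chebyshev gives $\Pr[\,|K-\mathbb E[K]|>r\mathbb E[K]\,]\le q$.

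\emph{Lower bound.} On the event $\{K\le(1+r)\mathbb E[K]\}$ one has $X/K\ge X/((1+r)\mathbb E[K])$, and $X/K\ge 0$ always; taking expectations, $p_s\ge(\mathbb E[X]-\mathbb E[X\,\mathbf 1\{K>(1+r)\mathbb E[K]\}])/((1+r)\mathbb E[K])$. Bounding $X\le d_A$ on the residual term and invoking Chebyshev gives $\mathbb E[X\,\mathbf 1\{K>(1+r)\mathbb E[K]\}]\le d_A q$, hence $p_s\ge\mathbb E[X]/((1+r)\mathbb E[K])-d_A q/((1+r)\mathbb E[K])$; substituting the computed quantities reproduces the claimed lower bound with $\varepsilon_1=d_A q/((1+r)\mathbb E[K])=d_A d_B(d_A-1)(d_B-1)/(r^2(1+r)(d_A+d_B-1)^3)$.

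\emph{Upper bound and the main obstacle.} The upper bound follows the same template: decompose $p_s=\mathbb E[\tfrac{X}{K}\mathbf 1\{K\ge(1-r)\mathbb E[K]\}]+\mathbb E[\tfrac{X}{K}\mathbf 1\{K<(1-r)\mathbb E[K]\}]$; on the first event $X/K\le X/((1-r)\mathbb E[K])$, contributing at most $\mathbb E[X]/((1-r)\mathbb E[K])$. The delicate step — the one I expect to be the real obstacle — is bounding the contribution of the atypical event $\{K<(1-r)\mathbb E[K]\}$ by $\varepsilon_2$, which is strictly smaller than the crude Chebyshev estimate $q$: the bound $\mathbf 1\{\text{solvable}\}\le 1$ there is far too lossy, and one must instead control the \emph{joint} behaviour of solvability and the lower deviations of $K$ — exploiting that on this event the number of solutions is itself at most $(1-r)\mathbb E[K]$, and that (via the identity $X^2=XK$) the second moment couples the atypical $K$-values to solvability, so that a refined, second-moment-flavoured tail estimate replaces the bare probability bound. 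Collecting the two contributions and substituting the moments gives the stated upper bound, and the scalings $\varepsilon_1=O(1/d_A)$, $\varepsilon_2=O(1/d_A^2)$ for fixed $d_B$ (and the symmetric statements for fixed $d_A$) are read off directly from the closed forms.
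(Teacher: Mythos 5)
Your reduction to linear algebra over $\mathbb{F}_2$ is correct and is a genuinely different, more elementary route than the paper's. The paper never works with the random matrix directly: it first shows $\ltg(|G\rangle_{AB}) = \tr[G_B\tilde{G}_B]/\tr[G_B^2]$ and then computes $\mathbb{E}[\tr[G_B\tilde{G}_B]]$ by a replica-style counting argument over random $CZ$ gates (Lemmas~\ref{lemma:i-j-diff-set}, \ref{lemma:i-j-same-set}, \ref{lem:approximation}), importing $\mathbb{E}[\tr G_B^2]$ and $\var[\tr G_B^2]$ from Ref.~\cite{zhou2022}. Your $X$ and $K$ are, up to the normalization $X = d_A\tr[G_B\tilde{G}_B]$ and $K = d_A\tr[G_B^2]$, the same random variables, and your moment values agree with the paper's; your observation that $\D$ is uniform on the even-weight subspace and independent of $\Bgammaba$ correctly replaces those counting lemmas with a few lines. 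The Chebyshev truncation on the denominator and the resulting lower bound, including the exact form of $\varepsilon_1$, are carried out correctly and are structurally identical to the paper's.

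The gap is the upper bound. You correctly note that the crude estimate --- bounding $X/K\le 1$ on the atypical event and applying Chebyshev --- yields an error term $q = (d_A-1)(d_B-1)/\bigl(r^2(d_A+d_B-1)^2\bigr)$ rather than the stated $\varepsilon_2$, which carries an extra factor of $(d_A+d_B-1)$ in the denominator. But the ``refined, second-moment-flavoured tail estimate'' you invoke to recover that factor is never actually constructed: observing that $X \le (1-r)\mathbb{E}[K]$ on the bad event, or that $X^2 = XK$, does not by itself produce it, and as written the upper-bound half of your argument is incomplete. You should also be aware that the paper's own proof does not deliver the cube either: its final display bounds the atypical contribution by exactly $\Pr[G\notin\mathcal{E}]\le (d_A-1)(d_B-1)/\bigl(r^2(d_A+d_B-1)^2\bigr)$, i.e., the square. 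The stated $\varepsilon_2$ --- and with it the claimed $O(1/d_A^2)$ scaling, which requires the cube --- is therefore not supported by the paper's proof, and your attempt to reach it by a cleverer tail bound appears to be chasing a typo in the theorem statement. The clean way to finish is to apply the same crude bound you used for the lower bound and report $\varepsilon_2 = (d_A-1)(d_B-1)/\bigl(r^2(d_A+d_B-1)^2\bigr)$, which is what is actually proved.
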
   
The complete proof may be found in Appendix~\ref{app:gs-concentration}. The basic idea of the proof is as follows.  By equation~\eqref{eq:prob-ltg-equiv}, the probability $p_s(\Eunif)$ is simply the expectation value of $L^\tau(|G\rangle_{AB})$ over $\Eunif$. By considering the behavior of the fidelity function on the marginals of graph states, we argue that equation~\eqref{eq:ltg-is-F} implies that $L^\tau(|G\rangle_{AB}) = \tr[G_B \tilde{G}_B]/\tr[G_B^2]$, where $G_B \coloneqq \Tr_A[|G\rangle\langle G|_{AB}]$. Thus, to compute the probability of a solution, we may compute the expectation value of this quotient. Using Chebyshev's inequality, we show that we can approximate the expectation value of the quotient $\tr[G_B \tilde{G}_B]/\tr[G_B^2]$ as the quotient of the expectation values of $\tr[G_B\tilde{G}_B]$ and $\tr[G_B^2]$ with an error term controlled by the variance of the purity $\tr[G_B^2]$. To compute these expectation values, we adapt a meticulous counting argument from~\cite{zhou2022} employed to evaluate expectation values of linear functionals over random graph states. Finally, we directly re-purpose a computation from~\cite{zhou2022} to write down the variance of the purity $\tr[G_B^2]$ and, hence, the error terms.

Theorem~\ref{thm:solution-prob} suggests that 
\begin{equation}\label{eq:pr-sol-approx}
    p_s(\Eunif) \approx \frac{d_A + 1}{d_A + d_B - 1},
\end{equation}
with the arbitrary constant $r$ controlling the accuracy of this approximation. To be more precise, the first terms of the LHS and RHS of~\eqref{eq:pr-sol-bounds} approximate $(d_A + 1)/(d_A+d_B - 1)$ for small $r$. However, the error terms $\varepsilon_1$ and $\varepsilon_2$ grow as $r$ tends toward zero, yet the increase in error may be counteracted by choosing the dimension of one subsystem to be much larger than the other.

We numerically observe in Fig.~\ref{fig:Random graphs approximation} that the approximation~\eqref{eq:pr-sol-approx} mimics the exact behavior of $\Pr(\mathrm{solution})$, with an especially good agreement for more modestly sized $N$. However, one may also check numerically that the theoretical error bounds from Theorem~\ref{thm:solution-prob} may be fairly loose for finite $N$. Thus, to cement the utility of the bounds, we use them to derive the asymptotic behavior of $p_s(\Eunif)$ when $N_A = \alpha N$ for some fixed fraction $\alpha \in (0, 1)$ as $N$ tends to infinity:
\begin{corollary}\label{cor:tensor-network}
    Assume $N_B$ is even. Fix $\alpha \in (0, 1)$. Then the following hold. 
    \begin{enumerate}
        \item If $\alpha < 1/2$, then $p_s(\Eunif) \to 0$ as $N \to \infty$ with $N_A = \lfloor \alpha N\rfloor$.
        \item If $\alpha > 1/2$, then $p_s(\Eunif) \to 1$ as $N \to \infty$ with $N_A = \lceil \alpha N \rceil$. 
    \end{enumerate} 
\end{corollary}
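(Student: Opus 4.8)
The plan is to reduce both cases to elementary facts about the rank of a uniformly random matrix over $\mathbb{F}_2$, working directly from the characterization of Theorem~\ref{thm:ltg-in-terms-of-matrices} rather than from the quantitative estimates of Theorem~\ref{thm:solution-prob}. By definition of $p_s$, and by the $\{0,1\}$-dichotomy of $\ltg$ on graph states together with~\eqref{eq:prob-ltg-equiv}, $p_s(\Eunif)$ is the probability that the $\mathbb{F}_2$-system $\Bgammaba\mathbf{x}=\mathbf{D}$ is solvable when $G$ is a uniformly random graph on the vertex set $A\sqcup B$. Two structural facts about $\Eunif$ drive everything (here $d_A=2^{N_A}$, $d_B=2^{N_B}$). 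First, the entries of $\Bgammaba$ are exactly the $A$–$B$ adjacency bits of $G$, which are independent fair coins, so $\Bgammaba$ is uniform on $\mathbb{F}_2^{N_B\times N_A}$; moreover $\mathbf{D}$ depends only on the within-$B$ adjacency bits, a disjoint family of coins, so $\mathbf{D}$ is independent of $\Bgammaba$. Second, the affine $\mathbb{F}_2$-map sending the within-$B$ edge bits to $\mathbf{D}$ (the vector of $B$-degree parities, shifted by the all-ones vector) surjects onto a coset of the even-weight subspace $W_0\le\mathbb{F}_2^{N_B}$; since fibers of a surjective affine map over $\mathbb{F}_2$ are equinumerous and $\lvert W_0\rvert=2^{N_B-1}$, the vector $\mathbf{D}$ is uniform on a set of size $2^{N_B-1}$, in particular $\Pr[\mathbf{D}=v]\le 2^{-(N_B-1)}$ for every $v$.

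Case $\alpha<1/2$. Here $N_A=\lfloor\alpha N\rfloor$, so $N_B-N_A\ge(1-2\alpha)N\to\infty$. The column span of $\Bgammaba$ is a subspace of $\mathbb{F}_2^{N_B}$ of dimension at most $N_A$, hence has at most $2^{N_A}$ elements. Conditioning on $\Bgammaba$ and using independence together with the anti-concentration of $\mathbf{D}$, one gets $\Pr[\mathbf{D}\in\operatorname{colspan}\Bgammaba\mid\Bgammaba]\le 2^{N_A}\cdot 2^{-(N_B-1)}$, and averaging over $\Bgammaba$ yields $p_s(\Eunif)\le 2^{\,N_A-N_B+1}=2^{\,2N_A-N+1}\to 0$.

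Case $\alpha>1/2$. Here $N_A=\lceil\alpha N\rceil$, so $N_A-N_B\ge(2\alpha-1)N\to\infty$ and in particular $N_A\ge N_B$ for all large $N$. Whenever $\Bgammaba$ has full row rank $N_B$, its column span is all of $\mathbb{F}_2^{N_B}$ and the system is solvable regardless of $\mathbf{D}$ — so here independence is not even needed. By the standard count for a uniform $N_B\times N_A$ binary matrix, $\Pr[\operatorname{rank}\Bgammaba=N_B]=\prod_{j=0}^{N_B-1}\bigl(1-2^{\,j-N_A}\bigr)\ge 1-2^{\,N_B-N_A}$, hence $p_s(\Eunif)\ge 1-2^{\,N_B-N_A}\to 1$; with $p_s(\Eunif)\le 1$ this gives $p_s(\Eunif)\to 1$. (If $N_B$ is to be kept even, replacing $N_A$ by $N_A\pm O(1)$ changes none of these limits.)

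I expect the only genuinely delicate ingredient here to be the structural input on $\mathbf{D}$ — that it is independent of $\Bgammaba$ and supported on a codimension-one subspace, which is what costs the harmless factor of two in the case $\alpha<1/2$; the remainder is the textbook estimate on the rank of a random binary matrix. For completeness I would also note why I avoid the ``obvious'' route through Theorem~\ref{thm:solution-prob}: its bounds do give the result quickly for $\alpha<1/2$ (fix any $r\in(0,1)$; the leading term of the upper bound is $O(d_A/d_B)$ and $\varepsilon_2=O(d_A/d_B^2)$, both $\to0$) and for $\alpha$ bounded sufficiently far above $1/2$ (there $\varepsilon_1\to0$ for fixed $r$, after which one sends $r\to0$ in the lower bound), but for $\alpha$ just above $1/2$ one has $\varepsilon_1=\Theta\bigl(d_B^2/(r^2 d_A)\bigr)=\Theta\bigl(2^{(2-3\alpha)N}/r^2\bigr)$, which diverges for every fixed $r$, and optimizing over $r$ leaves a lower bound tending to $0$ rather than $1$. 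Routing through the matrix-equation characterization of Theorem~\ref{thm:ltg-in-terms-of-matrices} instead handles all $\alpha\ne 1/2$ uniformly.
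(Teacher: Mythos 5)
Your proof is correct, and it takes a genuinely different route from the paper's. The paper derives Corollary~\ref{cor:tensor-network} from the quantitative bounds of Theorem~\ref{thm:solution-prob} (no separate proof is given in the appendix; the text states the corollary is obtained from those bounds), whereas you work directly from the matrix-equation characterization of Theorem~\ref{thm:ltg-in-terms-of-matrices} and reduce everything to elementary facts about a uniformly random $\mathbb{F}_2$ system: independence of $\Bgammaba$ and $\D$ (disjoint edge coins), near-uniformity of $\D$ on a coset of the even-weight subspace, a union bound over the at most $2^{N_A}$ elements of the column span for $\alpha<1/2$, and the standard full-row-rank estimate $\prod_{j=0}^{N_B-1}(1-2^{j-N_A})\ge 1-2^{N_B-N_A}$ for $\alpha>1/2$. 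All of these steps check out (the surjectivity of the edge-to-degree-parity map onto the even-weight subspace needs $N_B\ge 2$, which the evenness hypothesis supplies), and your parenthetical about adjusting $N_A$ by $O(1)$ to preserve the parity of $N_B$ is the right way to handle that wrinkle.

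Your closing remark about why you avoid Theorem~\ref{thm:solution-prob} is not merely a stylistic preference: it identifies a genuine gap in the paper's implied derivation. For $1/2<\alpha<2/3$ one has $\varepsilon_1=\Theta\bigl(d_B^2/(r^2(1+r)d_A)\bigr)=\Theta\bigl(2^{(2-3\alpha)N}/r^2\bigr)\to\infty$ for every fixed $r$, and since the leading term of the lower bound is at most $1/(1+r)$, one would need $r\to 0$, which only makes $\varepsilon_1$ worse; so the lower bound of~\eqref{eq:pr-sol-bounds} cannot establish $p_s(\Eunif)\to 1$ in that window. Your direct argument fills this gap and in addition yields explicit exponential rates ($p_s\le 2^{2N_A-N+1}$ and $p_s\ge 1-2^{N_B-N_A}$) valid uniformly for all $\alpha\ne 1/2$, at the modest cost of not reproducing the finer approximation $p_s\approx (d_A+1)/(d_A+d_B-1)$ that Theorem~\ref{thm:solution-prob} provides in the regimes where its error terms are small.
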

That is, the overwhelming majority of graph states over $AB$ will not yield a solvable matrix equation whenever $N_A < N/2$ for large total system size. Consequently, the matrix equation test is effective at narrowing down the possible candidates for the source state in a graph state extraction protocol targeting states with maximal $n$-tangle in this regime. While the matrix equation test \textit{does} \textit{not} allow us to conclude anything about whether a \textit{particular} graph state with maximal $n$-tangle may be extracted from a source state in the case that the matrix equation is solvable, Theorem~\ref{thm:ltg-in-terms-of-matrices} \textit{does} ensure that there exists some measurement over $A$ that yields \textit{some} maximal $n$-tangle state for all measurement outcomes. Thus point (2) from Corollary~\ref{cor:tensor-network} implies that we may extract a maximal $n$-tangle state with nonzero probability from the typical graph state $|G\rangle_{AB}$ in the regime $N_A > N/2$ for large $N$.

\subsubsection{Alternative graph ensembles} \label{sec:numerical-pr-sol}

While the approximation provided by Theorem~\ref{thm:solution-prob} is convenient for appraising the matrix equation test, a potential limitation is that the uniform distribution over all graph states unfairly biases certain kinds of graph states. The issue lies in the fact that certain families of isomorphic graphs contain more members than others, and hence are weighted more heavily by the uniform distribution. For instance, suppose $N = 3$ and consider the scenario reflected in Fig.~\ref{fig:equivalent-graphs}. Due to the symmetry of the two graphs in case 2, the distinction between their associated graph states is operationally irrelevant in a setting where we wish to extract states over $B$ via projective measurements on $A$. On the other hand, the state corresponding to the graph of case 1 presents a distinct experimental scenario for a state extraction. However, the structures reflected by case 1 and case 2 respectively occur with distinct probabilities of 1/8 and 2/8 in the uniform ensemble of graph states over $AB$. 

\begin{figure} \centering
        \includegraphics[width = 9cm]{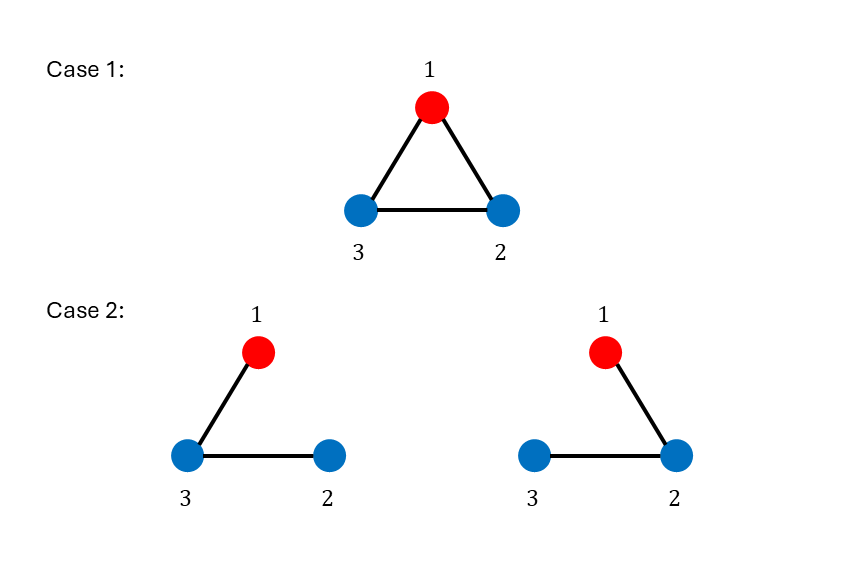}
        \caption{Different graph states can be equivalent for the purpose of entanglement localization when their underlying graphs are related by a graph isomorphism that preserves $A$ and $B$. In this figure, $A = \{1\}$ and $B = \{2, 3\}$.}
        \label{fig:equivalent-graphs}
\end{figure}

Another issue arises from the connectedness of the underlying graphs. 
Graph states with disconnected graphs may be written as product states over the subsystems corresponding to their connected components. A product state across a collection of subsystems will indeed remain a product states after local measurements and local unitaries. Hence, we generally wish to consider connected graphs.

Taking into account these issues, we consider an alternative distribution. We say that two \textit{connected} $N$-vertex graphs $G_1$ and $G_2$ whose vertices are respectively bipartitioned as $A_1|B_1$ and $A_2|B_2$ have isomorphic bipartitions if there exists some graph isomorphism between $G_1$ and $G_2$ that maps $A_1$ into $A_2$ and $B_1$ into $B_2$. Let us consider the ensemble $\mathcal{E}_{\mathrm{isom}}$ of all connected $N$-qubit graph states in which each graph graph state $|G\rangle_{AB}$ occurs with a probability of 
\begin{equation}
    p_G = \frac{1}{\left|[G]\right||I|},
\end{equation}
where $I$ denotes the set of all equivalence classes under isomorphisms of connected bipartitioned $N$-vertex graphs and $[G]$ refers to the equivalence class of $G$. Under this distribution, each equivalence class occurs with equal probability, so that no graph topology is biased over another. 

In   Fig.~\ref{fig:alternative-ensembles}a, we employ a numerical sampling method (see Appendix~\ref{app:noniso_sim}) to compute $p_s(\mathcal{E}_{\mathrm{isom}})$. It is significant that our approximation~\eqref{eq:pr-sol-approx} nearly coincides with $p_s(\mathcal{E}_{\mathrm{isom}})$ for the uniformly random isomorphism classes when $6\leq N \leq 10$. That the coincidence should be so tight is not clear \textit{a priori}. Furthermore, the close agreement suggests that our approximation~\eqref{eq:pr-sol-approx} is realistic even for a more practically motivated distribution of graph states.

On a separate note, we also define distributions in Appendix~\ref{app:graph-families} over particular families of graph states to understand the behavior of the matrix equation in special cases (see Fig.~\ref{fig:alternative-ensembles}b). In this approach, we find that the probability of a solution across many families behaves as a logistic curve with respect to $N_A$, echoing its behavior for the uniform distribution over isomorphism classes of bipartitioned graphs. Interestingly, when the total system size is fixed at $N = 16$, the probability of a solution over the distribution of bipartitioned 4-regular graphs is nearly identical to the analytic approximation for $p_s(\Eunif)$ given by~\eqref{eq:pr-sol-approx} for the uniform distribution over all graphs on $AB$.

\begin{figure}[t!]
    \begin{tikzpicture}
    \begin{axis}[
        xlabel={$N_A$},
        ylabel={$p_s(\Eunif)$},
        legend style={at={(1.0, 0)},anchor=south east, scale = 0.6},
        grid=major,
        width=.47\textwidth,
        height=6cm,
        ymin=0, ymax=1.05,
        xmin=0, xmax=19, 
    ]
    
    \addplot[blue, thick, only marks, mark=*] coordinates {
    (10*0.2,0.016) (10*0.4,0.173) (10*0.6,0.896) (10*0.8,0.993)
    };
    \addlegendentry{$N = 10$}

    \addplot[red, thick, only marks, mark=*] coordinates {
    (15*0.06666667,0.001) (15*0.2,0.002) (15*0.333333,0.01) (15*0.4666667,0.249) (15*0.6,0.975) (15*0.733333,1.0) (15*0.866667,1.0)
    };
    \addlegendentry{$N=15$}

    \addplot[green!60!black, thick, only marks, mark=*] coordinates {
    (20*0.1,0.0) (20*0.2,0.0) (20*0.3,0.0) (20*0.4,0.004) (20*0.5,0.783) (20*0.6,0.997) (20*0.7,0.999) (20*0.8,1.0) (20*0.9,1.0)
    };
    \addlegendentry{$N = 20$}
    
    \addplot[gray, thick, only marks, mark=*] coordinates {
    (25*0.04,0.0) (25*0.12,0.0) (25*0.2,0.0) (25*0.28,0.0) (25*0.36,0.0) (25*0.44,0.005) (25*0.52,0.976) (25*0.6,0.997) (25*0.68,1.0) (25*0.76,1.0) (25*0.84,1.0) (25*0.92,1.0)
    };
    \addlegendentry{$N=25$}
    
    \addplot [
        domain=0:21, 
        samples=100, 
        color=blue,
        ultra thick,
        dashed
        ]
        {(2^x+1)/(2^x + 2^(10 - x) - 1)};
    \addplot [
        domain=0:21, 
        samples=100, 
        color=red,
        ultra thick,
        dashed
        ]
        {(2^x+1)/(2^x + 2^(15 - x) - 1)};

    \addplot [
        domain=0:21, 
        samples=100, 
        color=green!60!black,
        ultra thick,
        dashed
        ]
        {(2^x+1)/(2^x + 2^(20 - x) - 1)};
 
    \addplot [
        domain=0:21, 
        samples=100, 
        color=gray,
        ultra thick,
        dashed
        ]
        {(2^x+1)/(2^x + 2^(25 - x) - 1)};    
    
    \end{axis}
    \end{tikzpicture}
    \caption{Probability $p_s(\Eunif)$ of finding a solution for the uniformly weighted ensemble against $N_A$ in graphs with $N = 10, 15, 20$ and $25$ vertices. Data points along solid lines represent numerical estimates of $p_s(\Eunif)$ obtained from samples of 1000 random graphs drawn from $\Eunif$. The dashed lines indicate our approximation from equation~\ref{eq:pr-sol-approx}.}
    \label{fig:Random graphs approximation}
\end{figure}
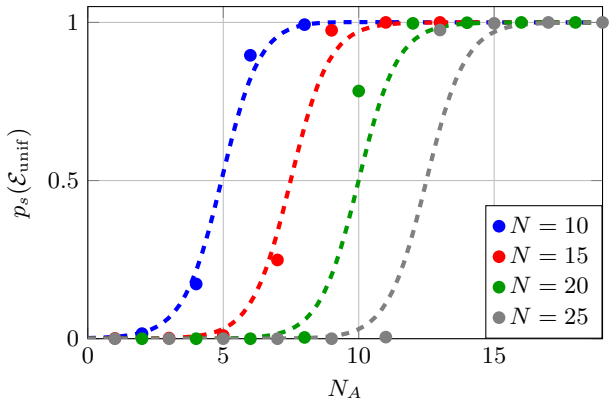

\subsubsection{Linear Cluster States}

We now turn towards analyzing the solvability of the matrix equation on linear cluster states, which have a line graph as their underlying graph. The authors of~\cite{deJong2024} previously characterized all the possible ways of deterministically extracting a GHZ state from a linear cluster state using local Pauli measurements, local Clifford operations, and classical communication (LC + LPM + CC). The restriction to LC + LPM + CC is natural because these operations are more experimentally feasible, and also because it makes the problem more mathematically tractable. However, it is unclear whether an expanded set of allowable operations will allow for more possible GHZ state extraction schemes. This leads us to ask whether such an advantage is offered by general local unitaries, global measurements over the measured system $A$, and classical communication (LU + GM + CC). Once again, the presence of classical communication implies that local unitaries may depend on measurement outcomes. 

Our next result answers this inquiry in the negative for large $N$. To formulate this result, we establish the following notation. We will refer to a particular assignment of the subsystems $A$ and $B = [N]\setminus A$ to the total $N$-qubit system as a \textit{measurement configuration} since $A$ labels the measured subsystem. For instance, $A = \{1, 3\}, B = \{2, 4, 5\}$ and $A = \{3, 5\}, B = \{1,2, 4\}$ would be considered two distinct measurement configurations for $N = 5$. The $N$-qubit linear cluster state $|L\rangle_{AB}$ is the graph state
\begin{equation}
    |L\rangle_{AB} \coloneqq \bigotimes_{i=1}^{N-1} \mathrm{CZ}_{i,i+1} |+\rangle^{\otimes N} 
\end{equation}
and the $N$-qubit GHZ state over the $B$ subsystem is
\begin{equation}
    |\GHZ\rangle_B \coloneqq \frac{1}{\sqrt{2}}\left(\bigotimes_{i \in B} |0\rangle_i + \bigotimes_{i\in B} |1\rangle_i\right).
\end{equation}

\begin{theorem} \label{thm:dejong-accordance}
    Let $S_N$ denote the set of measurement configurations on $N$ qubits for which $N_B$ is even and it is impossible to deterministically transform the linear cluster state $|L\rangle_{AB}$ into $|\GHZ\rangle_B$ via LC + LPM + CC. Let $T_N$ denote the set of measurement configurations on $N$ qubits for which $N_B$ is even and it is impossible to transform $|L\rangle_{AB}$ into $|\GHZ\rangle_B$ via LU + GM + CC with \textit{nonzero probability}, so that $T_N \subseteq S_N$. Then
    \begin{equation}
        \lim_{N \to \infty} \frac{|T_N|}{|S_N|} = 1. 
    \end{equation}
\end{theorem}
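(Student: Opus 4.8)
The plan is to sandwich $T_N$ between $S_N$ and a combinatorially transparent set. Let $U_N$ be the set of measurement configurations on $N$ qubits with $N_B$ even for which the matrix equation $\Bgammaba\mathbf{x}=\D$ attached to the linear cluster state $|L\rangle_{AB}$ has \emph{no} solution in $\mathbb{F}_2^A$. First I would record the chain $U_N\subseteq T_N\subseteq S_N$. The inclusion $T_N\subseteq S_N$ is the one already noted (deterministic LC + LPM + CC is a special case of probabilistic LU + GM + CC), and $U_N\subseteq T_N$ follows from Theorem~\ref{thm:ltg-in-terms-of-matrices}: if $\Bgammaba\mathbf{x}=\D$ is unsolvable then $\ltg(|L\rangle_{AB})=0$, and since $\ltg$ is a maximum over projective measurements on $A$ of $\sum_i p_i\tau(|\psi_i\rangle_B)$, a sum of nonnegative terms, this forces $\tau(|\psi_i\rangle_B)=0$ for every positive-probability outcome of every such measurement. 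Because $\tau$ is invariant under local unitaries while $\tau(|\GHZ\rangle_B)=1$ for even $N_B$, no one-round-measurement-then-local-unitary protocol on $A$ can output $|\GHZ\rangle_B$ with positive probability, so the configuration lies in $T_N$.

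Next I would compute $U_N$ for the line graph. Write $[N]$ as an alternating concatenation of maximal $A$-runs and $B$-runs. If some maximal $B$-run has length at least $3$, pick one of its interior vertices $b$: both line-neighbors of $b$ lie in $B$, so the $b$-th row of $\Bgammaba$ is zero, while $b$ has degree $2$ in the subgraph $G-A$ induced by $B$, so $\D_b=1$; the $b$-th equation reads $0=1$ and the system is unsolvable. Hence every $N_B$-even configuration in which $B$ contains a run of length $\ge 3$ lies in $U_N$. Consequently the $N_B$-even configurations \emph{not} in $U_N$ all have every $B$-run of length $\le 2$, i.e.\ the indicator string of $B$ avoids the pattern $111$; partitioning $[N]$ into consecutive blocks of size at most $3$ bounds their number crudely by $7^{\lceil N/3\rceil}$, which is $o(2^N)$ since $7^{1/3}<2$. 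Since the total number of $N_B$-even configurations is exactly $2^{N-1}$, we get $|U_N|\ge 2^{N-1}-7^{\lceil N/3\rceil}$.

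Finally I would combine these facts. Since $U_N\subseteq T_N\subseteq S_N$ and $S_N$ consists only of $N_B$-even configurations, $|S_N|\le 2^{N-1}$, so
\begin{equation}
    1-\frac{7^{\lceil N/3\rceil}}{2^{N-1}}\;\le\;\frac{|U_N|}{2^{N-1}}\;\le\;\frac{|T_N|}{|S_N|}\;\le\;1,
\end{equation}
and the left-hand side tends to $1$ because $7^{1/3}<2$; this proves the theorem. Note the argument does not actually use the explicit LC + LPM + CC characterization of~\cite{deJong2024}: that result enters only through the definition of $S_N$, and one could optionally invoke it to evaluate $|S_N|$ exactly and thereby sharpen the rate of convergence.

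I expect the one delicate point to be the implication ``$\ltg(|L\rangle_{AB})=0\Rightarrow$ no $|\GHZ\rangle_B$ with positive probability.'' One must check that the class of operations in LU + GM + CC (one round of an arbitrary, possibly globally entangled, projective measurement on $A$, followed by outcome-dependent local unitaries on $B$) is precisely the class over which $\ltg$ is optimized, so that Theorem~\ref{thm:ltg-in-terms-of-matrices} and Eq.~\eqref{eq:ltg-is-F} apply verbatim; the vanishing-average observation together with local-unitary invariance of the $n$-tangle then closes the implication. By comparison, identifying and counting $U_N$ is elementary.
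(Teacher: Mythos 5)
Your proof is correct, and it shares the paper's central observation---an interior vertex of a $B$-run of length at least $3$ has both line-neighbors in $B$, giving the inconsistent row $0=1$ in $\Bgammaba\mathbf{x}=\D$, so such configurations land automatically in $T_N$---but your execution is genuinely different and more economical. The paper bounds $|S_N\setminus T_N|/|S_N|$ directly: it invokes the characterization of~\cite{deJong2024} to argue that every configuration in $S_N\setminus T_N$ must have a $2$-island and no larger island, counts those via a generating-function/Fibonacci analysis to get $O(Na^N)$ with $a\approx 1.839$, and separately lower-bounds $|S_N|=\Theta(2^N)$ by counting the de Jong--extractable configurations. You instead sandwich $U_N\subseteq T_N\subseteq S_N$ and note that $|U_N|\ge 2^{N-1}-7^{\lceil N/3\rceil}$ by a crude aligned-block bound on indicator strings avoiding $111$, while $|S_N|\le 2^{N-1}$ trivially; this forces all three cardinalities to equal $2^{N-1}\bigl(1-o(1)\bigr)$ and requires no input from~\cite{deJong2024} beyond the definition of $S_N$. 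What you lose is sharpness---the paper's exceptional set grows like $1.839^N$ versus your $1.913^N$, and its finer decomposition localizes exactly where $S_N$ and $T_N$ can differ (the configurations whose largest $B$-island has size exactly two)---but what you gain is a self-contained, elementary argument that also yields $|S_N|=2^{N-1}\bigl(1-o(1)\bigr)$ as a free byproduct. The point you flag as delicate (that unsolvability of the matrix equation forbids even probabilistic GHZ extraction under LU + GM + CC) is treated at exactly the same level of rigor in the paper, which derives it from Theorem~\ref{thm:ltg-in-terms-of-matrices} together with local-unitary invariance of the $n$-tangle just as you do, so there is no gap there relative to the paper's own standard.
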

A proof of this theorem may be found in Appendix~\ref{app:extracting-ghz}. The proof relies on a comparison between the graph theoretic criteria for GHZ extraction outlined in~\cite{deJong2024} and the matrix equation criterion of Theorem~\ref{thm:ltg-in-terms-of-matrices} together with a combinatroial counting argument. 

Theorem~\ref{thm:dejong-accordance} implies that in almost every case (with even $N_B$) where it is impossible to deterministically extract a GHZ state using LC + LPM + CC, it will be impossible to do so with \textit{even non-zero probability} using the significantly more expanded operations of LU + GM + CC. Therefore, LU + GM + CC does not provide an advantage over LC + LPM + CC in the task of extracting GHZ states from linear cluster states asymptotically. Furthermore, since our matrix equation test detects every case in which it is impossible to stochastically extract a GHZ state via LU + GM + CC, Theorem~\ref{thm:dejong-accordance} implies that our matrix equation test strengthens in the asymptotic regime the conclusions of the test from~\cite{deJong2024} for GHZ extraction from linear cluster states under LC + LPM + CC.

\begin{figure*}
\centering
\begin{tabular}{cc}
    \includegraphics[width=0.47\linewidth]{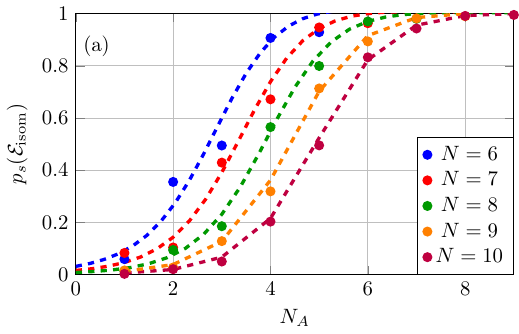}&
    \includegraphics[width=0.47\linewidth]{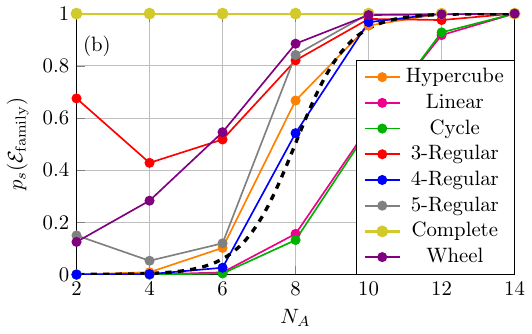}\\[2\tabcolsep]
    
\end{tabular}
\caption{(a) Probability $p_s(\Eunif)$ of finding a solution for the uniformly weighted ensemble against $N_A$ in graphs with $N = 10, 15, 20$ and $25$ vertices. Data points along solid lines represent numerical estimates of $p_s(\Eunif)$ obtained from samples of 1000 random graphs drawn from $\Eunif$. The dashed lines indicate our approximation from equation~\ref{eq:pr-sol-approx}. (b) Probability of finding a solution against $N_A$ for graphs drawn from a uniform distributions over graph families with various structural properties. Here, $N = 16$. The dashed line is the approximation we calculate in \eqref{eq:pr-sol-approx}. Note that the approximation is almost identical to the probability of getting a solution for a $4$-regular graph.}
\label{fig:alternative-ensembles}
\end{figure*}

\subsection{Concentration Phenomena in Haar Random States}\label{sec:concentration}
We now turn to a study of the average values of the LE and EA when the input state is sampled uniformly with respect to the uniform probability measure on $\HC_A \otimes \HC_B$. One of the principal aims of this work is to understand our ability to localize entanglement in high-dimensional systems, with $\ltl$ and $\ltg$ as our metrics of choice. Ref.~\cite{Vairogs2024} established the fact that $\ltg$ concentrates to near-maximal values whenever $d_A \gg d_B$. However, the behavior of $\ltg$ in the regime $d_B \gg d_A$ and the behavior of $\ltl$ for both $d_B \gg d_A$ and $d_A \gg d_B$ remained ill-understood. Understanding potential separations between the performance of local and global measurements for localizing entanglement is a matter of both theoretical and practical importance. However, as state above, because $\ltl$ does not have a closed form, standard concentration of measure techniques do not immediately apply. In this section, we develop methods of analytically determining the behavior of $\ltl$ as well as $\ltg$ in the regime $d_B \gg d_A$. 

\subsubsection{Primer on High-dimensional Probability}\label{sec:high-d-prob}

Concentration of measure is a powerful tool for analyzing the behavior of quantum state functionals in quantum information theory. Broadly speaking, it refers to the phenomenon that sufficiently well-behaved functions on high-dimensional Hilbert spaces are highly unlikely to deviate significantly from their mean value. When a state functional—such as the entanglement entropy—exhibits such concentration, its value for typical quantum states in large systems can be inferred from its average behavior~\cite{Hayden2006}. In this way, one can understand the typical properties of complex quantum systems without having to analyze each state individually.

Several concentration inequalities make these ideas precise by bounding the probability that a random variable deviates from its mean. Classic examples include Markov’s and Chebyshev’s inequalities which require control of the first and second moments of one's random variable, respectively. For functions on high-dimensional spheres—such as those describing random, pure quantum states—Lévy’s lemma provides a much stronger bound but requires a closed form for the mean and computation of the function’s Lipschitz constant. 

When standard concentration inequalities fail to give sufficiently tight bounds on the deviations of a functional, a standard approach is to utilize a so-called $\varepsilon$-net over quantum states. A collection $\mathcal{N}$ of (normalized) pure states from a given Hilbert space $\mathcal{H}$ is said to be an $\varepsilon$-\textit{net} if for any state $|\psi\rangle \in \mathcal{H}$, there exists a state $|\varphi\rangle \in \mathcal{N}$ such that $\| |\psi\rangle \langle \psi| - |\varphi\rangle \langle \varphi| \| \leq \varepsilon$. Here, $\| \cdot \|$ refers to a norm over $\mathcal{H}$, which is typically taken to be the 1-norm or 2-norm. It was shown in~\cite{Hayden_2004} that for any Hilbert space $\mathcal{H}$ with $\dim \mathcal{H} = d$, there exists an $\varepsilon$-net $\mathcal{N}$ of cardinality $|\mathcal{N}| \leq (5/\varepsilon)^{2d}$. 

To see how these arguments typically proceed, suppose that the probability that a state functional $f$ deviates from its mean value $\mu$ for random states $|\psi\rangle \in \mathcal{H}$ that simultaneously lie within the neighborhood of any fixed state $|\varphi\rangle$ is small, \textit{i.e.}, 
\begin{equation}\label{eq:joint-prob-concentration}
    \Pr_{|\psi\rangle}(|f(|\psi\rangle) - \mu| > \varepsilon~\mathrm{and}~\||\psi\rangle \langle \psi| - |\varphi\rangle\langle \varphi|\|< \varepsilon) \leq \delta.
\end{equation}
Then since any randomly sampled state $|\psi\rangle \in\mathcal{H}$ must lie in an $\varepsilon$-neighborhood of a state from an $\varepsilon$-net $\mathcal{N}$ with cardinality at most $(5/\varepsilon)^{2d}$, we may apply a union bound to get
\begin{equation}
    \Pr_{|\psi\rangle}(|f(|\psi\rangle) - \mu|> \varepsilon) \leq |\mathcal{N}|\delta. 
\end{equation}
Thus, if we can show that the $\delta$ bounding the probability in~\eqref{eq:joint-prob-concentration} decays sufficiently fast in relation to the scaling of the cardinality of $\mathcal{N}$, we can guarantee that $f(|\psi\rangle)$ deviates significantly from $\mu$ with exceedingly low probability. As it turns out, it is possible in certain notable cases to prove such a thing when it is challenging to prove a large deviation bound for $f$ directly~\cite{Hayden2006}.

With this intuition in place, we now mention what was known about the typical values of these localizable entanglement measures. The authors of~\cite{Vairogs2024} studied the typical behavior of the MEA. To state their result precisely, let us first let $\mu_H$ denote the measure on the pure states of $\mathcal{H}_A \otimes \mathcal{H}_B$ induced by the Haar measure~\cite{Mele2024} on the unitary group that acts on $\mathcal{H}_A \otimes \mathcal{H}_B$. When sampling random states with respect to this measure, the authors of~\cite{Vairogs2024} demonstrated that for arbitrary $\varepsilon >0$,
\begin{align}
    \Pr_{|\Psi\rangle \sim \mu_H}&\left(\ltg(|\Psi\rangle) \leq 1- \sqrt{2d_B/d_A} - \varepsilon\right) 
    \\
    &\leq 2\exp(-C d_Ad_B \varepsilon^2),
\end{align} 
where $C$ is an irrelevant pre-factor independent of dimension. Thus, when $d_A \gg d_B$, the values of $\ltg(|\Psi\rangle)$ are near maximal since the $n$-tangle assumes values between zero and one. In contrast, the typical behavior of the EA in the case that $d_A \ll d_B$ and the behavior of the LE in both regimes presented several analytical challenges that were not overcome until the present work. That said, our results were guided by preliminary numerical evidence presented in~\cite{Vairogs2024} which suggested that the LE values tend to a lower value than their maximum value of one when $d_A \gg d_B$.

\subsubsection{Results}

Recall that we have a system of $N$ qubits and two subsystems labeled by $A \subset [N]$ and $B = [N] \setminus A$. Furthermore, $\mathcal{H}_A$ and $\mathcal{H}_B$ label the Hilbert spaces of the $A$ and $B$ subsystems, respectively. We will write $d_A = 2^{N_A}$ and $d_B = 2^{N_B}$. For any Hilbert space $\mathcal{H}$, let $\mathcal{S}(\mathcal{H})$ denote the set of normalized states within $\mathcal{H}$. We also denote by $\mathcal{C}(\mathcal{H})$ the collection of all ordered orthonormal bases of Hilbert space $\mathcal{H}$. In this section, we will use $\mu_H$ to denote the measure induced on the set of pure states by the Haar measure of the unitary group acting on the relevant Hilbert space.

If a function $f$ takes a vector $|\psi\rangle$ as an input, we sometimes write $f(\psi)$ instead of $f(|\psi\rangle)$. For a given $|v\rangle \in \mathcal S(\mathcal H_A)$, suppose a Haar-random state $|\Psi\rangle \in \mathcal S(\mathcal H_A \otimes \mathcal H_B)$ undergoes a projective measurement on subsystem $A$ defined by projectors $\{|i\rangle \langle i|_A \otimes I_B\}$ for some $\{|i\rangle\} \in \mathcal C(\mathcal H_A)$ such that $|1\rangle = |v\rangle$. The probability, denoted by $p_v(\Psi)$, of obtaining the post-measurement state
\begin{align}
    \frac{(|v\rangle\langle v|_A \otimes I_B) |\Psi\rangle}{\sqrt{\langle \Psi | (|v\rangle\langle v|_A \otimes I_B) | \Psi \rangle}}
\end{align}
is given by 
\begin{equation}
    p_v(\Psi) := \langle \Psi | (|v\rangle\langle v|_A \otimes I_B) | \Psi \rangle.
\end{equation}
Since
\begin{equation}
    (|v\rangle\langle v|_A \otimes I_B)|\Psi\rangle = |v\rangle \otimes (\langle v|_A \otimes I_B)|\Psi\rangle,
\end{equation}
let us discard the qubits in subsystem $A$ and define
\begin{equation}
    |M_v(\Psi)\rangle := \frac{(\langle v|_A \otimes I_B)|\Psi\rangle}{\sqrt{p_v(\Psi)}}.
\end{equation}
One can check that $|M_v(\Psi)\rangle$ is indeed normalized and is in Hilbert space $\mathcal H_B$. Denote by $F_v(\Psi)$ the expected post-measurement $N_B$-tangle of $|\Psi\rangle$ associated with $|v\rangle$, or
\begin{equation} \label{eq:fvpvmv}
    F_v(\Psi) := p_v(\Psi) \tau(M_v(\Psi)) 
\end{equation}
Furthermore, given a basis $\beta := \{|\varphi_i\rangle\}_{i=1}^{d_A} \in \mathcal C(\mathcal H_A)$, if the set of projective measurement operators is described by $\{|\varphi_i \rangle \langle \varphi_i|_A \otimes I_B\}_{i=1}^{d_A}$, then we define the average post-measurement $N_B$-tangle of $|\Psi\rangle$ given basis $\beta$ to be
\begin{equation}
    \overline \tau_\beta(\Psi) := \sum_{i=1}^{d_A} \nolimits F_{\varphi_i}(\Psi).
\end{equation}
In this way, we may write the EA as
\begin{equation}
    \ltg(|\Psi\rangle) = \max_{\beta \in \mathcal C(\mathcal H_A)} \overline \tau_\beta (\Psi).
\end{equation}

Our first order of business is to upper bound the typical values of the EA $\ltg$ in the regime $d_B \gg d_A$. Let $|\varphi_{\max}\rangle \in \mathcal S(\mathcal H_A)$ be a state such that $F_v(\Psi) \le F_{\varphi_{\max}}(\Psi)$ for all $|v\rangle \in \mathcal S (\mathcal H_A)$ and define
\begin{equation}
    K := \sqrt{\tfrac{2}{d_B + 1}}. \label{eq:K}
\end{equation}
In Appendix~\ref{app:concentration-haar}, Lemma~\ref{lem:neighbor-bound-F-v} states that for an arbitrary $|v\rangle \in \mathcal S(\mathcal H_A)$, the value $F_{\varphi_{\max}}(\Psi)$ is \textit{typically} close to $K/d_A$ as long as $|v\rangle$ is close enough to $|\varphi_{\max}\rangle$. Moreover, by \citep[Lem.~II.4]{Hayden_2004}, there exists an $\varepsilon$-net $\mathcal N$ on $\mathcal H_A$ so that at least one state in $\mathcal N$ is close to $|\varphi_{\max}\rangle$. Combining the two results with a union bound styled argument as in Section~\ref{sec:high-d-prob}, we have the following theorem.

\begin{theorem} \label{thm:global-rB}
    For all $\varepsilon > 0$, the probability bound holds:
    \begin{multline} \label{eq:global-rB}
        \Pr_{|\Psi\rangle \sim \mu_H} \big( \ltg(\Psi) \ge K + \varepsilon \big) \\ \le 2\left(\tfrac{10\sqrt2 d_Ad_B}{\varepsilon}\right)^{2d_A} \exp\left(-\tfrac{d_B\varepsilon^2}{18\pi^3(4\sqrt2+2)^2d_A}\right)
    \end{multline}
\end{theorem}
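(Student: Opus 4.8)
The plan is to follow the $\varepsilon$-net strategy sketched in Section~\ref{sec:high-d-prob}, with $F_v(\Psi)$ playing the role of the "nice" functional. First I would invoke Lemma~\ref{lem:neighbor-bound-F-v}: for a fixed reference state $|\varphi\rangle \in \mathcal S(\mathcal H_A)$, with high probability over $|\Psi\rangle \sim \mu_H$, one has $F_{\varphi_{\max}}(\Psi) \lesssim K/d_A$ \emph{provided} $|\varphi_{\max}\rangle$ is within distance $\varepsilon'$ of $|\varphi\rangle$ (for a suitable $\varepsilon'$ depending on $\varepsilon$). The intuition behind this lemma is that $p_v(\Psi)$ concentrates around $1/d_A$ and $\tau(M_v(\Psi))$ concentrates around $\sqrt{2/(d_B+1)} = K$ when $d_B \gg d_A$, so the product concentrates around $K/d_A$; the $\varepsilon$-net is needed precisely because the maximizing direction $|\varphi_{\max}\rangle$ is itself state-dependent and cannot be fixed in advance.

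Next I would fix an $\varepsilon'$-net $\mathcal N$ on $\mathcal S(\mathcal H_A)$ with $|\mathcal N| \le (5/\varepsilon')^{2d_A}$ via \citep[Lem.~II.4]{Hayden_2004}. Since $|\varphi_{\max}\rangle$ lies within $\varepsilon'$ of some $|\varphi\rangle \in \mathcal N$, a union bound over $\mathcal N$ of the deviation event from Lemma~\ref{lem:neighbor-bound-F-v} gives that, except with probability at most $|\mathcal N|\,\delta$, every outcome contributes at most $\approx K/d_A$ to the average post-measurement tangle. Summing the $d_A$ contributions in $\overline\tau_\beta(\Psi) = \sum_i F_{\varphi_i}(\Psi) \le d_A \cdot F_{\varphi_{\max}}(\Psi)$ and taking the max over bases $\beta$ then yields $\ltg(\Psi) \le d_A \cdot (K/d_A + \text{slack}) = K + \varepsilon$. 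The remaining work is bookkeeping: choosing $\varepsilon'$ as a constant multiple of $\varepsilon/(d_A d_B)$ so that the net cardinality becomes $(10\sqrt2\, d_A d_B/\varepsilon)^{2d_A}$, tracking how the slack in Lemma~\ref{lem:neighbor-bound-F-v} propagates through the factor of $d_A$, and collecting the Gaussian exponent $\exp(-d_B\varepsilon^2/(18\pi^3(4\sqrt2+2)^2 d_A))$ from the concentration bound for the post-measurement tangle (the $\pi^3$ and $(4\sqrt2+2)^2$ factors presumably come from Lipschitz-constant estimates and the relation between the Haar measure and the sphere in Lemma~\ref{lem:neighbor-bound-F-v}).

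The main obstacle I anticipate is the proof of Lemma~\ref{lem:neighbor-bound-F-v} itself — establishing that $F_{\varphi_{\max}}(\Psi)$ is close to $K/d_A$ on a neighborhood of a fixed state. This requires two nontrivial concentration facts that work in the \emph{opposite} regime from the one handled in \cite{Vairogs2024}: (i) that the measurement probability $p_v(\Psi) = \langle\Psi|(\,|v\rangle\langle v|_A \otimes I_B)|\Psi\rangle$ concentrates around $1/d_A$, which follows from standard facts about marginals of Haar-random bipartite states; and (ii) that the $N_B$-tangle $\tau(M_v(\Psi))$ of the (renormalized) post-measurement state on $B$ concentrates around $K = \sqrt{2/(d_B+1)}$. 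Fact (ii) is the delicate one: $M_v(\Psi)$ is not quite Haar-random on $\mathcal H_B$ (it is conditioned on the measurement outcome), so one must either argue that it is close in distribution to a Haar-random state on $\mathcal H_B$ or directly compute that $\mathbb E[\,|\langle M_v(\Psi)|\widetilde{M_v(\Psi)}\rangle|\,] \approx \sqrt{2/(d_B+1)}$ and bound its fluctuations. The Lipschitz continuity of $|v\rangle \mapsto F_v(\Psi)$ in $|v\rangle$ (needed to pass from $|\varphi\rangle$ to the nearby $|\varphi_{\max}\rangle$) and in $|\Psi\rangle$ (needed for Lévy's lemma) must also be controlled, which is where the explicit constants in the exponent originate.
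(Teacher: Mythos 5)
Your outer argument is exactly the paper's: bound $\ltg(\Psi)\le d_A F_{\varphi_{\max}}(\Psi)$, take an $\bigl(\varepsilon/(2\sqrt2\,d_Ad_B)\bigr)$-net on $\mathcal S(\mathcal H_A)$ of cardinality $(10\sqrt2\,d_Ad_B/\varepsilon)^{2d_A}$, note that $|\varphi_{\max}\rangle$ must land near some net point, union-bound, and feed the slack $\varepsilon/d_A-\sqrt2 d_B\delta=\varepsilon/(2d_A)$ into the exponent; your constant bookkeeping matches the stated bound. Where you diverge is in how you imagine Lemma~\ref{lem:neighbor-bound-F-v} being established, and the route you sketch there is both different from and harder than what the paper does. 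You propose to show that $p_v(\Psi)$ concentrates around $1/d_A$ and that $\tau(M_v(\Psi))$ concentrates around $K$, then multiply; you correctly flag the distributional status of the conditioned state $M_v(\Psi)$ as the delicate point. The paper sidesteps all of this: it never analyzes $p_v$ or $M_v$ separately and never needs $F_v$ to concentrate \emph{around} $K/d_A$ from both sides. It only needs the one-sided bound $\E_{\mu_H}[F_v(\Psi)]\le K/d_A$, which it gets for free by embedding $|v\rangle$ in a basis $\beta$, using Haar invariance to argue $\E[F_{\varphi_i}]$ is the same for every basis element, and citing the known bound $\E[\overline\tau_\beta]\le K$ from~\cite{Vairogs2024}; the deviation above this mean is then controlled by applying L\'evy's lemma directly to $\Psi\mapsto F_v(\Psi)$, whose Lipschitz constant $4\sqrt2+2$ (Lemma~\ref{lem:Fv-lipschitz-wrt-Psi}) together with the Lipschitz continuity in $v$ (Lemma~\ref{lem:Fv-Lipschitz-wrt-v}, constant $\sqrt2 d_B$) supplies exactly the constants you were trying to locate. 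So the gap you identified as the main obstacle is real for your proposed route but is not present in the paper's; if you replace the two-factor concentration plan with the symmetrization-plus-L\'evy argument, your proof closes with no further work.
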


With previous results from \cite{Vairogs2024}, we can draw the following conclusion.
\begin{corollary} \label{cor:global-concentration}
Let $\varepsilon, \delta>0$ be arbitrary. Then for any $d_A\geq 2$, there exists an $N_0 \in \mathbb{N}$ such that for all $d_B \geq 2^{N_0}$, we have 
\begin{equation}
    \Pr_{|\Psi\rangle \sim \mu_H} \big( \ltg(\Psi)\geq K + \varepsilon \big) \leq \delta.
\end{equation}

On the other hand, \citep[Thm.~8]{Vairogs2024} implies that for all $d_B \geq 2$, there exists an $N_0' \in \mathbb{N}$ such that for all $d_A \geq 2^{N_0'}$, 
\begin{equation}
    \Pr_{|\Psi\rangle \sim \mu_H} \big( \ltg(\Psi)\leq 1 - \sqrt{2d_B/d_A} - \varepsilon \big) \leq \delta.
\end{equation}
\end{corollary}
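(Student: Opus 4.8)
The plan is to read off the first inequality directly from Theorem~\ref{thm:global-rB} and the second from the concentration bound of~\cite{Vairogs2024} recalled in Section~\ref{sec:high-d-prob}; in both cases the only input beyond those bounds is the elementary fact that exponential decay in one dimension overwhelms polynomial growth in it. For the first claim, fix $d_A \ge 2$ and $\varepsilon,\delta>0$ and set
\[
C_1 := \frac{10\sqrt{2}\, d_A}{\varepsilon}, \qquad C_2 := \frac{\varepsilon^2}{18\pi^3(4\sqrt{2}+2)^2\, d_A} > 0 .
\]
Then the bound~\eqref{eq:global-rB} of Theorem~\ref{thm:global-rB} reads $\Pr_{|\Psi\rangle\sim\mu_H}\!\big(\ltg(\Psi)\ge K+\varepsilon\big) \le 2\,C_1^{2d_A}\, d_B^{\,2d_A}\, e^{-C_2 d_B}$, where $2d_A$, $C_1$, and $C_2$ no longer depend on $d_B$. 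Since $C_2>0$, the quantity $d_B^{\,2d_A} e^{-C_2 d_B}$ tends to $0$ as $d_B\to\infty$, so there is an $N_0\in\mathbb{N}$ (depending on $d_A$, $\varepsilon$, $\delta$) such that $2\,C_1^{2d_A} d_B^{\,2d_A} e^{-C_2 d_B}\le \delta$ for every $d_B \ge 2^{N_0}$, which is the stated inequality. Note that $K=\sqrt{2/(d_B+1)}\to 0$ in this regime, so the threshold $K+\varepsilon$ approaches $\varepsilon$, consistent with the intended ``$\ltg$ concentrates near $0$ when $d_B\gg d_A$'' picture.

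For the second claim, fix $d_B \ge 2$ and $\varepsilon,\delta>0$ and invoke \citep[Thm.~8]{Vairogs2024} in the form recalled above, namely $\Pr_{|\Psi\rangle\sim\mu_H}\!\big(\ltg(|\Psi\rangle)\le 1-\sqrt{2d_B/d_A}-\varepsilon\big)\le 2\exp(-C\, d_A d_B \varepsilon^2)$ for an absolute constant $C$. The right-hand side equals $2\exp\!\big(-(C d_B\varepsilon^2)\, d_A\big)$, which decays exponentially in $d_A$ with a fixed rate once $d_B$ and $\varepsilon$ are frozen; hence there is an $N_0'\in\mathbb{N}$ with this quantity at most $\delta$ for every $d_A \ge 2^{N_0'}$. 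If $1-\sqrt{2d_B/d_A}-\varepsilon<0$ for small $d_A$ the event is empty and the bound is vacuous, so no separate care is needed there. Combining the two parts yields the corollary.

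I do not expect a genuine obstacle: each half is a one-line limit applied to an already-established tail bound. The only point deserving a little care is the bookkeeping of how $N_0$ depends on $d_A$ — because the exponential rate $C_2$ shrinks like $1/d_A$ while the polynomial degree $2d_A$ grows, $N_0$ must be taken of order $d_A\log d_A$ — but since $d_A$ is held fixed throughout the statement this has no bearing on the conclusion, and one may even make $N_0$ explicit if desired by solving $2\,C_1^{2d_A}(2^{N_0})^{2d_A}e^{-C_2 2^{N_0}} \le \delta$.
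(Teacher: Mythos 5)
Your proposal is correct and follows exactly the route the paper intends: the first bound is Theorem~\ref{thm:global-rB} with $d_A$ frozen, where the exponential decay $e^{-C_2 d_B}$ dominates the polynomial factor $d_B^{2d_A}$, and the second is the tail bound of \citep[Thm.~8]{Vairogs2024} with $d_B$ frozen. The paper offers no further argument beyond this, so there is nothing to add.
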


Next, we wish to compute an upper bound for the typical values of the LE in the regime $d_A \gg d_B$. To do so, we will use an $\varepsilon$-net construction similar to that described in Section~\ref{sec:high-d-prob}.

\begin{definition} \label{def:basis-norm}
    In Hilbert space $\mathcal{H}$ of dimension $d$, suppose that $\beta = \{|\varphi_i\rangle \}_{i=1}^d$ and $\gamma = \{|\eta_i\rangle\}_{i=1}^d$ are ordered orthonormal bases. Then the \textbf{basis norm} (or \textbf{$\bm B$-norm}) between bases $\beta$ and $\gamma$ is defined by
    \begin{equation}
        \|\beta - \gamma \|_B := \max_{1\leq i \leq d} \||\varphi_i\rangle\langle\varphi_i| - |\eta_i\rangle\langle \eta_i| \|_1.
    \end{equation}
\end{definition}

It is due to this definition that we take care to discuss \textit{ordered} bases throughout this note, as we want to only compare trace distances between basis elements of like index in our notion of basis distance. Next, given $\beta = \{|\varphi_i\rangle\}_{i=1}^{d_1}\in \mathcal{C}(\mathcal{H}_1)$ and $\gamma = \{|\eta_j\rangle\}_{j=1}^{d_2} \in \mathcal{C}(\mathcal{H}_2)$, we define $\beta \otimes \gamma \in \mathcal{C}(\mathcal{H}_1 \otimes \mathcal{H}_2)$ to be the ordered basis for $\mathcal{H}_1 \otimes \mathcal{H}_2$:
\begin{align}
    \beta \otimes \gamma :=\{&|\varphi_1\rangle |\eta_1\rangle, \dots, |\varphi_{d_1}\rangle |\eta_1\rangle, \notag \\
    &|\varphi_1\rangle |\eta_2\rangle, \dots, |\varphi_{d_1}\rangle |\eta_2\rangle, \notag \\ &\dots, \notag \\
    &|\varphi_1\rangle |\eta_{d_2}\rangle, \dots, |\varphi_{d_1}\rangle |\eta_{d_2}\rangle \}.
\end{align}
Finally, for any Hilbert spaces $\mathcal{H}_1, \dots, \mathcal{H}_n$ and $\mathcal H := \mathcal H_1 \otimes \dots \otimes \mathcal H_n$, define 
\begin{equation}\label{eq:p-h-def}
    \mathcal{P}(\mathcal{H}) \coloneqq \{\beta_1 \otimes \dots \otimes \beta_n: \beta_i \in \mathcal{C}(\mathcal{H}_i)\}.
\end{equation}
to be the collection of all orthonormal bases in $\mathcal H$ that ensure a tensor-product form on the qubits of systems corresponding to $\mathcal H_1$, $\dots$, $\mathcal H_n$. Obviously, $\mathcal{P}(\mathcal{H}) \subset \mathcal{C}(\mathcal{H})$.

\begin{definition} \label{def:basis-epsnet}
    Let $\varepsilon>0$ and let $\mathcal{H}  = \mathcal{H}_1 \otimes \dots \otimes \mathcal{H}_n $ be an $n$-partite Hilbert space. Suppose $\mathcal{N} \subset \mathcal{P}(\mathcal{H})$ is a collection of bases such that for any $\beta \in \mathcal{P}(\mathcal{H})$, there exists a $\gamma \in \mathcal{N}$ such that $\|\beta - \gamma\|_B\leq \varepsilon$. We then say that $\mathcal{N}$ is a \textbf{basis $\bm \varepsilon$-net} on $\mathcal{P}(\mathcal{H})$. 
\end{definition}

As with $\varepsilon$-nets for normalized states (see Section~\ref{sec:high-d-prob}), we can bound the size of basis $\varepsilon$-nets of product bases:

\begin{theorem}\label{thm:basis-net-thm}
    For any $\varepsilon\in (0,1)$, there exists a basis $\varepsilon$-net $\mathcal{N}$ for $\mathcal{P}((\mathbb{C}^2)^{\otimes n})$ with
    \begin{equation}
        |\mathcal{N}|\leq \left(\tfrac{5(2\sqrt{2} + 1)^2n^2}{\varepsilon^2}\right)^{8n}.
    \end{equation}
\end{theorem}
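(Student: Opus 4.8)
The plan is to construct $\mathcal{N}$ as an $n$-fold Cartesian product of nets on single-qubit ordered orthonormal bases, after first recording how the basis norm of a tensor-product basis is controlled by the basis norms of its tensor factors. For the composition estimate: a basis element of $\beta_1 \otimes \cdots \otimes \beta_n \in \mathcal{P}((\mathbb{C}^2)^{\otimes n})$ has rank-one projector $P_1 \otimes \cdots \otimes P_n$ with each $P_j$ a qubit projector, and the like-indexed element of $\gamma_1 \otimes \cdots \otimes \gamma_n$ has projector $Q_1 \otimes \cdots \otimes Q_n$. A hybrid (telescoping) expansion, together with multiplicativity of the trace norm over tensor products and $\|P_j\|_1 = \|Q_j\|_1 = 1$, gives $\|P_1\otimes\cdots\otimes P_n - Q_1\otimes\cdots\otimes Q_n\|_1 \le \sum_{j=1}^n \|P_j - Q_j\|_1$. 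Maximizing over basis elements yields $\|\beta_1\otimes\cdots\otimes\beta_n - \gamma_1\otimes\cdots\otimes\gamma_n\|_B \le \sum_{j} \|\beta_j - \gamma_j\|_B \le n \max_j \|\beta_j - \gamma_j\|_B$. Hence, if $\mathcal{N}_0 \subset \mathcal{C}(\mathbb{C}^2)$ is a basis $\delta$-net with $\delta$ small enough relative to $\varepsilon/n$, then $\mathcal{N} := \{\gamma_1 \otimes \cdots \otimes \gamma_n : \gamma_j \in \mathcal{N}_0\}$ is a basis $\varepsilon$-net on $\mathcal{P}((\mathbb{C}^2)^{\otimes n})$ with $|\mathcal{N}| = |\mathcal{N}_0|^n$. (If instead one routes this estimate through the $2$-norm distances of the basis \emph{vectors} --- convenient for reusing the state net below --- the per-factor overlaps multiply and the required single-qubit accuracy, along with the accumulated norm-conversion constants, picks up extra powers of $n$ and $1/\varepsilon$; this is the source of the $n^2/\varepsilon^2$ appearing inside the final bound.)

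Next I would build $\mathcal{N}_0$ from the state $\varepsilon$-net of \cite{Hayden_2004}: on $\mathbb{C}^2$ there is a set $M$ of unit vectors with $|M| \le (5/\mu)^{4}$ such that every pure state lies within trace distance $\mu$ of some element of $M$. An ordered orthonormal basis of $\mathbb{C}^2$ is an ordered pair of orthonormal unit vectors; given a target basis, I would choose elements of $M$ close to each of its two vectors and then Gram--Schmidt/renormalize the chosen pair to land back in $\mathcal{C}(\mathbb{C}^2)$. Approximating \emph{both} vectors, rather than approximating the first and declaring the second its orthogonal complement, is what produces the factor $|M|^2$ and hence the per-factor exponent $8$; it is wasteful for qubits but keeps the construction uniform. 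Since the overlap of the two chosen net vectors is within $O(\mu)$ of the overlap $0$ of the target basis vectors, the renormalization denominator is bounded away from zero, and a short perturbation estimate --- with constant of order $2\sqrt{2}+1$, coming from the trace-norm-to-vector-norm conversion together with the renormalization step --- bounds the resulting basis norm. Ranging over all pairs from $M$ yields a basis $\delta$-net $\mathcal{N}_0$ with $|\mathcal{N}_0| \le |M|^2 \le (5/\mu)^{8}$ and $\delta$ proportional to $\mu$.

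Then I would assemble and count: choosing $\mu$ (hence $\delta$) small enough in terms of $\varepsilon$ and $n$ as dictated by the composition estimate, and substituting into $|\mathcal{N}| \le |\mathcal{N}_0|^n \le (5/\mu)^{8n}$, gives the stated $|\mathcal{N}| \le \big(5(2\sqrt{2}+1)^2 n^2/\varepsilon^2\big)^{8n}$ after collecting constants. I would also check at this point that the required $\mu$ stays below $1$ for every $\varepsilon \in (0,1)$, so that the cardinality bound of \cite{Hayden_2004} is applicable, and that the composition estimate was applied in the correct direction (we only need the product net to be fine enough, so generosity in the single-qubit granularity is harmless).

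The hard part will be the orthonormalization bookkeeping of the second step. One must verify that applying Gram--Schmidt to an only \emph{approximately} orthonormal pair (i) really produces an element of $\mathcal{C}(\mathbb{C}^2)$, (ii) does not inflate the basis norm --- which requires the renormalization denominator to be uniformly bounded below, and hence forces the argument to be organized so that the smallness of $\mu$ is used rather than assumed circularly --- and (iii) respects the \emph{ordered} structure, since Definition~\ref{def:basis-norm} compares basis elements only of equal index. Tracking all the norm conversions (trace norm of a projector difference versus $2$-norm of a vector difference) and their accumulation over the $n$ tensor factors is what pins down the precise constant $5(2\sqrt{2}+1)^2 n^2/\varepsilon^2$; since the theorem asserts only an upper bound, a somewhat conservative estimate at each step suffices.
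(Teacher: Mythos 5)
Your proposal follows essentially the same route as the paper's proof: subadditivity of the $B$-norm under tensor products (the paper's Prop.~\ref{prop:norm-additivity}, proved by exactly the telescoping/multiplicativity argument you describe), a single-qubit basis net built from the Hayden--Leung--Winter state net by approximating \emph{both} basis vectors and Gram--Schmidting the second (the paper's Prop.~\ref{prop:basis-net-1q}, with cardinality $(5/\mu)^8$), and then the $n$-fold product construction with a $\varepsilon/n$ accuracy budget per factor. One wording slip worth correcting: the single-qubit basis-net accuracy is \emph{not} proportional to the state-net accuracy $\mu$ but to $\sqrt{\mu}$ --- the Gram--Schmidt renormalization turns an $O(\mu)$ overlap into an $O(\sqrt{\mu})$ perturbation of the second basis vector, giving a $(1+2\sqrt{2})\sqrt{\mu}$-net --- and it is precisely this square root (forcing $\mu = (\varepsilon/n)^2/(2\sqrt{2}+1)^2$) that produces the $n^2/\varepsilon^2$ inside the final bound; your closing arithmetic implicitly uses this quadratic relation, so the count comes out right, but the sentence claiming $\delta \propto \mu$ should be fixed.
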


Since we may write $\mathcal{H}_A = \bigotimes_{i=1}^{N_A} \mathcal{H}_i$, where $\mathcal{H}_i$ is the Hilbert space of qubit $i\in A$, the set $\mathcal{P}(\mathcal{H}_A)$ consists of all ordered orthonormal bases that factorize as product bases over this tensor product per the definition in~\eqref{eq:p-h-def}. We may thus write the LE as
    \begin{equation} \label{eq:ltl-def-alt}
        \ltl(|\Psi\rangle) := \max_{\beta \in \mathcal P(\mathcal H_A)} \overline \tau_\beta (\Psi).
    \end{equation}
Let $\beta_{\max} \in \mathcal P(\mathcal H_A)$ be a basis such that $\overline \tau_{\beta}(\Psi) \le \overline \tau_{\beta_{\max}}(\Psi)$ for all $\beta \in \mathcal P(\mathcal H_A)$ and $K$ be defined as in Eq.~\eqref{eq:K}. By Eq.~\eqref{eq:ltl-def-alt}, $\ltl(\Psi) = \overline \tau_{\beta_{\max}}(\Psi)$. We have shown in Appendix~\ref{app:concentration-haar} that for an arbitrary $\gamma \in \mathcal P(\mathcal H_A)$, the value $\overline \tau_{\beta_{\max}}(\Psi)$ will \textit{typically} be close to $K$ as long as $\gamma$ is close enough to $\beta_{\max}$ --- i.e., the $B$-norm between $\beta_{\max}$ and $\gamma$ is small enough. Also, by Thm.~\ref{thm:basis-net-thm} there exists a basis net $\mathcal N$ such that at least one member is close to $\beta_{\max}$. Thus, we may invoke a union bound in the style of Section~\ref{sec:high-d-prob} to obtain the following theorem.

\begin{theorem} \label{thm:local-rB}
    For all $\varepsilon > 0$, the probability bound holds:
    {\small
    \begin{multline}
        \Pr_{|\Psi\rangle \sim \mu_H} \big(L^\tau_{\rm local}(\Psi) \geq K + \varepsilon \big) \\ 
        \leq 2\left(\tfrac{40(1+2\sqrt2)^2N_A^2d_A^2d_B^2}{\varepsilon^2}\right)^{8N_A} \exp\left(-\tfrac{d_Ad_B\varepsilon^2}{18\pi^3(2+4\sqrt2)^2}\right). \label{eq:local-rB}
    \end{multline}}
\end{theorem}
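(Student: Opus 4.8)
The plan is to run the $\varepsilon$-net argument of Section~\ref{sec:high-d-prob}, but with single states replaced by product bases of $\mathcal{H}_A$ and ordinary $\varepsilon$-nets replaced by the basis $\varepsilon$-nets of Definition~\ref{def:basis-epsnet}. The observation that makes every estimate below elementary is that the outcome normalizations cancel inside $F_v$: writing $|\mu_v(\Psi)\rangle := (\langle v|_A \otimes I_B)|\Psi\rangle$, one has $F_v(\Psi) = p_v(\Psi)\,\tau(M_v(\Psi)) = |\langle \mu_v(\Psi)|\sigma_y^{\otimes N_B}|\mu_v(\Psi)^\star\rangle|$, so $F_v$ --- and hence $\overline{\tau}_\gamma(\Psi) = \sum_i F_{\varphi_i}(\Psi)$ for $\gamma = \{|\varphi_i\rangle\}_{i=1}^{d_A}$ --- is a bounded, phase-invariant, bilinear-type functional of $|\Psi\rangle$ with no small denominators. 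I would then assemble three ingredients: (i) a bound on $\mathbb{E}_{|\Psi\rangle \sim \mu_H}[\overline{\tau}_\gamma(\Psi)]$ for a fixed $\gamma \in \mathcal{P}(\mathcal{H}_A)$; (ii) a Lévy-type concentration bound for $\overline{\tau}_\gamma(\Psi)$ about that mean; and (iii) a continuity estimate, uniform in $|\Psi\rangle$, for the change in $\overline{\tau}_\beta(\Psi)$ under a perturbation of $\beta$ in the $B$-norm.

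For (i), since $\mu_H$ is invariant under $U_A \otimes I_B$, one has $\mathbb{E}[\overline{\tau}_\gamma(\Psi)] = d_A\,\mathbb{E}[F_v(\Psi)]$ for any fixed unit $|v\rangle$; taking $|v\rangle = |1\rangle_A$, the marginal $|\mu_v(\Psi)\rangle$ has the law of a size-$d_B$ block of a Haar-random unit vector of $\mathbb{C}^{d_Ad_B}$, i.e.\ of $\sqrt{R}\,|u\rangle$ with $|u\rangle$ Haar on $\mathcal{H}_B$ and $R \sim \mathrm{Beta}(d_B, (d_A-1)d_B)$ independent of $|u\rangle$, so $F_v(\Psi)$ has the law of $R\,\tau(|u\rangle)$ and $\mathbb{E}[\overline{\tau}_\gamma(\Psi)] = d_A\,\mathbb{E}[R]\,\mathbb{E}[\tau(|u\rangle)] = \mathbb{E}[\tau(|u\rangle)]$. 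A one-line Haar moment integral gives $\mathbb{E}_{|u\rangle}[\tau(|u\rangle)^2] = \mathbb{E}_{|u\rangle}|\langle u|\sigma_y^{\otimes N_B}|u^\star\rangle|^2 = \big(\|\sigma_y^{\otimes N_B}\|_F^2 + \Tr[(\sigma_y^{\otimes N_B})^2]\big)/(d_B(d_B+1)) = 2/(d_B+1) = K^2$, so by Cauchy--Schwarz $\mathbb{E}[\overline{\tau}_\gamma(\Psi)] \le K$ --- which is exactly why $K$ is chosen as in~\eqref{eq:K}. For (ii), the bilinearity of $\langle\,\cdot\,|\sigma_y^{\otimes N_B}|\,\cdot\,^\star\rangle$, the bound $\|\,|\mu_v\rangle\,\| \le 1$, the identity $\sum_i\|\,|\mu_{\varphi_i}(\Psi)\rangle - |\mu_{\varphi_i}(\Phi)\rangle\,\|^2 = \|\,|\Psi\rangle - |\Phi\rangle\,\|^2$ (completeness of $\gamma$), and Cauchy--Schwarz show that $\overline{\tau}_\gamma$ is Lipschitz in $|\Psi\rangle$ with a dimension-independent constant (the paper's value is $2+4\sqrt2$); Lévy's lemma on the unit sphere of $\mathcal{H}_A \otimes \mathcal{H}_B$, of real dimension $2d_Ad_B$, then gives $\Pr_\Psi(\overline{\tau}_\gamma(\Psi) \ge K+t) \le \Pr_\Psi(\overline{\tau}_\gamma(\Psi) - \mathbb{E}[\overline{\tau}_\gamma(\Psi)] \ge t) \le 2\exp\big(-2d_Ad_B t^2/(9\pi^3(2+4\sqrt2)^2)\big)$, which at $t = \varepsilon/2$ is the exponential factor in~\eqref{eq:local-rB}. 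For (iii), a triangle inequality over the $d_A$ basis vectors plus the same bilinear estimate bounds $|\overline{\tau}_\beta(\Psi) - \overline{\tau}_\gamma(\Psi)|$ by $\sqrt{2}\,d_Ad_B\,\|\beta - \gamma\|_B$, uniformly in $|\Psi\rangle$; this is the $\overline{\tau}$-analogue of Lemma~\ref{lem:neighbor-bound-F-v}.

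These combine by a union bound. Fix the basis $\varepsilon'$-net $\mathcal{N} \subset \mathcal{P}(\mathcal{H}_A)$ from Theorem~\ref{thm:basis-net-thm} with $\varepsilon' := \varepsilon/(2\sqrt{2}\,d_Ad_B)$, so that $|\mathcal{N}| \le (40(1+2\sqrt{2})^2 N_A^2 d_A^2 d_B^2/\varepsilon^2)^{8N_A}$. For any $|\Psi\rangle$ with $\ltl(\Psi) \ge K + \varepsilon$, let $\beta_{\max} = \beta_{\max}(\Psi) \in \mathcal{P}(\mathcal{H}_A)$ attain the maximum in~\eqref{eq:ltl-def-alt} --- which exists since $\mathcal{P}(\mathcal{H}_A)$ is compact and $\beta \mapsto \overline{\tau}_\beta(\Psi)$ is continuous --- and choose $\gamma \in \mathcal{N}$ with $\|\beta_{\max} - \gamma\|_B \le \varepsilon'$. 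By (iii), $\overline{\tau}_\gamma(\Psi) \ge \overline{\tau}_{\beta_{\max}}(\Psi) - \varepsilon/2 = \ltl(\Psi) - \varepsilon/2 \ge K + \varepsilon/2$, so $\{\ltl(\Psi) \ge K+\varepsilon\} \subseteq \bigcup_{\gamma \in \mathcal{N}} \{\overline{\tau}_\gamma(\Psi) \ge K + \varepsilon/2\}$. A union bound over $\mathcal{N}$, applying (i)--(ii) with $t = \varepsilon/2$ to each term, then yields~\eqref{eq:local-rB}.

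The main obstacle is ingredient (iii): controlling $\overline{\tau}_\beta(\Psi)$ under perturbations of the product basis $\beta$, by a bound that is simultaneously uniform over all $|\Psi\rangle$ and only polynomial in $d_A$ and $d_B$, is delicate because the post-measurement states $|M_{\varphi_i}(\Psi)\rangle$ carry factors $1/\sqrt{p_{\varphi_i}(\Psi)}$ and the outcome probabilities can be arbitrarily small. Working throughout with the normalization-free identity $F_v(\Psi) = |\langle\mu_v(\Psi)|\sigma_y^{\otimes N_B}|\mu_v(\Psi)^\star\rangle|$ is what keeps this step (and the Lipschitz estimate in (ii)) elementary, reducing everything to bilinear-form inequalities; the residual effort is the bookkeeping of constants so that the net parameter $\varepsilon'$, the Lévy deviation $\varepsilon/2$, and the continuity and Lipschitz constants compose to the stated inequality. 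The mean bound (i), the existence of the maximizing basis, and the closing union bound are routine once Theorem~\ref{thm:basis-net-thm} is in hand.
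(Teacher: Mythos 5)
Your proposal is correct and follows essentially the same route as the paper: a basis $\varepsilon$-net on $\mathcal P(\mathcal H_A)$ with parameter $\varepsilon/(2\sqrt2 d_A d_B)$ from Theorem~\ref{thm:basis-net-thm}, the $B$-norm Lipschitz bound $\sqrt2 d_A d_B\|\beta-\gamma\|_B$ (the paper's Lemma~\ref{lem:tau-bar-lipschitz}), the mean bound $\mathbb{E}[\overline\tau_\gamma]\le K$, L\'evy's lemma with Lipschitz constant $2+4\sqrt2$, and a union bound over the net, yielding identical constants. The only differences are cosmetic: you rederive the two facts the paper imports from Ref.~\cite{Vairogs2024} (the mean bound via the Beta-distributed block marginal and the second-moment computation $2/(d_B+1)$, and the Lipschitz constant in $|\Psi\rangle$ via the normalization-free bilinear form), and you phrase the union bound as a direct event inclusion rather than through the joint-event Lemma~\ref{lem:ltl-neighbor-bound}.
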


Combining this result with Theorem~\ref{thm:global-rB} and the fact that $\ltl(|\Psi\rangle_{AB}) \leq \ltg(|\Psi\rangle_{AB})$, we get the following corollary:

\begin{corollary} \label{cor:local-concentration}
    Let $\varepsilon, \delta>0$ be arbitrary. Then for any $d_B\geq 2$, there exists an $N_0 \in \mathbb{N}$ such that for all $d_A \geq 2^{N_0}$, we have 
    \begin{equation}
        \Pr_{|\Psi\rangle \sim \mu_H} \big( \ltl(\Psi)\geq K + \varepsilon \big) \leq \delta.
    \end{equation}
    Likewise, for any $d_A \geq 2$, there exists an $N_0' \in \mathbb{N}$ such that for all $d_B \geq 2^{N_0'}$, the above bound holds.
\end{corollary}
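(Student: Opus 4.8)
The corollary is an asymptotic reading of the finite-size bounds in Theorems~\ref{thm:local-rB} and~\ref{thm:global-rB}, so the plan is to show that the right-hand sides of those bounds can be driven below $\delta$ in the two stated limits. For the first claim (fixed $d_B\ge 2$, growing $d_A$) I would apply Theorem~\ref{thm:local-rB} directly. For the second claim (fixed $d_A\ge 2$, growing $d_B$) I would instead combine the pointwise inequality $\ltl(|\Psi\rangle_{AB})\le\ltg(|\Psi\rangle_{AB})$ with the first half of Corollary~\ref{cor:global-concentration}, rather than re-running the local net argument.

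For the first claim, write $N_A=\log_2 d_A$ and take the logarithm of the bound in~\eqref{eq:local-rB}. The net-cardinality prefactor contributes
\[
8N_A\,\log\!\left(\tfrac{40(1+2\sqrt2)^2N_A^2 d_A^2 d_B^2}{\varepsilon^2}\right),
\]
which, since $\log d_A=N_A\log 2$ and $d_B,\varepsilon$ are held fixed, is $O(N_A^2)=O\big((\log d_A)^2\big)$; the Gaussian factor contributes $-\,d_Ad_B\varepsilon^2/\big(18\pi^3(2+4\sqrt2)^2\big)=-\Theta(d_A)$. Hence the logarithm of the bound tends to $-\infty$ as $d_A\to\infty$, so the bound itself tends to $0$, and choosing $N_0$ so that $d_A\ge 2^{N_0}$ forces it below $\delta$ finishes this half.

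For the second claim, observe that $\mathcal P(\mathcal H_A)\subseteq\mathcal C(\mathcal H_A)$, so comparing~\eqref{eq:ltl-def-alt} with the expression for $\ltg$ as a maximum over $\mathcal C(\mathcal H_A)$ gives $\ltl(\Psi)\le\ltg(\Psi)$ for every $|\Psi\rangle$; therefore $\Pr_{|\Psi\rangle\sim\mu_H}(\ltl(\Psi)\ge K+\varepsilon)\le\Pr_{|\Psi\rangle\sim\mu_H}(\ltg(\Psi)\ge K+\varepsilon)$, and the first statement of Corollary~\ref{cor:global-concentration} applied to the given $d_A$ supplies an $N_0'$ making the right side at most $\delta$ for all $d_B\ge 2^{N_0'}$. (One could instead push Theorem~\ref{thm:local-rB} through directly: with $d_A$ fixed the prefactor is a polynomial in $d_B$ of fixed degree $16N_A$ while the Gaussian factor is $\exp(-\Theta(d_B))$.) I do not anticipate a real obstacle; the only point needing care is that the prefactor $(\cdots)^{8N_A}$ in Theorem~\ref{thm:local-rB} is merely \emph{subexponential} in $d_A$ --- which is exactly why $N_A=\log_2 d_A$ grows only logarithmically --- so that the $\exp(-\Theta(d_A))$ tail dominates; the rest is bookkeeping with the explicit constants.
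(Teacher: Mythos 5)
Your proposal is correct and follows essentially the same route as the paper: the first half comes from driving the right-hand side of Theorem~\ref{thm:local-rB} to zero (the $\exp(-\Theta(d_A))$ factor beating the $\exp(O((\log d_A)^2))$ net prefactor), and the second half from $\ltl(\Psi)\le\ltg(\Psi)$ together with the global concentration result of Theorem~\ref{thm:global-rB}/Corollary~\ref{cor:global-concentration}, which is exactly the combination the paper invokes. No gaps.
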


With Cors.~\ref{cor:global-concentration} and \ref{cor:local-concentration} in place, our final results can be summarized in Fig.~\ref{table:concentration_results}.
\begin{figure}[htb] \centering
    \begin{tabular}{|c|c|c|}
        \hline
        & {$d_A$ fixed, $d_B \to\infty$} & {$d_B$ fixed, $d_A\to\infty$} \\
        \hline
        {$\ltl$ typical value} & $< \sqrt{\frac{2}{d_B+1}}$ & $< \sqrt{\frac{2}{d_B+1}}$ \\
        \hline
        {$\ltg$ typical value} & $< \sqrt{\frac{2}{d_B+1}}$ & $> 1- \sqrt{\frac{2d_B}{d_A}}$ \\
        \hline
    \end{tabular}
    \caption{ Typical values of $\ltg$ and $\ltl$ for the two possible parameter regimes in the framework of localizable entanglement.}
    \label{table:concentration_results}
\end{figure}

\begin{figure*}
\centering
\begin{tabular}{cc}
    \includegraphics[width=0.47\linewidth]{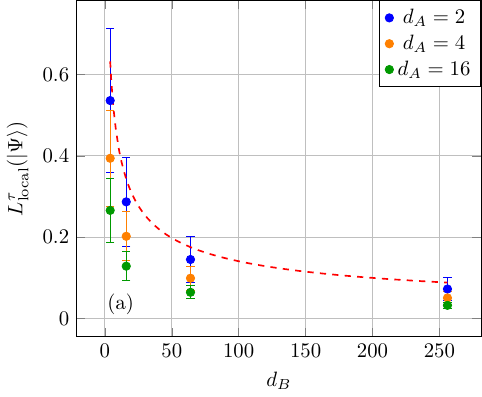}&
    \includegraphics[width=0.47\linewidth]{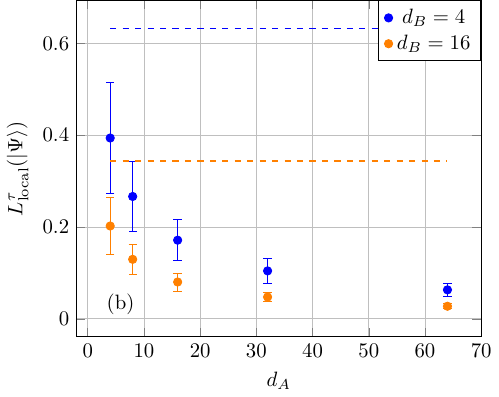}\\[2\tabcolsep]
    \includegraphics[width=0.47\linewidth]{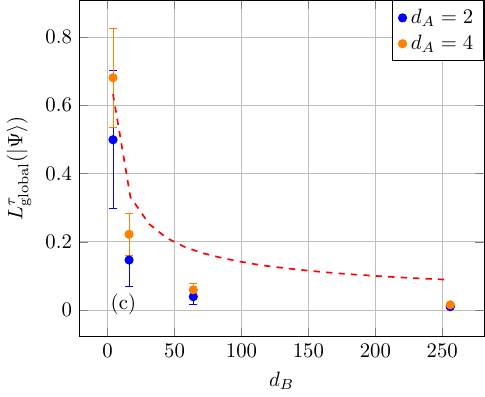}&
    \includegraphics[width=0.47\linewidth]{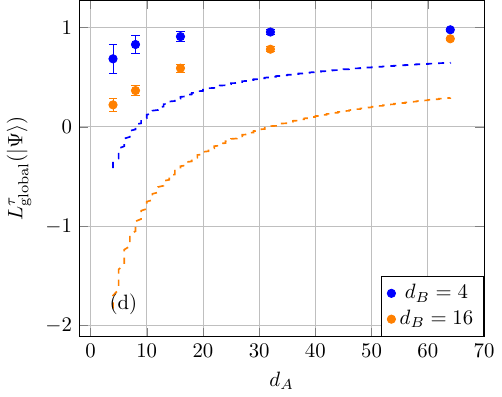}
\end{tabular}
\caption{Here we numerically visualize the scaling of $\ltg$ and $\ltl$. Equation~\eqref{eq:ltg-is-F} was used to compute the values of $\ltg$ on numerically sampled states and a COBYLA optimizer was used to compute the values of $\ltl$. Points represent numerically computed mean values of $\ltg$ and $\ltl$ across samples of 1000 Haar-random states. Error bars reflect one standard deviation for these samples. The dashed curves represent the typical bounds from Fig.~\ref{table:concentration_results}.}
\label{fig:concentration-results}
\end{figure*}

\subsubsection{Some implications}

The typical values of $\ltl$ and $\ltg$ derived in Fig.~\ref{table:concentration_results} imply that there is a near-maximal separation in the amount of entanglement we may localize on $B$ between local and global measurements whenever $d_A \gg d_B$. That is, we may localize a near-maximal value of the $n$-tangle on $B$ with global measurements on a typical state in this regime. On contrary, we may localize at most near-zero $n$-tangle on $B$ if we restrict to local measurements on a typical state. This observation implies a striking difference in the power of local versus global projective measurements. Furthermore, while it is perhaps unsurprising that we cannot localize much $n$-tangle when the size of the measured-out system $A$ is overshadowed by the size of the target system $B$, the sharp jump between near-minimal to near-maximal values of $\ltg$ when going from small $d_A$ to large $d_A$ is also notable and reminiscent of a phase transition.

Finally, we add that our concentration results have implications for our ability to extract states with maximal $n$-tangle, such as the GHZ state. Since the $n$-tangle assumes a maximum value of one, the definitions~\eqref{eq:mea-def} and~\eqref{eq:lme-def} of $\ltg$ and $\ltl$ imply that $\ltg$ and $\ltl$ upper bound the maximum probabilities with which we can extract a maximal $n$-tangle state using global and local measurements, respectively. Thus, an understanding of the typical values of $\ltg$ and $\ltl$ yields us information about the aforementioned state extraction probabilities for, \textit{e.g.}, GHZ states.

\section{Discussion and Future Directions}\label{sec:future-directions}

In this paper, we studied the extent to which multipartite entanglement, as measured by the $n$-tangle, may be localized via measurements in the limit of large system size. For this purpose, we rely on the LE and EA devised in~\cite{Vairogs2024} as benchmarks describing our ability to do so. This study focuses on describing the measure concentration phenomena of the LE and EA and their consequences for many-qubit random graph states and Haar-random states in high-dimensional spaces. 

We frame our discussion of the EA on random graph states in terms of a simple polynomial-time test introduced in~\cite{Vairogs2024} to determine whether certain graph state transformations are possible via measurements and local unitaries. This test is not always conclusive as to whether a fixed target graph state, such as a GHZ state, may be extracted from an input graph state. Thus, evaluating the probability that this test is conclusive for a typical graph state is crucial for evaluating how useful it is. We argue that the average EA over random graph states for many-qubit systems amounts to determining the probability that this test is conclusive and introduce a useful approximation for this probability (see~\eqref{eq:pr-sol-approx}). This probability decays to zero in the limit of infinite system size if the proportion of measured qubits is less than half the total system and approaches one otherwise. Thus, we expect that the test to be most useful in paring down potential resource states when we wish to measure modestly-sized subsystems to transform graph states. On the other hand, while the test is typically not conclusive as to whether or not a \textit{particular} graph state transformation is possible whenever we measure out more than half of the system, this result does guarantee that it is possible to transform a graph state into \textit{some} state of maximal $n$-tangle with global measurements on a subsystem and local unitaries. Returning to the EA picture, our result for the solution probability also implies that we can localize near-maximal values of the $n$-tangle for large systems if the share of the measured qubits forms a majority and only near-minimal values of $n$-tangle otherwise. 

One might also consider whether other useful information can be drawn from the test in the inconclusive case. Suppose we modify the test to declare that it \textit{is} possible to extract a fixed maximal $n$-tangle graph state $|G'\rangle_B$ from a source graph state $|G\rangle_{AB}$ with nonzero probability when the matrix equation \textit{does} have a solution. Suppose also that we do not modify the test's conclusion in the case that it \textit{does not} have a solution. For a random graph state, the probability that this approach incorrectly concludes that the transformation is impossible is zero. However, there are certainly examples of graph states for which a transformation into $|G'\rangle_B$ would be incorrectly declared possible. It is unclear what the probability of such an incorrect conclusion would be. In our view, it does not seem unreasonable that such a probability might simultaneously be low for large systems. We believe this scenario is worth further research. The techniques of~\cite{Ghosh2024} appear particularly useful for this purpose. 

Moving from graph states to Haar-random states, we derive explicit concentration inequalities for the LE and MEA. Due to the fact that the construction of the LE and EA involves a formidable optimization over all measurement bases, coupled with inconvenient dimensionality factors that arise through standard approaches involving $\varepsilon$-nets of \textit{states}, we introduce the notion of an $\varepsilon$-net of \textit{orthonormal bases} to derive our concentration inequalities, which may be of independent technical interest. In particular, we derive existence theorems for $\varepsilon$-nets of orthonormal bases that factorize as tensor products. Considering the limiting behavior of our concentration inequalities, we find that both the EA and LE concentrate near zero in the limit of infinite total system size if the subset of measured qubits is held constant at less than half the total system size. On the other hand, if the fraction of measured qubits is greater than half, then the EA concentrates near its maximal value of one~\cite{Vairogs2024} while the LE concentrates near zero. This asymptotic separation in the typical values of LE and EA shows that global measurements have a distinct advantage in localizing entanglement that only appears when the proportion of the measured system exceeds one-half and is not detectable by the $n$-tangle before then. 

In this direction, it would be interesting to see how our concentration results generalize when the LE and EA are defined by other entanglement measures or other localizable properties (e.g. those considered in Ref.~\cite{du2025Certifying}). The arguments used to prove our concentration results generally do not take into account detailed properties of the $n$-tangle, other than its Lipschitz continuity. Moreover, our concentration result for the LE in terms of the $n$-tangle essentially amounts to showing that the LE will concentrate near the mean value of the $n$-tangle itself. Thus, it seems plausible that our arguments could be used to show that when defined in terms of other Lipschitz-continuous entanglement measures, the LE and EA will concentrate near the mean values of the entanglement measure. However, our proof that the typical values of the LE and EA for the $n$-tangle separate relies on the concentration result for the EA from~\cite{Vairogs2024}, which uses an analytic expression unique to the $n$-tangle. Thus, it is unclear for what other entanglement measures we might observe a separation between local and global measurements in the amount of entanglement we may localize.

\begin{acknowledgments}
C.V. is supported by NSF grant No. 2137953. J.L.B is supported by a National Science Foundation Mathematical Sciences Postdoctoral Research Fellowship under Award No.~2402287 as well as an IQUIST Postdoctoral Fellowship. C.V. would like to acknowledge helpful discussions with Daniel Belkin, Jens Eisert, and You Zhou.
\end{acknowledgments}

\clearpage
\onecolumngrid
\bibliography{main}

\setcounter{section}{0}
\setcounter{figure}{0}
\renewcommand{\figurename}{Sup. Fig.}

\appendix
\section*{Appendix}
\maketitle

\section{Derivation of $p_s(\Eunif)$ estimate}\label{app:gs-concentration}

The goal for this section is to prove Theorem~\ref{thm:solution-prob}. Our basic tool will be the following Lemma. Recall that for arbitrary graph state $|G\rangle_{AB}$, we define $G_B \coloneqq \tr_A[|G\rangle\langle G|_{AB}]$.

\begin{lemma}\label{lemma:lt-is-tr-quot}
    Let $|G\rangle_{AB}$ be an arbitrary graph state. Then
    \begin{equation}
        L^\tau(|G\rangle_{AB}) = \frac{\tr[G_B \tilde{G}_B]}{\tr[G_B^2]}.
    \end{equation}
\end{lemma}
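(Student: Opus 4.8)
The plan is to combine the closed form $\ltg(|G\rangle_{AB}) = F(G_B, \widetilde G_B)$ from \eqref{eq:ltg-is-F} with the dichotomy of Theorem~\ref{thm:ltg-in-terms-of-matrices} and some structural facts about the reduced density matrix of a graph state. The key observation is that $G_B$ is always (a scalar multiple of) a projector, so the fidelity $F(G_B, \widetilde G_B)$ collapses to a simple trace expression. Recall that for a graph state $|G\rangle_{AB}$, the reduced state $G_B = \tr_A[|G\rangle\langle G|]$ is proportional to a projector: concretely, $G_B = \frac{1}{d_A}\sum_{\vec a \in \mathbb{F}_2^A} \ketbra{\phi_{\vec a}}{\phi_{\vec a}}$ where the $|\phi_{\vec a}\rangle$ are (unnormalized, mutually orthogonal or equal) stabilizer states, and one finds $G_B^2 = \frac{1}{d_A'} G_B$ for some integer $d_A'$ dividing $d_A$ — equivalently $G_B = \frac{1}{\rank(G_B)} \Pi$ for a projector $\Pi = \rank(G_B)\, G_B$. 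This makes $\tr[G_B^2] = 1/\rank(G_B)$.

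First I would establish that $G_B$ is proportional to a projector and record that $\widetilde{G}_B = \sigma_y^{\otimes N_B} G_B^\star \sigma_y^{\otimes N_B}$ is therefore also proportional to a projector of the \emph{same} rank (conjugation by a unitary and complex conjugation preserve the spectrum). Write $G_B = \frac{1}{\ell}\Pi$ and $\widetilde G_B = \frac{1}{\ell}\widetilde\Pi$ with $\Pi, \widetilde\Pi$ rank-$\ell$ projectors. Then a direct computation gives
\begin{equation}
    F(G_B, \widetilde G_B) = \tr\!\left[\sqrt{\sqrt{G_B}\,\widetilde G_B\,\sqrt{G_B}}\right] = \frac{1}{\ell}\tr\!\left[\sqrt{\Pi\widetilde\Pi\Pi}\right],
\end{equation}
using $\sqrt{G_B} = \frac{1}{\sqrt\ell}\Pi$. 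Meanwhile $\tr[G_B\widetilde G_B]/\tr[G_B^2] = \tr[\Pi\widetilde\Pi]/\tr[\Pi] = \tr[\Pi\widetilde\Pi\Pi]/\ell$. So the claim reduces to showing $\tr[\sqrt{\Pi\widetilde\Pi\Pi}] = \tr[\Pi\widetilde\Pi\Pi]$, i.e. that the positive operator $\Pi\widetilde\Pi\Pi$ has all nonzero eigenvalues equal to $1$.

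The cleanest way to see this is again via Theorem~\ref{thm:ltg-in-terms-of-matrices}: by that theorem $F(G_B,\widetilde G_B) \in \{0,1\}$. If $F = 0$ then $\Pi\widetilde\Pi\Pi = 0$ (the fidelity of two density matrices vanishes iff their supports are orthogonal), and both sides of the desired identity are $0$. If $F = 1$ then $G_B = \widetilde G_B$ — indeed fidelity one between density matrices forces equality — hence $\Pi = \widetilde\Pi$, so $\Pi\widetilde\Pi\Pi = \Pi$ is itself a projector and $\sqrt{\Pi\widetilde\Pi\Pi} = \Pi = \Pi\widetilde\Pi\Pi$, giving $\tr[\sqrt{\Pi\widetilde\Pi\Pi}] = \ell = \tr[\Pi\widetilde\Pi\Pi]$. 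In either case $\tr[\sqrt{\Pi\widetilde\Pi\Pi}] = \tr[\Pi\widetilde\Pi\Pi]$, and dividing by $\ell$ yields $F(G_B,\widetilde G_B) = \tr[G_B\widetilde G_B]/\tr[G_B^2]$, which is the assertion of the lemma by \eqref{eq:ltg-is-F}.

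The main obstacle — really the only non-routine point — is justifying that $G_B$ is proportional to a projector; everything else is bookkeeping once that structural fact is in hand. This is standard for stabilizer states: the reduced state of a stabilizer pure state on any subsystem is $\frac{1}{|S_B|}\sum_{g \in S_B} g$ where $S_B$ is the subgroup of the stabilizer supported entirely on $B$, and this is $2^{-N_B}\cdot 2^{k}$ times a projector of rank $2^{N_B - k}$ where $k = \log_2|S_B|$; I would cite this (e.g. via the stabilizer formalism or directly from properties used in \cite{Vairogs2024, zhou2022}) rather than re-derive it. An alternative, self-contained route that avoids invoking Theorem~\ref{thm:ltg-in-terms-of-matrices} would be to diagonalize $\Pi\widetilde\Pi\Pi$ using the fact that $\Pi$ and $\widetilde\Pi$ are both stabilizer-group projectors associated to cosets, so their product is again (proportional to) a projector or zero — but routing through Theorem~\ref{thm:ltg-in-terms-of-matrices} as above is shorter and suffices.
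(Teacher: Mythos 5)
Your proof is correct and, at its core, identical to the paper's: both arguments reduce to the dichotomy $F(G_B,\widetilde G_B)\in\{0,1\}$ from Theorem~\ref{thm:ltg-in-terms-of-matrices} together with Eq.~\eqref{eq:ltg-is-F}, handling $F=0$ via orthogonality of supports (so $\tr[G_B\widetilde G_B]=0$) and $F=1$ via $G_B=\widetilde G_B$ (so the quotient is $1$). The projector structure of $G_B$ and the reduction to $\tr[\sqrt{\Pi\widetilde\Pi\Pi}]=\tr[\Pi\widetilde\Pi\Pi]$ are correct but superfluous, since the two-case argument you end with applies directly to $G_B$ and $\widetilde G_B$ without ever using that $G_B$ is proportional to a projector.
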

\begin{proof}
    There are two cases:

    \noindent{\textbf{Case 1:}} The matrix equation $\Bgammaba \mathbf{x} = \D$ has a solution. By Theorem~\ref{thm:ltg-in-terms-of-matrices}, we have $L^\tau(|G\rangle_{AB}) = 1$. Equation~\eqref{eq:ltg-is-F} then implies that $F(G_B, \tilde{G}_B) = 1$, which implies that $G_B = \tilde{G}_B$. Hence, 
    \begin{equation}
        \frac{\tr[G_B\tilde{G}_B]}{\tr[G_B^2]} = 1 = L^\tau(|G\rangle_{AB}).
    \end{equation}

    \noindent{\textbf{Case 2:}} The matrix equation $\Bgammaba \mathbf{x} = \D$ has no solution. By Theorem~\ref{thm:ltg-in-terms-of-matrices}, we have $L^\tau(|G\rangle_{AB}) = 0$. Equation~\eqref{eq:ltg-is-F} then implies that $F(G_B, \tilde{G}_B) = 0$. Therefore, $G_B$ and $\tilde{G}_B$ have orthogonal support~\cite{Wilde2016}[Eq. 9.117]. Hence, $\tr[G_B \tilde{G}_B] = 0$, meaning that 
    \begin{equation}
        \frac{\tr[G_B\tilde{G}_B]}{\tr[G_B^2]} = 0 = L^\tau(|G\rangle_{AB}).
    \end{equation}

    Since equality holds in both cases, the statement of the lemma follows. 
\end{proof}

Combining Lemma~\ref{lemma:lt-is-tr-quot} with~\eqref{eq:prob-ltg-equiv}, it follows that
\begin{equation}\label{eq:ps-workable-expression}
    p_s(\Eunif) = \mathbb{E}_{|G\rangle \sim \Eunif}\left[ \frac{\tr[G_B \tilde{G}_B]}{\tr[G_B^2]}\right].
\end{equation}
We will ultimately use $\mathbb{E}[\tr G_B^2]$ and $\var[\tr G_B^2]$ from \citep[Thm.~1]{zhou2022} and \citep[Thm.~3]{zhou2022} respectively to construct upper and lower bounds on $\mathbb{E}\left[\frac{\tr[G_B\tilde{G}_B]}{\tr G_B^2}\right]$ using Chebyshev's inequality. To this end, we first need to compute $\mathbb{E}[\tr[G_B\tilde{G}_B]]$. The following two lemmas assist us in this calculation. \\

Let us introduce some notation. For $\mathbf{z} \in \mathbb{Z}_2^N$, denote by $\overline{\mathbf{z}} \in \mathbb{Z}_2^N$ the bitstring obtained by inverting each bit of $\mathbf{z}$, i.e. $\mathbf{z} \oplus \overline{\mathbf{z}} = \mathbf{1}$, where $\oplus$ is the bitwise xor/binary addition operator. Let $h(\mathbf{z})$ denote the Hamming weight (number of 1 bits) in $\mathbf{z}$. Let us label each qubit pair of the $N$ qubits by $e_1, \dots, e_M$ for $M = N(N- 1)/2$. For qubit pair $e_k = (i,j)$, let $\mathcal{E}_{e_k}$ denote the distribution over the set $\{I_{e_k}, \mathrm{CZ}_{e_k}\}$ with each operator occuring with probability $1.2$

\begin{lemma}\label{lemma:i-j-diff-set}
Suppose that $e_k = (i,j), i \in A, j \in B$. For any product $P$ of $CZ$ operators acting on arbitrary pairs o qubits that are not $(i,j)$, we have
\begin{align}
    \begin{aligned}
    \mathbb{E}_{U_{k} \sim \mathcal{E}_{e_{k}}} & {\left[\tr\left[\left(S_{A} \otimes Y^{\otimes 2}\right) P^{\otimes 2} U_{k}^{\otimes 2}\left(\left|\mathbf{z}_{1}\right\rangle\left\langle\mathbf{z}_{2}\right| \otimes\left|\mathbf{z}_{3}\right\rangle\left\langle\mathbf{z}_{4}\right|\right) U_{k}^{\otimes 2\dagger} P^{\otimes 2 \dagger}\right]\right] } \\
    & =F_{i j}\left(\mathbf{z}_{1}, \mathbf{z}_{2}, \mathbf{z}_{3}, \mathbf{z}_{4}\right) \tr\left[\left(S_{A} \otimes Y^{\otimes 2}\right) P^{\otimes 2}\left(\left|\mathbf{z}_{1}\right\rangle\left\langle\mathbf{z}_{2}\right| \otimes\left|\mathbf{z}_{3}\right\rangle\left\langle\mathbf{z}_{4}\right|\right) P^{\otimes 2 \dagger}\right]
    \end{aligned}
    \end{align}
    where 
    \begin{align}
    F_{i j}\left(\mathbf{z}_{1}, \mathbf{z}_{2}, \mathbf{z}_{3}, \mathbf{z}_{4}\right) = 
    \begin{cases}
        0 \text{ if } (\mathbf{z}_{1})_{i} \oplus (\mathbf{z}_{2})_{i}=1 \textnormal{ and } \left(\mathbf{z}_{1}\right)_{j} \oplus\left(\mathbf{z}_{3}\right)_{j} = 0\\
        1 \text{ otherwise}
    \end{cases}
    \end{align}
\end{lemma}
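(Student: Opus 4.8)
The plan is to average out the single edge operator $U_k$ first, exploiting the phase action of $\mathrm{CZ}_{ij}$ on computational basis states, and then to check that the scalar surviving this average coincides with $F_{ij}$ whenever the remaining trace does not already vanish. First I would use $\mathrm{CZ}_{ij}\ket{\mathbf z} = (-1)^{(\mathbf z)_i(\mathbf z)_j}\ket{\mathbf z}$ together with the Hermiticity of $I_{e_k}$ and $\mathrm{CZ}_{e_k}$: conjugating $\rho := \ket{\mathbf z_1}\bra{\mathbf z_2}\otimes\ket{\mathbf z_3}\bra{\mathbf z_4}$ by $U_k^{\otimes 2}$ leaves it unchanged when $U_k = I_{e_k}$ and multiplies it by $(-1)^{\Phi}$, with $\Phi := \bigoplus_{m=1}^{4}(\mathbf z_m)_i(\mathbf z_m)_j$, when $U_k = \mathrm{CZ}_{e_k}$. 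Averaging over $\mathcal E_{e_k}$ gives $\mathbb{E}_{U_k}\bigl[U_k^{\otimes 2}\rho\,U_k^{\otimes 2\dagger}\bigr] = \tfrac12\bigl(1+(-1)^{\Phi}\bigr)\rho$, so by linearity of the trace the left-hand side of the asserted identity equals $\tfrac12\bigl(1+(-1)^{\Phi}\bigr)\,\mathcal T$, where $\mathcal T := \tr\bigl[(S_A\otimes Y^{\otimes 2})P^{\otimes 2}\rho\,P^{\otimes 2\dagger}\bigr]$ is exactly the trace on the right-hand side. Since $\tfrac12(1+(-1)^{\Phi})$ and $F_{ij}$ both take values in $\{0,1\}$, it now suffices to show they agree whenever $\mathcal T\neq 0$; if $\mathcal T = 0$ both sides of the identity vanish.

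Next I would identify the support condition forced by $\mathcal T\neq 0$. As $P$ is a product of $\mathrm{CZ}$ gates it is diagonal, so $P^{\otimes 2}\rho\,P^{\otimes 2\dagger} = \pm\rho$ and $\mathcal T = 0$ exactly when $\tr[(S_A\otimes Y^{\otimes 2})\rho] = 0$. Writing each $\mathbf z_m = (\mathbf a_m\mid\mathbf b_m)$ in terms of its $A$- and $B$-parts, both $\rho$ and $S_A\otimes Y^{\otimes 2}$ (here $S_A$ is the swap of the two copies of $\mathcal H_A$, and $Y^{\otimes 2}$ is $\sigma_y^{\otimes N_B}$ on each of the two copies of $\mathcal H_B$) factor across the two $A$-copies and the two $B$-copies, so this trace is a product of an $A$-factor $\tr[S(\ket{\mathbf a_1}\bra{\mathbf a_2}\otimes\ket{\mathbf a_3}\bra{\mathbf a_4})] = \braket{\mathbf a_2}{\mathbf a_3}\braket{\mathbf a_4}{\mathbf a_1}$ and a $B$-factor which, using $\sigma_y^{\otimes N_B}\ket{\mathbf b} = i^{N_B}(-1)^{h(\mathbf b)}\ket{\overline{\mathbf b}}$, is a nonzero scalar times $\braket{\mathbf b_2}{\overline{\mathbf b_1}}\braket{\mathbf b_4}{\overline{\mathbf b_3}}$. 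Hence $\mathcal T\neq 0$ forces $\mathbf a_2 = \mathbf a_3$, $\mathbf a_1 = \mathbf a_4$, $\mathbf b_2 = \overline{\mathbf b_1}$, and $\mathbf b_4 = \overline{\mathbf b_3}$.

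Finally I would impose these four equalities and simplify $\Phi$. Setting $t := (\mathbf z_1)_i = (\mathbf z_4)_i$, $s := (\mathbf z_2)_i = (\mathbf z_3)_i$, $u := (\mathbf z_1)_j$, and $w := (\mathbf z_3)_j$, the constraints on the $B$-parts give $(\mathbf z_2)_j = 1\oplus u$ and $(\mathbf z_4)_j = 1\oplus w$, and a short computation in $\mathbb{F}_2$ collapses $\Phi$ to $(s\oplus t)(1\oplus u\oplus w)$. By the definition of $F_{ij}$, we have $F_{ij}(\mathbf z_1,\mathbf z_2,\mathbf z_3,\mathbf z_4) = 0$ precisely when $s\oplus t = 1$ and $u\oplus w = 0$, in which case $\Phi = 1$ so $\tfrac12(1+(-1)^{\Phi}) = 0$; and if $F_{ij} = 1$ then either $s\oplus t = 0$ or $u\oplus w = 1$, and either of these forces $\Phi = 0$ so $\tfrac12(1+(-1)^{\Phi}) = 1$. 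Thus $\tfrac12(1+(-1)^{\Phi}) = F_{ij}(\mathbf z_1,\mathbf z_2,\mathbf z_3,\mathbf z_4)$ whenever $\mathcal T\neq 0$, which completes the proof.

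I expect the main obstacle to be this last step: carefully tracking which bit positions lie in $A$ versus $B$, turning the $S_A$ and $\sigma_y^{\otimes N_B}$ factors into the correct support constraints on $\mathbf z_1,\dots,\mathbf z_4$, and then carrying out the $\mathbb{F}_2$ algebra so that the surviving prefactor matches the two-branch definition of $F_{ij}$ in every case. The averaging step and the reduction to the vanishing of $\mathcal T$ are routine.
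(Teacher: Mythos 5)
Your proposal is correct and follows essentially the same route as the paper's own proof: average the identity and $CZ$ branches to obtain the prefactor $\tfrac12\bigl(1+(-1)^{\Phi}\bigr)$, use the vanishing/nonvanishing of the trace (via the swap and $\sigma_y^{\otimes N_B}$ overlaps) to impose the constraints $\mathbf a_2=\mathbf a_3$, $\mathbf a_1=\mathbf a_4$, $\mathbf b_2=\overline{\mathbf b_1}$, $\mathbf b_4=\overline{\mathbf b_3}$, and check that the resulting phase matches $F_{ij}$. Your explicit treatment of the $\mathcal T=0$ case and the symbolic simplification $\Phi=(s\oplus t)(1\oplus u\oplus w)$ are marginally cleaner than the paper's truth-table verification, but the argument is the same.
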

\begin{proof}
Suppose \(U_{k}=I_{e_{k}}\). It is clear that the argument of the expectation value becomes
\begin{align}
\tr\left[\left(S_{A} \otimes Y^{\otimes 2}\right) P\left(\left|\mathbf{z}_{1}\right\rangle\left\langle\mathbf{z}_{2}\right| \otimes\left|\mathbf{z}_{3}\right\rangle\left\langle\mathbf{z}_{4}\right|\right) P^{\dagger}\right]
\end{align}
in this case.

Now suppose \(U_{k}=C Z_{e_{k}}\). First observe that on a single qubit $|x\rangle$, $\sigma_y|x\rangle = (-1)^xi|x\oplus1\rangle$. Thus, for a quantum state represented by bitstring $s$, $\sigma_y^{\otimes |s|}|x\rangle = (-1)^{h(x)}i^{|s|}|x\oplus\mathbf{1}\rangle$. Also, for any two qubits in a computational basis state, a $CZ$ gate only introduces a $\pm 1$ phase (more precisely, $CZ|i\rangle|j\rangle = (-1)^{i\wedge j}|i\rangle|ij\rangle$ for $i,j \in \{0,1\}$). Thus, we may write 
\begin{equation}
    P^{\otimes 2}(|\mathbf{z}_1\rangle\langle\mathbf{z}_2|\otimes|\mathbf{z}_3\rangle\langle \mathbf{z}_4|)P^{\otimes 2 \dagger} = c |\mathbf{z}_1\rangle\langle\mathbf{z}_2|\otimes|\mathbf{z}_3\rangle\langle \mathbf{z}_4|
\end{equation}
for some constant $c = \pm 1$. Then the argument of the expectation value assumes a value of
\begin{align}
\begin{aligned}
\tr & {\left[\left(S_{A} \otimes Y^{\otimes 2}\right) P^{\otimes 2} C Z_{(i, j)}^{\otimes 2}\left(\left|\mathbf{z}_{1}\right\rangle\left\langle\mathbf{z}_{2}\right| \otimes\left|\mathbf{z}_{3}\right\rangle\left\langle\mathbf{z}_{4}\right|\right) C Z_{(i, j)}^{\otimes 2 \dagger} P^{\otimes 2 \dagger}\right] } \\
& =(-1)^{\left(\mathbf{z}_{1}\right)_{i} \wedge\left(\mathbf{z}_{1}\right)_{j}}(-1)^{\left(\mathbf{z}_{3}\right)_{i} \wedge\left(\mathbf{z}_{3}\right)_{j}}(-1)^{\left(\mathbf{z}_{2}\right)_{i} \wedge\left(\mathbf{z}_{2}\right)_{j}}(-1)^{\left(\mathbf{z}_{4}\right)_{i} \wedge\left(\mathbf{z}_{4}\right)_{j}} \tr\left[\left(S_{A} \otimes Y^{\otimes 2}\right) P^{\otimes 2}\left(\left|\mathbf{z}_{1}\right\rangle\left\langle\mathbf{z}_{2}\right| \otimes\left|\mathbf{z}_{3}\right\rangle\left\langle\mathbf{z}_{4}\right|\right) P^{\otimes 2 \dagger}\right] \\
& =(-1)^{\left(\mathbf{z}_{1}\right)_{i} \wedge\left(\mathbf{z}_{1}\right)_{j}}(-1)^{\left(\mathbf{z}_{3}\right)_{i} \wedge\left(\mathbf{z}_{3}\right)_{j}}(-1)^{\left(\mathbf{z}_{2}\right)_{i} \wedge\left(\mathbf{z}_{2}\right)_{j}}(-1)^{\left(\mathbf{z}_{4}\right)_{i} \wedge\left(\mathbf{z}_{4}\right)_{j}} c \tr\left[\left(S_{A} \otimes Y^{\otimes 2}\right)\left(\left|\mathbf{z}_{1}\right\rangle\left\langle\mathbf{z}_{2}\right| \otimes\left|\mathbf{z}_{3}\right\rangle\left\langle\mathbf{z}_{4}\right|\right)\right] \\
& =(-1)^{\left(\mathbf{z}_{1}\right)_{i} \wedge\left(\mathbf{z}_{1}\right)_{j}}(-1)^{\left(\mathbf{z}_{3}\right)_{i} \wedge\left(\mathbf{z}_{3}\right)_{j}}(-1)^{\left(\mathbf{z}_{2}\right)_{i} \wedge\left(\mathbf{z}_{2}\right)_{j}}(-1)^{\left(\mathbf{z}_{4}\right)_{i} \wedge\left(\mathbf{z}_{4}\right)_{j}} c \tr\left[\left|\mathbf{z}_{3, A} Y\mathbf{z}_{1, B}\right\rangle\left\langle\mathbf{z}_{2, A} \mathbf{z}_{2, B}\right| \otimes\left|\mathbf{z}_{1, A} Y\mathbf{z}_{3, B}\right\rangle\left\langle\mathbf{z}_{4, A} \mathbf{z}_{4, B}\right|\right] \\
& =(-1)^{\left(\mathbf{z}_{1}\right)_{i} \wedge\left(\mathbf{z}_{1}\right)_{j}}(-1)^{\left(\mathbf{z}_{3}\right)_{i} \wedge\left(\mathbf{z}_{3}\right)_{j}}(-1)^{\left(\mathbf{z}_{2}\right)_{i} \wedge\left(\mathbf{z}_{2}\right)_{j}}(-1)^{\left(\mathbf{z}_{4}\right)_{i} \wedge\left(\mathbf{z}_{4}\right)_{j}} c \\
& \times (-1)^{h(z_{1B})+h(z_{3B})} i^{2N_B} \delta_{\mathbf{z}_{3, A}, \mathbf{z}_{2, A}} \delta_{\overline{\mathbf{z}_{1, B}}, \mathbf{z}_{2, B}} \delta_{\mathbf{z}_{1, A}, \mathbf{z}_{4, A}} \delta_{\overline{\mathbf{z}_{3, B}}, \mathbf{z}_{4, B}} \\
& =(-1)^{\left(\mathbf{z}_{1}\right)_{i} \wedge\left(\mathbf{z}_{1}\right)_{j}}(-1)^{\left(\mathbf{z}_{2}\right)_{i} \wedge\left(\mathbf{z}_{3}\right)_{j}}(-1)^{\left(\mathbf{z}_{2}\right)_{i} \wedge \overline{\left(\mathbf{z}_{1}\right)_{j}}}(-1)^{\left(\mathbf{z}_{1}\right)_{i} \wedge \overline{\left(\mathbf{z}_{3}\right)_{j}}} c \\
& \times (-1)^{h(z_{1B})+h(z_{3B})} i^{2N_B} \delta_{\mathbf{z}_{3, A}, \mathbf{z}_{2, A}} \delta_{\overline{\mathbf{z}_{1, B}}, \mathbf{z}_{2, B}} \delta_{\mathbf{z}_{1, A}, \mathbf{z}_{4, A}} \delta_{\overline{\mathbf{z}_{3, B}}, \mathbf{z}_{4, B}} \\
& =(-1)^{\left(\mathbf{z}_{1}\right)_{i} \wedge\left(\mathbf{z}_{1}\right)_{j}}(-1)^{\left(\mathbf{z}_{2}\right)_{i} \wedge\left(\mathbf{z}_{3}\right)_{j}}(-1)^{\left(\mathbf{z}_{2}\right)_{i} \wedge \overline{\left(\mathbf{z}_{1}\right)_{j}}}(-1)^{\left(\mathbf{z}_{1}\right)_{i} \wedge \overline{\left(\mathbf{z}_{3}\right)_{j}}} \tr\left[\left(S_{A} \otimes Y^{\otimes 2}\right) P^{\otimes 2}\left(\left|\mathbf{z}_{1}\right\rangle\left\langle\mathbf{z}_{2}\right| \otimes\left|\mathbf{z}_{3}\right\rangle\left\langle\mathbf{z}_{4}\right|\right) P^{\otimes 2 \dagger}\right]
\end{aligned}
\end{align}
By averaging the two values of the expectation value argument when \(U_{k}=I_{e_{k}}\) and \(U_{k}=C Z_{e_{k}}\), we get that

\begin{align}
\begin{aligned}
\mathbb{E}_{U_{k} \sim \mathcal{E}_{e_{k}}} & {\left[\tr\left[\left(S_{A} \otimes Y^{\otimes 2}\right) P^{\otimes 2} U_{k}^{\otimes 2}\left(\left|\mathbf{z}_{1}\right\rangle\left\langle\mathbf{z}_{2}\right| \otimes\left|\mathbf{z}_{3}\right\rangle\left\langle\mathbf{z}_{4}\right|\right) U_{k}^{\otimes 2 \dagger} P^{\otimes 2 \dagger}\right]\right] } \\
& =\frac{1}{2}\left(1+\phi\left(\mathbf{z}_{1}, \mathbf{z}_{2}, \mathbf{z}_{3}, \mathbf{z}_{4}\right)\right) \tr\left[\left(S_{A} \otimes Y^{\otimes 2}\right) P^{\otimes 2}\left(\left|\mathbf{z}_{1}\right\rangle\left\langle\mathbf{z}_{2}\right| \otimes\left|\mathbf{z}_{3}\right\rangle\left\langle\mathbf{z}_{4}\right|\right) P^{\otimes 2 \dagger}\right]
\end{aligned}
\end{align}
where
\begin{align}
\phi_{i j}\left(\mathbf{z}_{1}, \mathbf{z}_{2}, \mathbf{z}_{3}, \mathbf{z}_{4}\right)=(-1)^{\left(\mathbf{z}_{1}\right)_{i} \wedge\left(\mathbf{z}_{1}\right)_{j}}(-1)^{\left(\mathbf{z}_{2}\right)_{i} \wedge\left(\mathbf{z}_{3}\right)_{j}}(-1)^{\left(\mathbf{z}_{2}\right)_{i} \wedge \overline{\left(\mathbf{z}_{1}\right)_{j}}}(-1)^{\left(\mathbf{z}_{1}\right)_{i} \wedge \overline{\left(\mathbf{z}_{3}\right)_{j}}}
\end{align}

One may carefully work out the following table for the values of \(\left(1+\phi_{i j}\right) / 2\) :
\begin{center}
\begin{tabular}{ccccc}
 & 00 & 01 & 10 & 11 \\
00 & 1 & 1 & 0 & 1 \\
01 & 1 & 1 & 1 & 0 \\
10 & 0 & 1 & 1 & 1 \\
11 & 1 & 0 & 1 & 1 \\
\end{tabular}
\end{center}

Here, the columns are labeled by values of \(\left(\mathbf{z}_{1}\right)_{i},\left(\mathbf{z}_{1}\right)_{j}\) and the rows are labeled by values of \(\left(\mathbf{z}_{2}\right)_{i},\left(\mathbf{z}_{3}\right)_{j}\). It follows that \(F_{i j}\left(\mathbf{z}_{1}, \mathbf{z}_{2}, \mathbf{z}_{3}, \mathbf{z}_{4}\right) = \left(1+\phi_{i j}\right) / 2\) is zero if \((\mathbf{z}_{1})_{i} \oplus (\mathbf{z}_{2})_{i}=1 \wedge \left(\mathbf{z}_{1}\right)_{j} \oplus\left(\mathbf{z}_{3}\right)_{j} = 0\); otherwise, it is one. 
\end{proof}

\begin{lemma}\label{lemma:i-j-same-set}
Suppose that $e_k = (i,j), i,j \in B$.  For any product $P$ of $CZ$ operators acting on arbitrary pairs o qubits that are not $(i,j)$, we have
    \begin{align}
    \begin{aligned}
    \mathbb{E}_{U_{k} \sim \mathcal{E}_{e_{k}}} & {\left[\tr\left[\left(S_{A} \otimes Y^{\otimes 2}\right) P^{\otimes 2} U_{k}^{\otimes 2}\left(\left|\mathbf{z}_{1}\right\rangle\left\langle\mathbf{z}_{2}\right| \otimes\left|\mathbf{z}_{3}\right\rangle\left\langle\mathbf{z}_{4}\right|\right) U_{k}^{\otimes 2\dagger} P^{\otimes 2 \dagger}\right]\right] } \\
    & =H_{i j}\left(\mathbf{z}_{1}, \mathbf{z}_{2}, \mathbf{z}_{3}, \mathbf{z}_{4}\right) \tr\left[\left(S_{A} \otimes Y^{\otimes 2}\right) P^{\otimes 2}\left(\left|\mathbf{z}_{1}\right\rangle\left\langle\mathbf{z}_{2}\right| \otimes\left|\mathbf{z}_{3}\right\rangle\left\langle\mathbf{z}_{4}\right|\right) P^{\otimes 2 \dagger}\right]
    \end{aligned}
    \end{align}
    where 
    \begin{align}
    H_{i j}\left(\mathbf{z}_{1}, \mathbf{z}_{2}, \mathbf{z}_{3}, \mathbf{z}_{4}\right) = 
    \begin{cases}
        1 \text{ if } (\mathbf{z}_{1})_{i} \oplus (\mathbf{z}_1)_{j} = \left(\mathbf{z}_{3}\right)_{i} \oplus\left(\mathbf{z}_{3}\right)_{j}\\
        0 \text{ otherwise}
    \end{cases}
    \end{align}
\end{lemma}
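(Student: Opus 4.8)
The plan is to reuse the two-case template from the proof of Lemma~\ref{lemma:i-j-diff-set}, the only difference being that here \emph{both} endpoints $i,j$ of $e_k$ lie in $B$ instead of one lying in $A$ and one in $B$. Since $\mathcal{E}_{e_k}$ assigns equal probability to $I_{e_k}$ and $\mathrm{CZ}_{e_k}$, the left-hand side is the average of the trace argument at $U_k = I_{e_k}$ and at $U_k = \mathrm{CZ}_{e_k}$. First I would note that for $U_k = I_{e_k}$ the argument is exactly $\tr[(S_A \otimes Y^{\otimes 2}) P^{\otimes 2}(|\mathbf{z}_1\rangle\langle\mathbf{z}_2| \otimes |\mathbf{z}_3\rangle\langle\mathbf{z}_4|)P^{\otimes 2 \dagger}]$. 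For $U_k = \mathrm{CZ}_{e_k}$ I would argue, exactly as before, that since $\mathrm{CZ}$ is diagonal in the computational basis, conjugating $|\mathbf{z}_1\rangle\langle\mathbf{z}_2| \otimes |\mathbf{z}_3\rangle\langle\mathbf{z}_4|$ by $\mathrm{CZ}_{(i,j)}^{\otimes 2}$ merely multiplies it by the phase $(-1)^{(\mathbf{z}_1)_i \wedge (\mathbf{z}_1)_j + (\mathbf{z}_2)_i \wedge (\mathbf{z}_2)_j + (\mathbf{z}_3)_i \wedge (\mathbf{z}_3)_j + (\mathbf{z}_4)_i \wedge (\mathbf{z}_4)_j}$, that pulling $P^{\otimes 2}$ through contributes a global sign $c = \pm 1$, and that evaluating $\tr[(S_A \otimes Y^{\otimes 2})(|\mathbf{z}_1\rangle\langle\mathbf{z}_2| \otimes |\mathbf{z}_3\rangle\langle\mathbf{z}_4|)]$ produces the factor $(-1)^{h(\mathbf{z}_{1,B}) + h(\mathbf{z}_{3,B})} i^{2N_B}$ together with the Kronecker deltas $\delta_{\mathbf{z}_{3,A},\mathbf{z}_{2,A}}\,\delta_{\overline{\mathbf{z}_{1,B}},\mathbf{z}_{2,B}}\,\delta_{\mathbf{z}_{1,A},\mathbf{z}_{4,A}}\,\delta_{\overline{\mathbf{z}_{3,B}},\mathbf{z}_{4,B}}$, precisely as in the earlier proof. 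All of $c$, the power of $i$, the Hamming-weight sign, and the deltas are common to both sides of the asserted identity, so they cancel, and it remains only to evaluate the $\mathrm{CZ}$ phase on the support of those deltas.

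Next I would simplify that phase. Because $i,j \in B$, the deltas force $(\mathbf{z}_2)_i = \overline{(\mathbf{z}_1)_i}$, $(\mathbf{z}_2)_j = \overline{(\mathbf{z}_1)_j}$, $(\mathbf{z}_4)_i = \overline{(\mathbf{z}_3)_i}$ and $(\mathbf{z}_4)_j = \overline{(\mathbf{z}_3)_j}$ on the support, so the exponent of the phase becomes $(\mathbf{z}_1)_i \wedge (\mathbf{z}_1)_j + \overline{(\mathbf{z}_1)_i} \wedge \overline{(\mathbf{z}_1)_j} + (\mathbf{z}_3)_i \wedge (\mathbf{z}_3)_j + \overline{(\mathbf{z}_3)_i} \wedge \overline{(\mathbf{z}_3)_j}$. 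The elementary identity $a \wedge b + \overline{a} \wedge \overline{b} \equiv 1 \oplus a \oplus b \pmod 2$, checked on the four bit-pairs, then collapses this modulo $2$ to $(\mathbf{z}_1)_i \oplus (\mathbf{z}_1)_j \oplus (\mathbf{z}_3)_i \oplus (\mathbf{z}_3)_j$. Hence the $\mathrm{CZ}$ contribution equals $\phi_{ij} := (-1)^{(\mathbf{z}_1)_i \oplus (\mathbf{z}_1)_j \oplus (\mathbf{z}_3)_i \oplus (\mathbf{z}_3)_j}$ times the same trace as in the identity case, and averaging the two cases gives the scalar coefficient $\tfrac{1}{2}(1+\phi_{ij})$. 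This equals $1$ exactly when $(\mathbf{z}_1)_i \oplus (\mathbf{z}_1)_j = (\mathbf{z}_3)_i \oplus (\mathbf{z}_3)_j$ and $0$ otherwise, which is precisely $H_{ij}(\mathbf{z}_1,\mathbf{z}_2,\mathbf{z}_3,\mathbf{z}_4)$, so the lemma follows.

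Essentially every step is a transcription of the proof of Lemma~\ref{lemma:i-j-diff-set} with the $A$/$B$ roles of the coordinates $i,j$ interchanged, so I do not anticipate a real obstacle; the single new computation is the short mod-$2$ simplification of the $\mathrm{CZ}$ phase. The only point that needs care is the order of operations: the phase must be reduced \emph{after} imposing the support deltas coming from $S_A \otimes Y^{\otimes 2}$, since it only takes the clean form $(-1)^{(\mathbf{z}_1)_i \oplus (\mathbf{z}_1)_j \oplus (\mathbf{z}_3)_i \oplus (\mathbf{z}_3)_j}$ there, while off the support both sides of the claimed equation vanish and there is nothing to prove.
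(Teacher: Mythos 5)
Your proposal is correct and follows essentially the same route as the paper's proof: average the identity and $\mathrm{CZ}$ cases, track the diagonal phase through $P^{\otimes 2}$, impose the Kronecker deltas coming from $S_A \otimes Y^{\otimes 2}$ (which, since $i,j\in B$, flip the relevant bits), and read off $\tfrac{1}{2}(1+\phi_{ij}) = H_{ij}$. The only difference is that you replace the paper's explicit truth table with the identity $a\wedge b + \bar a\wedge\bar b \equiv 1\oplus a\oplus b \pmod 2$, which is a clean (and arguably tidier) way to reach the same conclusion.
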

\begin{proof}
The proof is very similar to the proof of the preceding lemma. 
\begin{align}
\begin{aligned}
\tr & {\left[\left(S_{A} \otimes Y^{\otimes 2}\right) P^{\otimes 2} C Z_{(i, j)}^{\otimes 2}\left(\left|\mathbf{z}_{1}\right\rangle\left\langle\mathbf{z}_{2}\right| \otimes\left|\mathbf{z}_{3}\right\rangle\left\langle\mathbf{z}_{4}\right|\right) C Z_{(i, j)}^{\otimes 2 \dagger} P^{\otimes 2 \dagger}\right] } \\
& =(-1)^{\left(\mathbf{z}_{1}\right)_{i} \wedge\left(\mathbf{z}_{1}\right)_{j}}(-1)^{\left(\mathbf{z}_{3}\right)_{i} \wedge\left(\mathbf{z}_{3}\right)_{j}}(-1)^{\left(\mathbf{z}_{2}\right)_{i} \wedge\left(\mathbf{z}_{2}\right)_{j}}(-1)^{\left(\mathbf{z}_{4}\right)_{i} \wedge\left(\mathbf{z}_{4}\right)_{j}} \tr\left[\left(S_{A} \otimes Y^{\otimes 2}\right) P^{\otimes 2}\left(\left|\mathbf{z}_{1}\right\rangle\left\langle\mathbf{z}_{2}\right| \otimes\left|\mathbf{z}_{3}\right\rangle\left\langle\mathbf{z}_{4}\right|\right) P^{\otimes 2 \dagger}\right] \\
& =(-1)^{\left(\mathbf{z}_{1}\right)_{i} \wedge\left(\mathbf{z}_{1}\right)_{j}}(-1)^{\left(\mathbf{z}_{3}\right)_{i} \wedge\left(\mathbf{z}_{3}\right)_{j}}(-1)^{\left(\mathbf{z}_{2}\right)_{i} \wedge\left(\mathbf{z}_{2}\right)_{j}}(-1)^{\left(\mathbf{z}_{4}\right)_{i} \wedge\left(\mathbf{z}_{4}\right)_{j}} c \tr\left[\left(S_{A} \otimes Y^{\otimes 2}\right)\left(\left|\mathbf{z}_{1}\right\rangle\left\langle\mathbf{z}_{2}\right| \otimes\left|\mathbf{z}_{3}\right\rangle\left\langle\mathbf{z}_{4}\right|\right)\right] \\
& =(-1)^{\left(\mathbf{z}_{1}\right)_{i} \wedge\left(\mathbf{z}_{1}\right)_{j}}(-1)^{\left(\mathbf{z}_{3}\right)_{i} \wedge\left(\mathbf{z}_{3}\right)_{j}}(-1)^{\left(\mathbf{z}_{2}\right)_{i} \wedge\left(\mathbf{z}_{2}\right)_{j}}(-1)^{\left(\mathbf{z}_{4}\right)_{i} \wedge\left(\mathbf{z}_{4}\right)_{j}} c \tr\left[\left|\mathbf{z}_{3, A} Y\mathbf{z}_{1, B}\right\rangle\left\langle\mathbf{z}_{2, A} \mathbf{z}_{2, B}\right| \otimes\left|\mathbf{z}_{1, A} Y\mathbf{z}_{3, B}\right\rangle\left\langle\mathbf{z}_{4, A} \mathbf{z}_{4, B}\right|\right] \\
& =(-1)^{\left(\mathbf{z}_{1}\right)_{i} \wedge\left(\mathbf{z}_{1}\right)_{j}}(-1)^{\left(\mathbf{z}_{3}\right)_{i} \wedge\left(\mathbf{z}_{3}\right)_{j}}(-1)^{\left(\mathbf{z}_{2}\right)_{i} \wedge\left(\mathbf{z}_{2}\right)_{j}}(-1)^{\left(\mathbf{z}_{4}\right)_{i} \wedge\left(\mathbf{z}_{4}\right)_{j}} c \\
&\times (-1)^{h(z_{1B})+h(z_{3B})} i^{2N_B} \delta_{\mathbf{z}_{3, A}, \mathbf{z}_{2, A}} \delta_{\overline{\mathbf{z}_{1, B}}, \mathbf{z}_{2, B}} \delta_{\mathbf{z}_{1, A}, \mathbf{z}_{4, A}} \delta_{\overline{\mathbf{z}_{3, B}}, \mathbf{z}_{4, B}} \\
& =(-1)^{\left(\mathbf{z}_{1}\right)_{i} \wedge\left(\mathbf{z}_{1}\right)_{j}}(-1)^{\left(\mathbf{z}_{3}\right)_{i} \wedge\left(\mathbf{z}_{3}\right)_{j}}(-1)^{\overline{\left(\mathbf{z}_{1}\right)_{i}} \wedge \overline{\left(\mathbf{z}_{1}\right)_{j}}} (-1)^{\overline{\left(\mathbf{z}_{3}\right)_{i}} \wedge \overline{\left(\mathbf{z}_{3}\right)_{j}}} c \\
&\times (-1)^{h(z_{1B})+h(z_{3B})} i^{2N_B} \delta_{\mathbf{z}_{3, A}, \mathbf{z}_{2, A}} \delta_{\overline{\mathbf{z}_{1, B}}, \mathbf{z}_{2, B}} \delta_{\mathbf{z}_{1, A}, \mathbf{z}_{4, A}} \delta_{\overline{\mathbf{z}_{3, B}}, \mathbf{z}_{4, B}} \\
& =(-1)^{\left(\mathbf{z}_{1}\right)_{i} \wedge\left(\mathbf{z}_{1}\right)_{j}}(-1)^{\left(\mathbf{z}_{3}\right)_{i} \wedge\left(\mathbf{z}_{3}\right)_{j}}(-1)^{\overline{\left(\mathbf{z}_{1}\right)_{i}} \wedge \overline{\left(\mathbf{z}_{1}\right)_{j}}} (-1)^{\overline{\left(\mathbf{z}_{3}\right)_{i}} \wedge \overline{\left(\mathbf{z}_{3}\right)_{j}}} \tr\left[\left(S_{A} \otimes Y^{\otimes 2}\right) P^{\otimes 2}\left(\left|\mathbf{z}_{1}\right\rangle\left\langle\mathbf{z}_{2}\right| \otimes\left|\mathbf{z}_{3}\right\rangle\left\langle\mathbf{z}_{4}\right|\right) P^{\otimes 2 \dagger}\right]
\end{aligned}
\end{align}
By averaging the two values of the expectation value argument when \(U_{k}=I_{e_{k}}\) and \(U_{k}=C Z_{e_{k}}\), we get that

\begin{align}
\begin{aligned}
\mathbb{E}_{U_{k} \sim \mathcal{E}_{e_{k}}} & {\left[\tr\left[\left(S_{A} \otimes Y^{\otimes 2}\right) P^{\otimes 2} U_{k}^{\otimes 2}\left(\left|\mathbf{z}_{1}\right\rangle\left\langle\mathbf{z}_{2}\right| \otimes\left|\mathbf{z}_{3}\right\rangle\left\langle\mathbf{z}_{4}\right|\right) U_{k}^{\otimes 2 \dagger} P^{\otimes 2 \dagger}\right]\right] } \\
& =\frac{1}{2}\left(1+\phi\left(\mathbf{z}_{1}, \mathbf{z}_{2}, \mathbf{z}_{3}, \mathbf{z}_{4}\right)\right) \tr\left[\left(S_{A} \otimes Y^{\otimes 2}\right) P^{\otimes 2}\left(\left|\mathbf{z}_{1}\right\rangle\left\langle\mathbf{z}_{2}\right| \otimes\left|\mathbf{z}_{3}\right\rangle\left\langle\mathbf{z}_{4}\right|\right) P^{\otimes 2 \dagger}\right]
\end{aligned}
\end{align}
where
\begin{align}
\phi_{i j}\left(\mathbf{z}_{1}, \mathbf{z}_{2}, \mathbf{z}_{3}, \mathbf{z}_{4}\right)=(-1)^{\left(\mathbf{z}_{1}\right)_{i} \wedge\left(\mathbf{z}_{1}\right)_{j}}(-1)^{\left(\mathbf{z}_{3}\right)_{i} \wedge\left(\mathbf{z}_{3}\right)_{j}}(-1)^{\overline{\left(\mathbf{z}_{1}\right)_{i}} \wedge \overline{\left(\mathbf{z}_{1}\right)_{j}}}(-1)^{\overline{\left(\mathbf{z}_{3}\right)_{i}} \wedge \overline{\left(\mathbf{z}_{3}\right)_{j}}}
\end{align}

One may carefully work out the following table for the values of \(\left(1+\phi_{i j}\right) / 2\) :
\begin{center}
\begin{tabular}{ccccc}
 & 00 & 01 & 10 & 11 \\
00 & 0 & 1 & 1 & 0 \\
01 & 1 & 0 & 0 & 1 \\
10 & 1 & 0 & 0 & 1 \\
11 & 0 & 1 & 1 & 0 \\
\end{tabular}
\end{center}

Here, the columns are labeled by values of \(\left(\mathbf{z}_{1}\right)_{i},\left(\mathbf{z}_{1}\right)_{j}\) and the rows are labeled by values of \(\left(\mathbf{z}_{3}\right)_{i},\left(\mathbf{z}_{3}\right)_{j}\). It follows that \(H_{i j}\left(\mathbf{z}_{1}, \mathbf{z}_{2}, \mathbf{z}_{3}, \mathbf{z}_{4}\right) = \left(1+\phi_{i j}\right) / 2\) is one if \((\mathbf{z}_{1})_{i} \oplus (\mathbf{z}_1)_{j} = \left(\mathbf{z}_{3}\right)_{i} \oplus\left(\mathbf{z}_{3}\right)_{j}\); otherwise, it is 0.
\end{proof}

Now we are ready to derive a expression for $\mathbb{E}[\tr[G_B\tilde{G}_B]]$, as follows: 

\begin{lemma} \label{lem:approximation}
\begin{align}
\mathbb{E}_{|G\rangle_{AB} \sim \Eunif}\left[\tr\left(G_B\tilde{G}_B\right)\right] = \frac{d_{A}+1}{d_{A} d_{B}}.
\end{align}
\end{lemma}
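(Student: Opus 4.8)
The plan is to compute the expectation on two copies of the system and then average over the random graph edge by edge, invoking Lemmas~\ref{lemma:i-j-diff-set} and~\ref{lemma:i-j-same-set}. The first step is to move to a doubled Hilbert space. Using the swap identity $\tr(XZ)=\tr[(X\otimes Z)S]$ together with $\tilde{G}_B=\sigma_y^{\otimes N_B}G_B^\star\sigma_y^{\otimes N_B}$ and the fact that graph states have real amplitudes in the computational basis (so that $G_B^\star=G_B$ and $|G^\star\rangle=|G\rangle$), one gets
\[ \tr\big(G_B\tilde{G}_B\big)=\tr\big[(S_A\otimes Y^{\otimes 2})\,\big(|G\rangle\langle G|\big)^{\otimes 2}\big], \]
where $S_A$ swaps the two copies of $\mathcal{H}_A$ and $Y^{\otimes 2}$ denotes $\sigma_y^{\otimes N_B}$ on the $B$ part of each copy --- exactly the object appearing in the two lemmas above. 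I would then discard the edges internal to $A$: writing $|G\rangle=V_A|G_0\rangle$ with $V_A=\prod_{\{i,j\}\subset A}\mathrm{CZ}_{ij}$ acting only on $\mathcal{H}_A$, cyclicity of the partial trace gives $G_B=\tr_A|G_0\rangle\langle G_0|$, so $\tr(G_B\tilde{G}_B)$ depends only on the edges between $A$ and $B$ and the edges within $B$; hence the expectation over $\Eunif$ equals the expectation over independent fair coins $U_k\sim\mathcal{E}_{e_k}$ ranging only over those $N_AN_B+\binom{N_B}{2}$ pairs.

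Next, expand $\big((|+\rangle\langle+|)^{\otimes N}\big)^{\otimes 2}=2^{-2N}\sum_{\mathbf{z}_1,\mathbf{z}_2,\mathbf{z}_3,\mathbf{z}_4}|\mathbf{z}_1\rangle\langle\mathbf{z}_2|\otimes|\mathbf{z}_3\rangle\langle\mathbf{z}_4|$ and peel the $\mathbb{E}_{U_k}$ off one edge at a time. Since $\mathrm{CZ}$ gates commute, any $U_k$ can be brought innermost with the remaining gates playing the role of $P$ in the lemmas, so averaging over edge $e_k=(i,j)$ produces a scalar depending only on $\mathbf{z}_1,\dots,\mathbf{z}_4$: the factor $F_{ij}$ if $i\in A,\,j\in B$ (Lemma~\ref{lemma:i-j-diff-set}) and $H_{ij}$ if $i,j\in B$ (Lemma~\ref{lemma:i-j-same-set}). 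These scalars pull out of the sum over the $\mathbf{z}$'s, leaving the residual trace $\tr[(S_A\otimes Y^{\otimes 2})(|\mathbf{z}_1\rangle\langle\mathbf{z}_2|\otimes|\mathbf{z}_3\rangle\langle\mathbf{z}_4|)]$, which a direct computation (of the kind carried out in the lemma proofs) evaluates to
\[ (-1)^{h(\mathbf{z}_{1,B})+h(\mathbf{z}_{3,B})}\,i^{2N_B}\,\delta_{\mathbf{z}_{2,A},\mathbf{z}_{3,A}}\,\delta_{\mathbf{z}_{1,A},\mathbf{z}_{4,A}}\,\delta_{\mathbf{z}_{2,B},\overline{\mathbf{z}_{1,B}}}\,\delta_{\mathbf{z}_{4,B},\overline{\mathbf{z}_{3,B}}}. \]

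Imposing these deltas leaves $\mathbf{z}_{1,A},\mathbf{z}_{2,A}\in\mathbb{F}_2^A$ and $\mathbf{z}_{1,B},\mathbf{z}_{3,B}\in\mathbb{F}_2^B$ free. Set $s\coloneqq\mathbf{z}_{1,A}\oplus\mathbf{z}_{2,A}$ and $t\coloneqq\mathbf{z}_{1,B}\oplus\mathbf{z}_{3,B}$. One checks that $F_{ij}$ vanishes precisely when $s_i=1$ and $t_j=0$, that $H_{ij}=1$ precisely when $t_i=t_j$, and that $(-1)^{h(\mathbf{z}_{1,B})+h(\mathbf{z}_{3,B})}=(-1)^{h(t)}$; moreover $i^{2N_B}=(-1)^{N_B}=1$ since $N_B$ is even. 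Summing out the remaining freedom at fixed $s$ and $t$ gives a factor $d_Ad_B$, and $2^{-2N}d_Ad_B=2^{-N}$, so
\[ \mathbb{E}_{|G\rangle\sim\Eunif}\big[\tr(G_B\tilde{G}_B)\big]=2^{-N}\sum_{s\in\mathbb{F}_2^A}\sum_{t\in\mathbb{F}_2^B}(-1)^{h(t)}\Big(\prod_{i\in A,\,j\in B}F_{ij}\Big)\Big(\prod_{\{i,j\}\subset B}H_{ij}\Big). \]
The $H$-product is nonzero only when $t$ is constant, i.e.\ $t=\mathbf{0}$ or $t=\mathbf{1}$. For $t=\mathbf{0}$ the $F$-product forces $s=\mathbf{0}$, giving one surviving term equal to $+1$; for $t=\mathbf{1}$ every $F_{ij}=1$ and $(-1)^{h(\mathbf{1})}=(-1)^{N_B}=+1$, so all $d_A$ values of $s$ survive with contribution $+1$. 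Hence the double sum equals $1+d_A$, and the expectation is $2^{-N}(d_A+1)=(d_A+1)/(d_Ad_B)$, as claimed.

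Granting the two preceding lemmas, which already carry out the delicate two-copy phase bookkeeping, the real obstacle is the final combinatorial collapse --- in particular noticing that the factors $H_{ij}$ from edges within $B$ pin $t$ to a constant vector, after which the factors $F_{ij}$ split the four-fold lattice sum into the two transparent cases $t=\mathbf{0}$ and $t=\mathbf{1}$. Evenness of $N_B$ enters only to trivialize the phases $i^{2N_B}$ and $(-1)^{h(\mathbf{1})}$; without it the $t=\mathbf{1}$ contribution would carry a sign $(-1)^{N_B}=-1$ and the stated value would change.
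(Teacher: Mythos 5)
Your proposal is correct and follows essentially the same route as the paper's proof: the swap-trick reduction to $\tr[(S_A\otimes Y^{\otimes 2})(|G\rangle\langle G|)^{\otimes 2}]$, discarding the $A$-internal edges by unitary invariance of the partial trace, the computational-basis expansion, and edge-by-edge averaging via Lemmas~\ref{lemma:i-j-diff-set} and~\ref{lemma:i-j-same-set}. Your change of variables $s=\mathbf{z}_{1,A}\oplus\mathbf{z}_{2,A}$, $t=\mathbf{z}_{1,B}\oplus\mathbf{z}_{3,B}$ is just a tidier packaging of the paper's two-case enumeration ($t=\mathbf{1}$ giving $d_A^2 d_B$ tuples, $t=s=\mathbf{0}$ giving $d_A d_B$), and both land on $(d_A+1)/(d_A d_B)$.
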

\begin{proof}
We compute $\mathbb{E}(\tr[G_B\tilde{G}_B])$ using the same approach as $\mathbb{E}(\tr[G_B^2])$. Observe $\tilde{G}_B = \sigma_y^{\otimes N_B} G_B \sigma_y^{\otimes N_B}$ (because $\sigma_y$ is Hermitian). Denote $Y:= \sigma_y^{\otimes N_B}$. Then  $\mathbb{E}(\tr[G_B\tilde{G}_B]) = \mathbb{E}(\tr[G_BYG_BY])$. \\

Note that for any function \(F\) over quantum states, we have
\begin{align}
\begin{aligned}
\mathbb{E}_{|G\rangle \sim \Eunif}[F(|G\rangle\langle G|)] & =\frac{1}{2^{N(N-1) / 2}} \sum_{i_{1}=0}^{1} \cdots \sum_{i_{M}=0}^{1} F\left(C Z_{e_{1}}^{i_{1}} \ldots C Z_{e_{M}}^{i_{M}}(|+\rangle\langle+|)^{\otimes N}\left(C Z_{e_{M}}^{i_{M}}\right)^{\dagger} \ldots\left(C Z_{e_{1}}\right)^{\dagger}\right) \\
& =\mathbb{E}_{U_{1} \sim \mathcal{E}_{e_{1}}}\left[\ldots \mathbb{E}_{U_M \sim \mathcal{E}_{e_{M}}}\left[F\left(U_{1} \ldots U_{K}(|+\rangle\langle+|)^{\otimes N} U_{M}^{\dagger} \ldots U_{1}^{\dagger}\right)\right] \ldots\right]
\end{aligned}
\end{align}
where \(\mathcal{E}_{e_{i}}\) denotes the distribution over the the set \(\left\{I_{e_{i}}, C Z_{e_{M}}\right\}\), with each operator occurring with probability \(1 / 2\). Essentially, we can break the expectation value over all graph states into a nested expectation value of $\mathcal{E}_{e_k}$ over all qubit pairs $e_k$.\\

For any state $|\Psi\rangle \in \mathcal{H}_{A} \otimes \mathcal{H}_{B}$ and any unitary $U_{A}$ respectively acting on solely the $A$ subsystem, we have
\begin{align}
\tr_A[|\Psi\rangle\langle\Psi|] = \tr_A[(U_A\otimes I)|\Psi\rangle\langle\Psi|(U_A\otimes I)^\dagger]
\end{align}
Thus, for any constant unitary $U_A$ over the $A$ subsystem, we have
\begin{align}
\mathbb{E}_{|G\rangle \sim \Eunif}[\tr(G_B Y G_B Y)] = \mathbb{E}_{|G\rangle \sim \Eunif}[\tr(G_B' Y G_B' Y)], \text{where } G_B' = \tr_A[(U_A \otimes I) |G\rangle\langle G| (U_A \otimes I)^\dagger]
\end{align}
Assuming without loss of generality that the qubit pairs consisting of both qubits in \(A\) make up the last \(M-K\) of the pairs \(e_{1}, \ldots, e_{M}\), we may consequently write that
\begin{align}
\begin{aligned}
\mathbb{E}_{|G\rangle \sim \Eunif}\left[\tr\left[G_{B}YG_BY\right]\right] & =\mathbb{E}_{U_{1} \sim \mathcal{E}_{e_{1}}}\left[\ldots \mathbb{E}_{U_{M} \sim \mathcal{E}_{e_{M}}}\left[\tr\left[\left[\tr_A\left[U_{1} \ldots U_{M}\left((|+\rangle\langle+|)^{\otimes N}\right) U_{M}^\dagger \ldots U_{1}^\dagger\right]\cdot Y\right]^{2}\right]\right]\right] \\
& =\mathbb{E}_{U_{1} \sim \mathcal{E}_{e_{1}}}\left[\ldots \mathbb{E}_{U_{K} \sim \mathcal{E}_{e_{K}}}\left[\tr\left[\left[\tr_A\left[U_{1} \ldots U_{K}\left((|+\rangle\langle+|)^{\otimes N}\right) U_{K}^\dagger \ldots U_{1}^\dagger\right]\cdot Y\right]^{2}\right]\right]\right]
\end{aligned}
\end{align}
By the swap trick, we have for arbitrary state $|\Psi\rangle_{AB} \in \mathcal{H}_A \otimes \mathcal{H}_B$ with marginal $\Psi_B \coloneqq \tr_A[|\Psi\rangle\langle \Psi|]$ that
\begin{align}
\tr\left[\Psi_BY\Psi_BY\right]=\tr\left[\left(S_{A} \otimes Y^{\otimes 2}\right) \Psi^{\otimes 2}\right]
\end{align}
where \(S_{A}\) is the swap operator acting on just the \(A\) subsystems of \(\left(\mathcal{H}_{A} \otimes \mathcal{H}_{B}\right)^{\otimes 2}\). Then
\begin{align}
\begin{aligned}
& \mathbb{E}_{|G\rangle \sim \Eunif}\left[\tr\left[G_{B}YG_BY\right]\right] \\
& = \mathbb{E}_{U_{1} \sim \mathcal{E}_{e_{1}}}\left[\ldots \mathbb{E}_{U_{K} \sim \mathcal{E}_{e_{K}}}\left[\tr\left[\left(S_{A} \otimes Y^{\otimes 2}\right) U_{1}^{\otimes 2} \ldots U_{K}^{\otimes 2}\left((|+\rangle\langle+|)^{\otimes N}\right)^{\otimes 2} U_{K}^{\dagger \otimes 2} \ldots U_{1}^{\dagger \otimes 2}\right]\right] \ldots \right]
\end{aligned}
\end{align}
Since
\begin{align}
(|+\rangle\langle+|)^{\otimes N}=\frac{1}{2^N} \sum_{\mathbf{z}, \mathbf{z}' \in \mathbb{Z}_{2}^{N}}|\mathbf{z}\rangle\langle\mathbf{z}'|
\end{align}
it follows that
\begin{align}
\begin{aligned}
& \mathbb{E}_{|G\rangle \sim \Eunif}\left[\tr\left[G_{B}YG_BY\right]\right] \\
& =\frac{1}{2^{2 N}} \sum_{\mathbf{z}_{1}, \mathbf{z}_{2}, \mathbf{z}_{3}, \mathbf{z}_{4} \in \mathbb{Z}_{2}^{N}} \mathbb{E}_{U_{1} \sim \mathcal{E}_{e_{1}}}\left[\ldots \mathbb{E}_{U_{K} \sim \mathcal{E}_{e_{K}}}\left[\tr\left[\left(S_{A} \otimes Y^{\otimes 2}\right) U_{1}^{\otimes 2} \ldots U_{K}^{\otimes 2}\left(\left|\mathbf{z}_{1}\right\rangle\left\langle\mathbf{z}_{2}\right| \otimes\left|\mathbf{z}_{3}\right\rangle\left\langle\mathbf{z}_{4}\right|\right) U_{K}^{\dagger \otimes 2} \ldots U_{1}^{\dagger \otimes 2}\right]\right] \ldots\right] \label{eq:trace-bit-string-expansion}
\end{aligned}
\end{align}

There are two possibilities: in each qubit pair, either one is in $A$ and one is in $B$ or both are in $B$. Suppose that the qubit pairs of the former case make up the first $L$ pairs out of the first $K$ pairs. By Lemmas~\ref{lemma:i-j-diff-set} and~\ref{lemma:i-j-same-set}, it follows that
\begin{align}
\begin{aligned}
& \left.\mathbb{E}_{U_{1} \sim \mathcal{E}_{e_{1}}}\left[\ldots \mathbb{E}_{U_{K} \sim \mathcal{E}_{e_{K}}}\left[\tr\left[\left(S_{A} \otimes Y^{\otimes 2}\right) U_{1}^{\otimes 2} \ldots U_{K}^{\otimes 2}\left(\left|\mathbf{z}_{1}\right\rangle\left\langle\mathbf{z}_{2}\right| \otimes\left|\mathbf{z}_{3}\right\rangle\left\langle\mathbf{z}_{4}\right|\right) U_{K}^{\dagger \otimes 2} \ldots U_{1}^{\dagger \otimes 2}\right)\right]\right] \ldots\right] \\
& =\left(\prod_{i=1}^{L} F_{e_{i}}\left(\mathbf{z}_{1}, \mathbf{z}_{2}, \mathbf{z}_{3}, \mathbf{z}_{4}\right)\right) \left(\prod_{i=L+1}^{K} H_{e_{i}}\left(\mathbf{z}_{1}, \mathbf{z}_{2}, \mathbf{z}_{3}, \mathbf{z}_{4}\right)\right) \tr\left[\left(S_{A} \otimes Y^{\otimes 2}\right)\left(\left|\mathbf{z}_{1}\right\rangle\left\langle\mathbf{z}_{2}\right| \otimes\left|\mathbf{z}_{3}\right\rangle\left\langle\mathbf{z}_{4}\right|\right)\right] \\
& =\left(\prod_{i=1}^{L} F_{e_{i}}\left(\mathbf{z}_{1}, \mathbf{z}_{2}, \mathbf{z}_{3}, \mathbf{z}_{4}\right)\right) \left(\prod_{i=L+1}^{K} H_{e_{i}}\left(\mathbf{z}_{1}, \mathbf{z}_{2}, \mathbf{z}_{3}, \mathbf{z}_{4}\right)\right) \\
& \times (-1)^{h(z_{1B})+h(z_{3B})} i^{2N_B} \delta_{\mathbf{z}_{3, A}, \mathbf{z}_{2, A}} \delta_{\overline{\mathbf{z}_{1, B}}, \mathbf{z}_{2, B}} \delta_{\mathbf{z}_{1, A}, \mathbf{z}_{4, A}} \delta_{\overline{\mathbf{z}_{3, B}}, \mathbf{z}_{4, B}}
\end{aligned}
\end{align}
Note that because we assume $N_B$ is even, $i^{2N_B}=1$. Thus, we have
\begin{align}
\begin{aligned}
& \left.\mathbb{E}_{U_{1} \sim \mathcal{E}_{e_{1}}}\left[\ldots \mathbb{E}_{U_{L} \sim \mathcal{E}_{e_{L}}}\left[\tr\left[\left(S_{A} \otimes Y^{\otimes 2}\right) U_{1}^{\otimes 2} \ldots U_{L}^{\otimes 2}\left(\left|\mathbf{z}_{1}\right\rangle\left\langle\mathbf{z}_{2}\right| \otimes\left|\mathbf{z}_{3}\right\rangle\left\langle\mathbf{z}_{4}\right|\right) U_{L}^{\dagger \otimes 2} \ldots U_{1}^{\dagger \otimes 2}\right)\right]\right] \ldots\right] \\
& = (-1)^{h(z_{1B})+h(z_{3B})}
\end{aligned}
\end{align}
for all tuples \(\left(\mathbf{z}_{1}, \mathbf{z}_{2}, \mathbf{z}_{3}, \mathbf{z}_{4}\right)\) of bit strings that satisfy
\begin{enumerate}
  \item \(\mathbf{z}_{2, A}=\mathbf{z}_{3, A}, \mathbf{z}_{1, B}= \overline{\mathbf{z}_{2, B}}, \mathbf{z}_{1, A}=\mathbf{z}_{4, A}\), and \(\mathbf{z}_{3, B}= \overline{\mathbf{z}_{4, B}}\)
  \item Either \(\mathbf{z}_{1, A}\oplus \mathbf{z}_{2, A}=\mathbf{0}\) or \(\mathbf{z}_{1, B}\oplus \mathbf{z}_{3, B}=\mathbf{1}\).
  \item Either \(\mathbf{z}_{1, B}\oplus \mathbf{z}_{3, B}=\mathbf{0}\) or \(\mathbf{z}_{1, B}\oplus \mathbf{z}_{3, B}=\mathbf{1}\)
\end{enumerate}
For all other tuples, the expectation value is zero.\\

The latter two conditions can be combined into the following restriction: either \(\mathbf{z}_{1, B}\oplus \mathbf{z}_{3, B}=\mathbf{1}\), or \(\mathbf{z}_{1, B}\oplus \mathbf{z}_{3, B}=\mathbf{0} = \mathbf{z}_{1, A}\oplus \mathbf{z}_{2, A}\). Observe that this is the union of two disjoint cases. \\

\noindent{\textbf{Case 1:}} \(\mathbf{z}_{1, B}\oplus \mathbf{z}_{3, B}=\mathbf{1}\). Then 
\begin{align}
\begin{cases}
    \mathbf{z}_{1,A} = \mathbf{z}_{4,A}\\
    \mathbf{z}_{2,A} = \mathbf{z}_{3,A}\\
    \mathbf{z}_{1,B} = \overline{\mathbf{z}_{2,B}} = \overline{\mathbf{z}_{3,B}} = \mathbf{z}_{4,B}
\end{cases}
\end{align}
and $h(z_{1B})+h(z_{3B}) = N_B$, which is even. There are $d_A^2d_B$ ways to choose the 3 quantities. \\

\noindent{\textbf{Case 2:}} \(\mathbf{z}_{1, B}\oplus \mathbf{z}_{3, B}=\mathbf{0} = \mathbf{z}_{1, A}\oplus \mathbf{z}_{2, A}\). Then 
\begin{align}
\begin{cases}
    \mathbf{z}_{1,A} = \mathbf{z}_{2,A} = \mathbf{z}_{3,A} = \mathbf{z}_{4,A}\\
    \mathbf{z}_{1,B} = \overline{\mathbf{z}_{2,B}} = \mathbf{z}_{3,B} = \overline{\mathbf{z}_{4,B}}
\end{cases}
\end{align}
and $h(z_{1B})+h(z_{3B}) = 2h(z_{1B})$, which is even. There are $d_Ad_B$ ways to choose the 2 quantities. \\

Using~\eqref{eq:trace-bit-string-expansion}, it follows that
\begin{align}
\mathbb{E}_{G \sim \Eunif}\left[\tr\left(G_B\tilde{G}_B\right)\right] = \frac{d_A^2d_B+ d_Ad_B}{(d_Ad_B)^2} = \frac{d_{A}+1}{d_{A} d_{B}}.
\end{align}
\end{proof}

We are finally ready to prove Theorem~\ref{thm:solution-prob}. On a high level, the proof relies on the observation that when the variance of $\tr[G_B^2]$ is sufficiently small, we may separate the expectation value of the quotient $L^\tau(|G\rangle) = \tr[G_B\tilde{G}_B]/\tr[G_B^2]$ into the quotient of the expectation values, giving us an estimate for $p_s(\Eunif)$ via~\eqref{eq:ps-workable-expression}.  
\begin{proof}[Proof of Theorem.~\ref{thm:solution-prob}]
To bound $p_s(\Eunif)$, recall that by~\eqref{eq:ps-workable-expression}, we have
\begin{equation}
    p_s(\Eunif) = \mathbb{E}_{|G\rangle \sim \Eunif}\left[\frac{\tr[G_B\tilde{G}_B]}{\tr[G_B^2]}\right].
\end{equation}
By \citep[Theorem~1]{zhou2022} and \citep[Theorem~3]{zhou2022}, we have that $\E_{G \sim \mathcal{G}}[\tr[G_B^2]] = \frac{d_A+d_B-1}{d_Ad_B}$ and $\var_{G \sim \mathcal{G}}[\tr[G_B^2]] = \frac{(d_A-1)(d_B-1)}{d_Ad_B}$, respectively.  \\

Let $0<r<1$. Define $\mathcal{E}:=\left\{G \in \mathcal{G}:\left|\operatorname{Tr}\left[G_{B}^{2}\right]-\mathbb{E}\left[\operatorname{Tr}\left[G_{B}^{2}\right]\right]\right|<r \mathbb{E}\left[\operatorname{Tr}\left[G_{B}^{2}\right]\right]\right\}$. By Chebyshev's inequality,
\begin{equation*}
\operatorname{Pr}_{G \sim \mathcal{G}}[G \notin \mathcal{E}] \leq \frac{\operatorname{var}\left[\operatorname{Tr}\left[G_{B}^{2}\right]\right]}{r^{2} \mathbb{E}\left[\operatorname{Tr}\left[G_{B}^{2}\right]\right]^{2}}=\frac{\left(d_{A}-1\right)\left(d_{B}-1\right)}{r^{2}\left(d_{A}+d_{B}-1\right)^{2}} \tag{39}
\end{equation*}

For the upper bound, we have
\begin{align*}
\mathbb{E}_{G \sim \mathcal{G}}\left[\frac{\operatorname{Tr}\left[G_{B} \tilde{G}_{B}\right]}{\operatorname{Tr}\left[G_{B}^{2}\right]}\right] & =\frac{1}{|\mathcal{G}|} \sum_{G \in \mathcal{G}} \frac{\operatorname{Tr}\left[G_{B} \tilde{G}_{B}\right]}{\operatorname{Tr}\left[G_{B}^{2}\right]}  \tag{40}\\
& =\frac{1}{|\mathcal{G}|} \sum_{G \in \mathcal{E}} \frac{\operatorname{Tr}\left[G_{B} \tilde{G}_{B}\right]}{\operatorname{Tr}\left[G_{B}^{2}\right]}+\frac{1}{|\mathcal{G}|} \sum_{G \notin \mathcal{E}} \frac{\operatorname{Tr}\left[G_{B} \tilde{G}_{B}\right]}{\operatorname{Tr}\left[G_{B}^{2}\right]}  \tag{41}\\
& \leq \frac{1}{|\mathcal{G}|} \sum_{G \in \mathcal{E}} \frac{\operatorname{Tr}\left[G_{B} \tilde{G}_{B}\right]}{\mathbb{E}\left[\operatorname{Tr}\left[G_{B}^{2}\right]\right](1-r)}+\frac{|\mathcal{G} \backslash \mathcal{E}|}{|\mathcal{G}|}  \tag{42}\\
& \leq \frac{1}{|\mathcal{G}|} \sum_{G \in \mathcal{G}} \frac{\operatorname{Tr}\left[G_{B} \tilde{G}_{B}\right]}{\mathbb{E}\left[\operatorname{Tr}\left[G_{B}^{2}\right]\right](1-r)}+\frac{|\mathcal{G} \backslash \mathcal{E}|}{|\mathcal{G}|}  \tag{43}\\
& =\frac{1}{1-r} \frac{\mathbb{E}\left[\operatorname{Tr}\left[G_{B} \tilde{G}_{B}\right]\right]}{\mathbb{E}\left[\operatorname{Tr}\left[G_{B}^{2}\right]\right]}+\underset{G \sim \mathcal{G}}{\operatorname{Pr}}[G \notin E]  \tag{44}\\
& \leq \frac{1}{1-r} \frac{d_{A}+1}{d_{A}+d_{B}-1}+\frac{\left(d_{A}-1\right)\left(d_{B}-1\right)}{r^{2}\left(d_{A}+d_{B}-1\right)^{2}} \tag{45}\\
\end{align*}

For the lower bound, we have 
\begin{align*}
\mathbb{E}_{G \sim \mathcal{G}}\left[\frac{\operatorname{Tr}\left[G_{B} \tilde{G}_{B}\right]}{\operatorname{Tr}\left[G_{B}^{2}\right]}\right] & =\frac{1}{|\mathcal{G}|} \sum_{G \in \mathcal{G}} \frac{\operatorname{Tr}\left[G_{B} \tilde{G}_{B}\right]}{\operatorname{Tr}\left[G_{B}^{2}\right]}  \tag{46}\\
& =\frac{1}{|\mathcal{G}|} \sum_{G \in \mathcal{E}} \frac{\operatorname{Tr}\left[G_{B} \tilde{G}_{B}\right]}{\operatorname{Tr}\left[G_{B}^{2}\right]}+\frac{1}{|\mathcal{G}|} \sum_{G \notin \mathcal{E}} \frac{\operatorname{Tr}\left[G_{B} \tilde{G}_{B}\right]}{\operatorname{Tr}\left[G_{B}^{2}\right]}  \tag{47}\\
& \geq \frac{1}{|\mathcal{G}|} \sum_{G \in \mathcal{E}} \frac{\operatorname{Tr}\left[G_{B} \tilde{G}_{B}\right]}{\operatorname{Tr}\left[G_{B}^{2}\right]}  \tag{48}\\
& =\frac{1}{|\mathcal{G}|} \sum_{G \in \mathcal{G}} \frac{\operatorname{Tr}\left[G_{B} \tilde{G}_{B}\right]}{\mathbb{E}\left[\operatorname{Tr}\left[G_{B}^{2}\right]\right](1+r)}-\frac{1}{|\mathcal{G}|} \sum_{G \notin \mathcal{E}} \frac{\operatorname{Tr}\left[G_{B} \tilde{G}_{B}\right]}{\mathbb{E}\left[\operatorname{Tr}\left[G_{B}^{2}\right]\right](1+r)}  \tag{49}\\
& \geq \frac{1}{|\mathcal{G}|} \sum_{G \in \mathcal{G}} \frac{\operatorname{Tr}\left[G_{B} \tilde{G}_{B}\right]}{\mathbb{E}\left[\operatorname{Tr}\left[G_{B}^{2}\right]\right](1+r)}-\frac{1}{|\mathcal{G}|} \sum_{G \notin \mathcal{E}} \frac{1}{\mathbb{E}\left[\operatorname{Tr}\left[G_{B}^{2}\right]\right](1+r)}  \tag{50}\\
& =\frac{1}{1+r} \frac{\mathbb{E}\left[\operatorname{Tr}\left[G_{b} \tilde{G}_{B}\right]\right]}{\mathbb{E}\left[\operatorname{Tr}\left[G_{B}^{2}\right]\right]}-\frac{1}{1+r} \frac{\operatorname{Pr} r_{G \sim \mathcal{G}}[G \notin \mathcal{E}]}{\mathbb{E}\left[\operatorname{Tr}\left[G_{B}^{2}\right]\right]}  \tag{51}\\
& \geq \frac{1}{1+r} \frac{d_{A}+1}{d_{A}+d_{B}-1}-\frac{1}{1+r} \frac{d_{A} d_{B}\left(d_{A}-1\right)\left(d_{B}-1\right)}{r^{2}\left(d_{A}+d_{B}-1\right)^{3}} \tag{52}\\
\end{align*}
\end{proof}

\section{Extracting GHZ states from linear cluster states}\label{app:extracting-ghz}
\begin{proof}[Proof of Thm.~\ref{thm:dejong-accordance}]
Our analysis revolves around the results of de Jong \textit{et al}~\cite{deJong2024}, which characterize exactly when it is and is not possible to extract $|\GHZ\rangle_B$ from $|L\rangle_{AB}$ using LC + LPM + CC. We will call any collection of vertices of the form $\{i, i+1, \dots, i+k-1\}$ for some $i\in [N]$ a $k$-island. 
Note that the matrix equation has no solution on the $N$-vertex line graph for measurement configurations with $B$ (the target system) containing a $k$-island for $k\geq 3$, and that GHZ states may be extracted with measurement configurations for which $B$ contains only $1$-islands by~\cite{deJong2024}[Theorem 1]. Thus, 
\begin{align}
    \frac{|S_N\setminus T_N|}{|S_N|} &\leq 
    \frac{\text{de Jong impossible}\wedge \text{matrix eq solvable}}{\text{de Jong impossible}}\\
    &= \frac{\text{de Jong impossible}\wedge \text{matrix eq solvable}\wedge \text{B has 2-island and no larger island}}{\text{de Jong impossible}}\\
    &\leq \frac{\text{matrix equation solvable}\wedge \text{B has 2-island and no larger island}}{\text{de Jong impossible}}\\
    &\leq \frac{\text{B has 2-island and no larger island}}{\text{de Jong impossible}}
\end{align}
We now show that the RHS converges to $0$ as $N\to \infty$. \\

Let $x$ be the number of vertices before the first $2$-island, and $y$ the number of vertices after the last $2$-island. Consider these regions first. We have the following recurrence: for $N$ vertices, the number of ways to choose a subset such that no $2$ vertices are adjacent (i.e., choose $1$-islands) is given by $f(N)=f(N-1)+f(N-2)$, corresponding to not choosing the $N$-th vertex and choosing the $N$-th vertex respectively. Obviously, $f(0)=1, f(1)=2$, so $f(N)=F_{N+2}$, where $F_i$ denotes the $i$-th Fibonacci number. Asymptotically, $f(N) = O(\phi^N)$, where $\phi=\frac{1+\sqrt5}{2}$. \\

In the middle $m:=N-x-y$ vertices, 2 islands are allowed. Let these vertices be indexed by $j$. By definition, $j=1,2,m-1,m$ constitutes the first and last $2$-islands. Let $g(m)$ denote the number of ways to form $2$-islands and choose $1$-islands in between them s.t. $j=1,2,m-1,m$ are $2$-islands. We have the recurrence: 
\begin{equation}
    g(m)= g(m-3)+g(m-4)+ g(m-5)f(1) + g(m-6)f(2) +\cdots
    \end{equation}
\begin{equation}
    = g(m-3) + \displaystyle\sum_{i=4}^{m-2}g(m-i)f(i-4) = \displaystyle\sum_{i=3}^{m-2}g(m-i)F_{i-2}
\end{equation}
We skip vertices $m-1,m$ and additionally $i-2$ more vertices before the next $2$-island. Within the enclosed $i-4$ vertices, we can choose some $1$-islands. In addition, we have the following base cases to the recurrence: $g(2)=1, g(3)=g(4)=0$. \\

We define the generating function for $g(m)$ as 
\begin{equation}
    G(x)=\displaystyle\sum_{m\geq 2}g(m)x^m = g(2)x^2 + g(3)x^3 + g(4)x^4 + \sum_{m\geq 5}\sum_{i=3}^{m-2}g(m-i)F_{i-2}x^m.
\end{equation}
We perform change of variables $k=m-i, l=i-2$. The new bounds are $3\leq i\leq m-2\rightarrow 1\leq l\leq m-4 = k+l-2 \rightarrow l\geq 1,k\geq 2$ and $k+l+2\geq 5 \rightarrow k+l\geq 3$. 
\begin{equation}
    \displaystyle\sum_{m\geq 2}\sum_{i=3}^{m-2}g(m-i)F_{i-2}x^m = \sum_{l\geq 1}\sum_{k\geq2}g(k)F_lx^{k+l+2} = (\sum_{l\geq 1}F_lx^{l+2})(\sum_{k\geq2}g(k)x^k) = (\sum_{l\geq 1}F_lx^{l+2})G(x).
\end{equation}
Recall that the generating function for Fibonacci numbers is $F(x)=\displaystyle\sum_{l\geq 0}F_lx^l= \frac{x}{1-x-x^2}$, so $\displaystyle\sum_{l\geq 1}F_lx^{l+2} = \frac{x^3}{1-x-x^2}$. Let $P(x)=g(2)x^2 + g(3)x^3 + g(4)x^4 = x^2$. Thus, we obtain 
\begin{equation}
    G(x) = P(x) + \frac{x^3}{1-x-x^2}G(x) \implies G(x) = \frac{P(x)(1-x-x^2)}{1-x-x^2-x^3}.
\end{equation}

Let $R\approx 0.544$ be the real solution to $1-x-x^2-x^3=0$. Observe that $G(z)$ analytic on open disk of radius $R$ centered at origin. By Taylor expansion, $G(z) = \displaystyle\sum_{n\geq0} \frac{G^{(n)}(0)}{n!}z^n \forall z:|z|<R$. Thus, $g(N)=\frac{G^{(N)}(0)}{N!}$. By Cauchy inequality, $|G^{(n)}(0)|\leq \frac{n!M_r}{r^n} \forall 0<r<R, M_r:= \displaystyle\max_{|z|=r}|G(z)|$. Pick $r$ s.t. $1/r < 2$; this is possible because $1/R < 2$. Denote $a:=1/r > 1/R \approx 1.839$. Then $g(n)\leq M_ra^n = O(a^n)$. \\

Observe that the number of bipartitions where the matrix equation is solvable and $B$ has a $2$-island and no larger island is 
\begin{equation}
    \displaystyle\sum_{x,y\geq 1; n-x-y\geq 2} f(x-1)f(y-1)g(N-x-y) = \sum_{x,y\geq 1; N-x-y\geq 2} O(\phi^x)O(\phi^y)O(a^{N-x-y}).
\end{equation}
Although it appears that for each multiplicand in the sum, the big-O notation only gives an asymptotic bound, we can establish a common bound for all $x$ (or $y$ or $N-x-y$). This can be done because for large enough $x$, we have a bound from $O(\phi^x)$ (the big-O notation), and for $x$ not "large enough" (there is finitely many such $x$), we have a bound for each $x$ explicitly via $\frac{F_x}{\phi^x}$ (similarly, $\frac{F_y}{\phi^y}$ and $\frac{g(N-x-y)}{a^{N-x-y}}$). Then we simply take the max of these bounds. \\

Let $k=x+y$. The sum becomes 
\begin{equation}
    \displaystyle\sum_{k=2}^{N-2} (k-1) O(\phi^k)O(a^{N-k}) \leq \sum_{k=2}^{N-2} NO(\phi^k)O(a^{N-k}) = O(Na^N\sum_{k=2}^{N-2}(\frac{\phi}{a})^k).
\end{equation}
Because $\frac{\phi}{a} < 1$, $\displaystyle\sum_{k=2}^{N-2}(\frac{\phi}{a})^k < \sum_{k=0}^\infty(\frac{\phi}{a})^k = \frac{1}{1-\frac{\phi}{a}} = O(1)$. Thus, 
\begin{equation}
    \displaystyle O\left(Na^N\sum_{k=2}^{N-2}(\frac{\phi}{a})^k\right) = O(Na^N).
\end{equation}

Finally, to find the number of measurement configurations where GHZ extraction is impossible by de Jong, we consider the total number of measurement configurations $2^n-2$ minus those where GHZ extraction is possible via LC + LPM +CC. Per~\cite{deJong2024}, the only cases where GHZ extraction is possible are 1-islands in $B$ only, measurement configurations where 2-islands in $B$ are either arranged as $2-\dots-2$, $\dots - 2$, or $2-\dots$, where the dots denote the presence of 1-islands only, and $B$ consisting of a 3-island only. From our prior analysis, the number of such cases is bounded asymptotically by $\Theta(F_n)$. Thus, the number of bipartitions where GHZ extraction is impossible is $\Theta(2^n-F_n)=\Theta(2^n)$. Thus, 
\begin{equation}
    \frac{\text{B has 2-island and no larger island}}{\text{de Jong impossible}} = \frac{O(na^n)}{\Theta(2^n)} = O\left(n\left(\frac{a}{2}\right)^n\right) \rightarrow 0,
\end{equation}
as claimed.
\end{proof}

\section{Measure Concentration Phenomena for LE and EA}\label{app:concentration-haar}
\begin{center}
\begin{tikzpicture}[x=4.2cm, y=-2.2cm, scale=0.6]
    \node (Vairogs) [item] at (2,0) {Results from~\cite{Vairogs2024}};
    \node (Prop6)   [item] at (1,1) {Prop.~\ref{prop:norm-additivity}};
    \node (Lemma9)  [item] at (3,1) {Lem.~\ref{lem:Fv-Lipschitz-wrt-v}};
    \node (Thm8)    [item] at (1,2) {Thm.~\ref{thm:basis-net-thm}};
    \node (Lemma10) [item] at (3,2) {Lem.~\ref{lem:tau-bar-lipschitz}};
    \node (Lemma15) [item] at (2,3) {Lem.~\ref{lem:ltl-neighbor-bound}, Lem.~\ref{lem:Fv-lipschitz-wrt-Psi}};
    \node (Thm16)   [item] at (1,4) {Thm.~\ref{thm:local-rB}};
    \node (Cor17)   [item] at (1,5) {Cor.~\ref{cor:local-concentration}};
    \node (Lemma19) [item] at (3, 4) {Lem.~\ref{lem:neighbor-bound-F-v}};
    \node (Thm20) [item] at (3, 5) {Thm.~\ref{thm:global-rB}};
    \node (Cor21) [item] at (3, 6) {Cor.~\ref{cor:global-concentration}};
    \draw [arrow] (Thm20) -- (Cor21);
    \draw [arrow] (Lemma9)  -- (Lemma10);
    \draw [arrow] (Lemma15) -- (Thm16);   
    \draw [arrow] (Thm16)   -- (Cor17);   
    \draw[arrow] (Lemma10) -- (Lemma15);  
    \draw [arrow] (Vairogs) -- (Lemma15);
    \draw [arrow] (Thm8) -- (Thm16);
    \draw [arrow] (Prop6) -- (Thm8);
    \draw[arrow] (Vairogs) -- (Lemma15);
    \draw[arrow] (Thm8) -- (Lemma15);
    \draw[arrow] (Lemma15) -- (Lemma19);
    \draw[arrow] (Lemma10) -- (Lemma19);
    \draw[arrow] (Lemma19) -- (Thm20);
    \node[highlight, fit=(Thm16)(Cor17)(Lemma19)(Thm20)(Cor21), label=below:\textbf{Main results}]{};
    \node[highlight, fit=(Thm16)(Cor17), label=below:(Local)]{};
    \node[highlight, fit=(Lemma19)(Thm20)(Cor21), label=below:(Global)]{};
\end{tikzpicture}
\end{center}

\subsection{Basis $\varepsilon$-nets}

\begin{proposition} \label{prop:trace-to-inner}
    For all states $|u\rangle, |v\rangle \in \mathcal S(\mathcal H)$, the following inequality holds:
    \begin{equation}
        \||u\rangle\langle u| - |v\rangle\langle v|\|_1 = 2\sqrt{1-|\langle u|v\rangle|^2}.
    \end{equation}
\end{proposition}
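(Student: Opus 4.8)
Here is the plan of attack for Proposition~\ref{prop:trace-to-inner}.

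The plan is to exploit the structure of the difference operator $\Delta \coloneqq |u\rangle\langle u| - |v\rangle\langle v|$. This operator is Hermitian, traceless, and its range is contained in the (at most two-dimensional) subspace $\mathrm{span}\{|u\rangle, |v\rangle\}$. Hence it has at most two nonzero eigenvalues, and since $\tr[\Delta] = 0$ these must be of the form $\lambda$ and $-\lambda$ for some $\lambda \ge 0$. Since the trace norm is the sum of the absolute values of the eigenvalues, it follows that $\||u\rangle\langle u| - |v\rangle\langle v|\|_1 = \lambda + \lambda = 2\lambda$, and the whole task reduces to identifying $\lambda$.

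First I would dispose of the degenerate case. If $|u\rangle$ and $|v\rangle$ are linearly dependent, then since both are unit vectors we have $|u\rangle = e^{i\theta}|v\rangle$ for some phase $\theta$, whence $\Delta = 0$ and simultaneously $|\langle u|v\rangle| = 1$; both sides of the claimed identity then vanish. So for the rest of the argument I may assume $|u\rangle$ and $|v\rangle$ are linearly independent.

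Next, assuming linear independence, I would extract $\lambda$ from the second moment of $\Delta$. Because the nonzero eigenvalues of $\Delta$ are $\pm\lambda$, we have $\tr[\Delta^2] = 2\lambda^2$. On the other hand, expanding $\Delta^2$ directly and using $\langle u|u\rangle = \langle v|v\rangle = 1$ together with the scalar identity $\tr\!\big[|u\rangle\langle u|v\rangle\langle v|\big] = \tr\!\big[|v\rangle\langle v|u\rangle\langle u|\big] = |\langle u|v\rangle|^2$,
\[
    \tr[\Delta^2] = \tr\big[|u\rangle\langle u|\big] + \tr\big[|v\rangle\langle v|\big] - 2\,\tr\!\big[|u\rangle\langle u|v\rangle\langle v|\big] = 2 - 2|\langle u|v\rangle|^2 .
\]
Equating the two expressions yields $\lambda^2 = 1 - |\langle u|v\rangle|^2$, i.e.\ $\lambda = \sqrt{1 - |\langle u|v\rangle|^2}$, and therefore $\||u\rangle\langle u| - |v\rangle\langle v|\|_1 = 2\lambda = 2\sqrt{1 - |\langle u|v\rangle|^2}$, as claimed.

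The only delicate point — and the one I would take care to state cleanly — is the opening structural claim that $\Delta$ has spectrum $\{\lambda, -\lambda, 0, \dots, 0\}$. This rests on three elementary observations: $\Delta$ is Hermitian, hence unitarily diagonalizable with real eigenvalues; its range lies in $\mathrm{span}\{|u\rangle, |v\rangle\}$, so at most two eigenvalues are nonzero; and $\tr[\Delta] = 0$ forces those nonzero eigenvalues to sum to zero, hence to equal $\pm\lambda$. Once this is in place, the remainder is the one-line moment computation above. (An alternative route would be to diagonalize the explicit $2\times 2$ restriction of $\Delta$ to $\mathrm{span}\{|u\rangle,|v\rangle\}$, but the trace argument avoids having to choose a basis of that subspace.)
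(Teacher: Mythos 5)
Your proposal is correct and follows essentially the same route as the paper's proof: observe that the difference operator is Hermitian, traceless, and of rank at most two, so its nonzero eigenvalues are $\pm\lambda$; compute $\tr[\Delta^2] = 2 - 2|\langle u|v\rangle|^2$ to identify $\lambda$; and conclude via the trace norm being the sum of absolute eigenvalues. Your explicit handling of the degenerate (linearly dependent) case is a small tidiness bonus not present in the paper, but the argument is otherwise identical.
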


\begin{proof}
    Let $u := |u\rangle \langle u|$ and $A = u-v$ for simplicity. Obviously, $\operatorname{rank}(A) \le 2$, which implies that there are at most $2$ non-zero eigenvalues of $A$: let us denote them by $\lambda_1$ and $\lambda_2$. Then
    \begin{equation}
        \lambda_1 + \lambda_2 = \tr[A] = \tr[u-v] = \tr[u] - \tr[v] = 1-1 = 0\implies \lambda_2 = -\lambda_1.
    \end{equation}
    Observe that
    \begin{align}
        2\lambda_1^2 = \lambda_1^2 + \lambda_2^2 = \tr[A^2] &= \tr[(u-v)^2] \\
        &= \tr[u^2 - uv - vu + v^2] \\
        &= \tr[u^2] - \tr[uv] - \tr[vu] + \tr[v^2] \\
        &= 2 - 2|\langle u|v\rangle|^2,
    \end{align}
    which gives $|\lambda_1| = \sqrt{1-|\langle u|v\rangle|^2}$. Obviously, $A$ is Hermitian, so $\lambda_1,\lambda_2\in \mathbb R$. Finally,
    \begin{equation} \textstyle
        \|u-v\|_1 = \| A\|_1 = \tr[\sqrt{A^\dagger A}] = \tr[\sqrt{A^2}] = \sqrt{\lambda_1^2} + \sqrt{\lambda_2^2} = |\lambda_1| + |\lambda_2| = 2|\lambda_1| \nonumber = 2\sqrt{1-|\langle u|v\rangle|^2}.
    \end{equation}
\end{proof}

\begin{corollary} \label{cor:trace-hilbert-link}
    For all states $|u\rangle, |v\rangle \in \mathcal S(\mathcal H)$, the following inequality holds:
    \begin{equation}
        \||u\rangle\langle u| - |v\rangle\langle v|\|_1 \le 2\| |u\rangle - |v\rangle\|_2.
    \end{equation}
\end{corollary}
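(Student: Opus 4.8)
The plan is to reduce everything to Proposition~\ref{prop:trace-to-inner}, which already gives the exact value $\||u\rangle\langle u| - |v\rangle\langle v|\|_1 = 2\sqrt{1-|\langle u|v\rangle|^2}$. Squaring the desired inequality, it suffices to show
\begin{equation}
    4\left(1-|\langle u|v\rangle|^2\right) \le 4\,\||u\rangle - |v\rangle\|_2^2.
\end{equation}
So the first step is to expand the right-hand side: since $|u\rangle$ and $|v\rangle$ are normalized, $\||u\rangle - |v\rangle\|_2^2 = \langle u|u\rangle + \langle v|v\rangle - 2\Re\langle u|v\rangle = 2 - 2\Re\langle u|v\rangle$. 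Thus the claim becomes the purely scalar inequality $1-|\langle u|v\rangle|^2 \le 2 - 2\Re\langle u|v\rangle$.

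The second step is to prove this scalar inequality. Write $z := \langle u|v\rangle$ and $a := \Re z$, and recall from Cauchy--Schwarz that $|z| \le 1$, hence also $a \le |z| \le 1$. Factor $1 - |z|^2 = (1-|z|)(1+|z|) \le 2(1-|z|) \le 2(1-a)$, where the last step uses $|z| \ge a$. Since $2 - 2\Re z = 2(1-a)$, this is exactly what is needed. Substituting back and taking square roots (both sides are nonnegative) yields $\||u\rangle\langle u| - |v\rangle\langle v|\|_1 \le 2\||u\rangle - |v\rangle\|_2$.

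There is no serious obstacle here; the corollary is essentially a one-line consequence of Proposition~\ref{prop:trace-to-inner} together with the elementary bound $1-|z|^2 \le 2(1-\Re z)$ for $|z|\le 1$. The only point requiring a moment's care is remembering to use $\Re z \le |z|$ rather than trying to bound $|z|^2$ directly in terms of $\Re z$ alone, since the imaginary part of $\langle u|v\rangle$ must be discarded favorably.
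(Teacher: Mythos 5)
Your proof is correct and follows essentially the same route as the paper: both start from Proposition~\ref{prop:trace-to-inner} and reduce to the chain $1-|\langle u|v\rangle|^2 \le 2(1-|\langle u|v\rangle|) \le 2(1-\Re\langle u|v\rangle) = \||u\rangle-|v\rangle\|_2^2$. The only cosmetic difference is that you square first and work with scalars, while the paper keeps the square roots throughout the inequality chain.
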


\begin{proof} By Prop.~\ref{prop:trace-to-inner},
    \begin{align}
        \| |u\rangle\langle u| - |v\rangle\langle v|\|_1 = 2\sqrt{1-|\langle u|v\rangle|^2} \label{eq:trace-to-inner} &\le 2\sqrt{2 - 2|\langle u|v\rangle|} \\
        &\le 2\sqrt{2-2\operatorname{Re}(\langle u|v\rangle)} \\
        &= 2\sqrt{\langle u|u\rangle - \langle u|v\rangle -\langle v|u\rangle + \langle v|v\rangle} \\
        &= 2\sqrt{(\langle u|-\langle v|)(|u\rangle - |v\rangle)} \\
        &= 2\| |u\rangle -|v\rangle\|_2.
    \end{align}
\end{proof}

\begin{proposition}\label{prop:norm-additivity}
    Let $\mathcal{H}_1, \mathcal{H}_2$ be Hilbert spaces. For all orthonormal bases $\beta_1, \beta_2 \in \mathcal{C}(\mathcal{H}_1)$ and $\gamma_1, \gamma_2 \in \mathcal{C}(\mathcal{H}_2)$, the following inequality holds:
    \begin{equation}
        \|\beta_1 \otimes \gamma_1 - \beta_2 \otimes \gamma_2\|_B\leq \|\beta_1 - \beta_2\|_B+ \|\gamma_1 - \gamma_2\|_B.
    \end{equation}
\end{proposition}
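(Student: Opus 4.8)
The plan is to reduce the statement to a pointwise (index-by-index) estimate on the basis elements of the two tensor-product bases, and then apply the standard ``add and subtract a cross term'' trick together with the multiplicativity of the trace norm under tensor products. Write $\beta_1 = \{|\varphi_a\rangle\}_{a=1}^{d_1}$, $\beta_2 = \{|\varphi'_a\rangle\}_{a=1}^{d_1}$, $\gamma_1 = \{|\eta_b\rangle\}_{b=1}^{d_2}$, $\gamma_2 = \{|\eta'_b\rangle\}_{b=1}^{d_2}$. By the definition of $\beta \otimes \gamma$ in~\eqref{eq:p-h-def} and the preceding display, the element of $\beta_1 \otimes \gamma_1$ (resp.\ $\beta_2 \otimes \gamma_2$) carrying the index corresponding to the pair $(a,b)$ is $|\varphi_a\rangle|\eta_b\rangle$ (resp.\ $|\varphi'_a\rangle|\eta'_b\rangle$); crucially the ordering is the same for both bases, so the $B$-norm compares these like-indexed elements. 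Hence
\begin{equation}
    \|\beta_1 \otimes \gamma_1 - \beta_2 \otimes \gamma_2\|_B = \max_{a,b} \big\| |\varphi_a\rangle\langle\varphi_a| \otimes |\eta_b\rangle\langle\eta_b| - |\varphi'_a\rangle\langle\varphi'_a| \otimes |\eta'_b\rangle\langle\eta'_b| \big\|_1 ,
\end{equation}
and it suffices to bound the quantity inside the maximum, for each fixed $(a,b)$, by $\|\beta_1 - \beta_2\|_B + \|\gamma_1 - \gamma_2\|_B$.

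For a fixed $(a,b)$, abbreviate $P = |\varphi_a\rangle\langle\varphi_a|$, $P' = |\varphi'_a\rangle\langle\varphi'_a|$, $Q = |\eta_b\rangle\langle\eta_b|$, $Q' = |\eta'_b\rangle\langle\eta'_b|$. I would write
\begin{equation}
    P \otimes Q - P' \otimes Q' = P \otimes (Q - Q') + (P - P') \otimes Q' ,
\end{equation}
apply the triangle inequality for $\|\cdot\|_1$, and then use $\|X \otimes Y\|_1 = \|X\|_1 \|Y\|_1$ together with $\|P\|_1 = \|Q'\|_1 = 1$ (each is a rank-one projector, hence has unit trace norm). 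This gives
\begin{equation}
    \big\| P \otimes Q - P' \otimes Q' \big\|_1 \leq \|Q - Q'\|_1 + \|P - P'\|_1 .
\end{equation}
Finally, by Definition~\ref{def:basis-norm}, $\|P - P'\|_1 = \| |\varphi_a\rangle\langle\varphi_a| - |\varphi'_a\rangle\langle\varphi'_a| \|_1 \leq \|\beta_1 - \beta_2\|_B$ and likewise $\|Q - Q'\|_1 \leq \|\gamma_1 - \gamma_2\|_B$. Combining and taking the maximum over $(a,b)$ yields the claim.

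This argument is essentially routine; the only point requiring a little care is the bookkeeping of the ordered-basis indexing of $\beta \otimes \gamma$ so that the $B$-norm genuinely compares $|\varphi_a\rangle|\eta_b\rangle$ with $|\varphi'_a\rangle|\eta'_b\rangle$ and not with some other element, and invoking multiplicativity of the trace norm under tensor product (which holds since singular values of a tensor product are products of singular values). No nontrivial obstacle is anticipated. An immediate induction then extends the inequality to $n$-fold tensor products $\beta_1 \otimes \cdots \otimes \beta_n$, which is the form needed for the basis $\varepsilon$-net constructions of Theorem~\ref{thm:basis-net-thm}.
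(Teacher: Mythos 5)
Your proof is correct and follows essentially the same route as the paper's: both insert the cross term $P'\otimes Q$ (yours uses $P\otimes Q'$, a trivially equivalent choice), apply the triangle inequality, and use multiplicativity of the trace norm together with $\|P\|_1=\|Q'\|_1=1$ to reduce to the single-factor $B$-norms. The only cosmetic difference is that the paper fixes the maximizing index pair at the outset while you bound every pair and then take the maximum; these are interchangeable.
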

\begin{proof}
    Let $n$ and $m$ be the dimension of $\mathcal H_1$ and $\mathcal H_2$, respectively. We may write $\beta_1 = \{|\varphi_i\rangle\}_{i=1}^n$, $\beta_2 = \{|\varphi_i'\rangle\}_{i=1}^n$, $\gamma_1 = \{|\eta_j\rangle\}_{j=1}^m$, and $\gamma_2 = \{|\eta_j'\rangle\}_{j=1}^m$. Let $i\in[n]$ and $j\in[m]$ be integers such that $\|\beta_1 \otimes \gamma_1 - \beta_2\otimes \gamma_2\|_B= \|\varphi_i \otimes \eta_j - \varphi_i' \otimes \eta_j'\|_1.$ Then
    \begin{align}
        \|\beta_1 \otimes \gamma_1 - \beta_2\otimes \gamma_2\|_B &= \|\varphi_i \otimes \eta_j - \varphi_i' \otimes \eta_j'\|_1 \\
        &= \|\varphi_i \otimes \eta_j - \varphi_i' \otimes \eta_j + \varphi_i' \otimes \eta_j - \varphi_i' \otimes \eta_j' \|_1 \\
        &\leq \|\varphi_i \otimes \eta_j - \varphi_i' \otimes \eta_j\|_1 + \|\varphi_i' \otimes \eta_j - \varphi_i' \otimes \eta_j' \|_1 \\ 
        &= \|\varphi_i - \varphi_i'\|_1 \cdot \|\eta_j\|_1 + \|\varphi_i'\|_1 \cdot \|\eta_j - \eta_j'\|_1 \\
        &= \|\varphi_i - \varphi_i'\|_1 + \|\eta_j - \eta_j'\|_1 \\
        &\leq \|\beta_1 - \beta_2\|_B+ \|\gamma_1 - \gamma_2\|_B\,, \quad \text{by Def.~\ref{def:basis-norm}.}
    \end{align}
\end{proof}

\begin{proposition} \label{prop:basis-net-1q}
    For $\varepsilon\in (0,1)$, there exists a basis $[(1+2\sqrt2)\sqrt{\varepsilon}]$-net $\mathcal{N}$ on $\mathcal{C}(\mathbb{C}^2)$ with $|\mathcal{N}|\leq (5/\varepsilon)^8$.
\end{proposition}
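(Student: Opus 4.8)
\textit{Proof proposal.}
The plan is to reduce a basis net on $\mathbb{C}^2$ to an ordinary net of rank-one projectors. The key elementary fact is that a one-dimensional subspace of $\mathbb{C}^2$ has a unique orthogonal complement, so for any ordered orthonormal basis $\beta = \{|\varphi_1\rangle, |\varphi_2\rangle\}$ one has $|\varphi_2\rangle\langle\varphi_2| = I - |\varphi_1\rangle\langle\varphi_1|$. Consequently, for bases $\beta = \{|\varphi_1\rangle,|\varphi_2\rangle\}$ and $\gamma = \{|\eta_1\rangle,|\eta_2\rangle\}$ the two terms in Def.~\ref{def:basis-norm} coincide: $\||\varphi_2\rangle\langle\varphi_2| - |\eta_2\rangle\langle\eta_2|\|_1 = \|(I-|\varphi_1\rangle\langle\varphi_1|) - (I-|\eta_1\rangle\langle\eta_1|)\|_1 = \||\varphi_1\rangle\langle\varphi_1| - |\eta_1\rangle\langle\eta_1|\|_1$, so $\|\beta-\gamma\|_B = \||\varphi_1\rangle\langle\varphi_1| - |\eta_1\rangle\langle\eta_1|\|_1$. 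Thus building a basis net amounts to building a net of the first basis vectors under trace distance of their projectors.

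So I would first invoke \citep[Lem.~II.4]{Hayden_2004} to fix a set $\mathcal{M}\subset\mathcal{S}(\mathbb{C}^2)$ with $|\mathcal{M}|\le (5/\varepsilon)^4$ such that every $|\psi\rangle\in\mathcal{S}(\mathbb{C}^2)$ has some $|\phi\rangle\in\mathcal{M}$ with $\||\psi\rangle\langle\psi| - |\phi\rangle\langle\phi|\|_1\le \varepsilon$. For each $|\phi\rangle\in\mathcal{M}$ I would fix a unit vector $|\phi^\perp\rangle$ spanning its orthogonal complement, and set $\mathcal{N} := \{\{|\phi\rangle,|\phi^\perp\rangle\}: |\phi\rangle\in\mathcal{M}\}\subset\mathcal{C}(\mathbb{C}^2)$, so $|\mathcal{N}|\le (5/\varepsilon)^4$. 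Given an arbitrary $\gamma = \{|\eta_1\rangle,|\eta_2\rangle\}\in\mathcal{C}(\mathbb{C}^2)$, I would pick $|\phi\rangle\in\mathcal{M}$ with $\||\phi\rangle\langle\phi| - |\eta_1\rangle\langle\eta_1|\|_1\le\varepsilon$; then by the reduction above $\|\{|\phi\rangle,|\phi^\perp\rangle\} - \gamma\|_B = \||\phi\rangle\langle\phi| - |\eta_1\rangle\langle\eta_1|\|_1\le\varepsilon$. Since $\varepsilon<\sqrt{\varepsilon}<(1+2\sqrt{2})\sqrt{\varepsilon}$ and $(5/\varepsilon)^4\le (5/\varepsilon)^8$ for $\varepsilon\in(0,1)$, the set $\mathcal{N}$ is in particular a basis $[(1+2\sqrt{2})\sqrt{\varepsilon}]$-net with $|\mathcal{N}|\le(5/\varepsilon)^8$.

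In this streamlined form the only things to be careful about are (i) that the $\varepsilon$-net of \cite{Hayden_2004} is quoted in the right metric --- the trace norm of projectors, matching Def.~\ref{def:basis-epsnet} --- and (ii) fixing a consistent phase convention for $|\phi^\perp\rangle$, so that $\mathcal{N}$ genuinely lies in $\mathcal{C}(\mathbb{C}^2)$; neither is a real obstacle. I expect the reason the statement is phrased with the weaker radius $(1+2\sqrt{2})\sqrt{\varepsilon}$ and cardinality $(5/\varepsilon)^8$ is that the authors instead build $\mathcal{N}$ by Gram--Schmidt-orthonormalizing pairs $(|a\rangle,|b\rangle)\in\mathcal{M}\times\mathcal{M}$ --- the construction that iterates cleanly, via Prop.~\ref{prop:norm-additivity}, to the $n$-qubit bound of Thm.~\ref{thm:basis-net-thm} and accounts for the exponent $8n$ there. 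In that route one converts the trace-norm net estimate into a Euclidean one (using Prop.~\ref{prop:trace-to-inner} and a rephasing, which is where the $\sqrt{\varepsilon}$ enters), bounds the Gram--Schmidt overlap $|\langle a|b\rangle|$ by $2\sqrt{\varepsilon(1-\varepsilon)}$ by expanding $|a\rangle,|b\rangle$ in the target basis $\{|\eta_1\rangle,|\eta_2\rangle\}$ and using $\langle\eta_1|\eta_2\rangle=0$, and then tracks the renormalization and projection errors through Cor.~\ref{cor:trace-hilbert-link}. There the main difficulty is purely bookkeeping --- arranging the accumulated $O(\sqrt{\varepsilon})$ and $O(\varepsilon)$ contributions, together with the factor-of-two loss from Cor.~\ref{cor:trace-hilbert-link}, to land at exactly the coefficient $1+2\sqrt{2}$ uniformly on $(0,1)$, and dispatching the small regime where $2\sqrt{\varepsilon}\ge 1$ separately (there $(1+2\sqrt2)\sqrt\varepsilon$ approaches the maximal possible $B$-distance and the Gram--Schmidt denominator analysis must be handled with the sharper overlap bound).
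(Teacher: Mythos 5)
Your proof is correct, and it takes a genuinely different and more elementary route than the paper's. The paper's proof builds $\mathcal{N}$ by Gram--Schmidt: it takes \emph{pairs} $(|\eta_1\rangle,|\eta_2\rangle)$ from the state net $\mathcal{N}_s$ of \citep[Lem.~II.4]{Hayden_2004}, projects $|\eta_2\rangle$ onto the complement of $|\eta_1\rangle$ and renormalizes, and then bounds $\|\varphi_2-\eta_2'\|_1$ through a chain of estimates (via Prop.~\ref{prop:trace-to-inner} and the overlap bound $\|\eta_1-\eta_2\|_1\ge 2-2\varepsilon$) that lands at the coefficient $1+2\sqrt{2}$ and the squared cardinality $(5/\varepsilon)^8$ --- essentially the bookkeeping you anticipated in your second paragraph. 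Your observation that in $\mathbb{C}^2$ the second projector is determined by the first, so that $\|\beta-\gamma\|_B = \||\varphi_1\rangle\langle\varphi_1|-|\eta_1\rangle\langle\eta_1|\|_1$ exactly, collapses all of this: you get a basis $\varepsilon$-net (not merely a $(1+2\sqrt{2})\sqrt{\varepsilon}$-net) of cardinality $(5/\varepsilon)^4$, which is strictly stronger than the stated proposition and trivially implies it. The two caveats you flag (the metric in which Lem.~II.4 is quoted, and fixing a phase for $|\phi^\perp\rangle$) are indeed harmless --- the paper itself uses the trace-norm form of the Hayden--Leung--Winter net, and any unit vector in the one-dimensional complement works. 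What the paper's heavier construction buys is only that it would survive in dimension $d>2$, where the orthogonal complement of the first vector is no longer one-dimensional; since the paper only ever needs the qubit case (higher dimensions are reached by tensoring in Thm.~\ref{thm:basis-net-thm}), your argument would in fact improve the cardinality there to roughly $(5N_A/\varepsilon)^{4N_A}$ and correspondingly tighten the prefactor in Thm.~\ref{thm:local-rB}.
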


\begin{proof}
    Lemma II.4 of \cite{Hayden_2004} states that there exists an $\varepsilon$-net $\mathcal{N}_s$ on $\mathbb{C}^2$ with $|\mathcal{N}_s|\leq (5/\varepsilon)^4$. Let 
    \begin{equation} \label{eq:construction-of-N}
        \mathcal{N} := \left\{ \left\{|\eta_1\rangle, \frac{(I - \eta_1)|\eta_2\rangle}{\sqrt{\langle \eta_2|(I - \eta_1)|\eta_2\rangle}} \right\}: |\eta_1\rangle, |\eta_2\rangle \in \mathcal{N}_s \text{ and } \langle\eta_2|(I - \eta_1)|\eta_2\rangle>0\right\}.
    \end{equation}

    Notice that $\frac{(I - \eta_1)|\eta_2\rangle}{\sqrt{\langle \eta_2|(I - \eta_1)|\eta_2\rangle}}$ is the normalized projection of $|\eta_2\rangle$ onto the orthogonal complement of $|\eta_1\rangle\langle\eta_1|$. Hence, we verified that every element in $\mathcal N$ is an orthonormal basis, i.e., $\mathcal N \subset \mathcal C(\C^2)$.
    
    Let $\beta= \{|\varphi_1\rangle, |\varphi_2\rangle \} \in \mathcal{C}(\mathbb{C}^2)$ be arbitrary. By the definition of an $\varepsilon$-net \cite{Hayden_2004}, there exist $|\eta_1\rangle , |\eta_2\rangle \in \mathcal{N}_s$ such that $\|\eta_1 -\varphi_1\|_1\le \varepsilon$ and $\|\eta_2 - \varphi_2\|_1 \leq \varepsilon$. Let $|\eta_2'\rangle := \frac{(I - \eta_1)|\eta_2\rangle}{\sqrt{\langle \eta_2|(I - \eta_1)|\eta_2\rangle}}$ and $\gamma := \{|\eta_1\rangle, |\eta_2'\rangle\} \in \mathcal N$. Def.~\ref{def:basis-norm} gives
    \begin{equation}
        \|\beta - \gamma\|_B = \max\{\|\varphi_1 - \eta_1\|_1,\, \|\varphi_2 - \eta_2'\|_1\}.
    \end{equation}
    We already have $\|\varphi_1 - \eta_1\|_1\le \varepsilon$. Observe that
    \begin{align}
        \|\eta_1 - \eta_2\|_1 &= \| (\eta_1- \varphi_1) - (\eta_2 - \varphi_2) + (\varphi_1-\varphi_2) \|_1 \label{eq:eta-dist-1}\\
        &\ge \| \varphi_1 - \varphi_2\|_1 - \|(\eta_1- \varphi_1) - (\eta_2 - \varphi_2)\|_1 \\
        &\ge \|\varphi_1 - \varphi_2\|_1 - (\|(\eta_1- \varphi_1)\|_1 + \|(\eta_2 - \varphi_2)\|_1) \\
        &\ge \textstyle 2\sqrt{1-|\langle \varphi_1 | \varphi_2\rangle|^2} - 2\varepsilon \\
        &= 2-2\varepsilon. \label{eq:eta-dist-2}
    \end{align}
    The following relations give an upper bound for $\| \varphi_2 - \eta_2'\|_1$:
    \begin{align}
        \|\varphi_2 - \eta_2' \|_1 &=  \| \varphi_2 - \eta_2 + \eta_2 - \eta_2' \|_1 \\
        &\leq \| \varphi_2 - \eta_2 \| + \| \eta_2 - \eta_2' \|_1 \\
        &\le \textstyle \varepsilon + 2 \sqrt{1 - | \langle \eta_2 | \eta_2'\rangle |^2} & \text{by Prop.~\ref{prop:trace-to-inner}} \\
        &= \varepsilon + 2\sqrt{1 - \langle \eta_2 |(I-\eta_1)| \eta_2\rangle} \\
        &= \varepsilon + 2|\langle \eta_1 | \eta_2 \rangle| \\
        &= \varepsilon + 2 \sqrt{1 - (\| \eta_1 - \eta_2 \|_1/2)^2} & \text{by Prop.~\ref{prop:trace-to-inner}} \\
        &\leq \varepsilon + 2 \sqrt{1 - (1 - \varepsilon)^2} & \text{by Eq.~\eqref{eq:eta-dist-1}$-$\eqref{eq:eta-dist-2}}\\
        &\leq (1 + 2 \sqrt{2})\sqrt\varepsilon.
    \end{align}
    Thus, $\|\beta - \gamma\|_B$ is bounded above by $\max\{\varepsilon, (1+2\sqrt2)\sqrt\varepsilon\} = (1+2\sqrt2)\sqrt\varepsilon$. By Def.~\ref{def:basis-epsnet}, $\mathcal N$ is indeed a basis $[(1+2\sqrt2)\sqrt\varepsilon]$-net over $\mathcal C(\C^2)$.

    As for the cardinality bound, Eq.~\eqref{eq:construction-of-N} implies $\mathcal N \subseteq N_s \times N_s$. Thus, $|\mathcal N| \le |\mathcal N_s|^2 = (5/\varepsilon)^8$.
\end{proof}

\begin{proof}[Proof of Thm.~\ref{thm:basis-net-thm}]
    Let $\delta := (\varepsilon/n)^2/(2\sqrt2+1)^2$ so that $(2\sqrt2+1)\sqrt\delta = \varepsilon/n$. Lemma II.4 of \cite{Hayden_2004} gives that there exists a basis $(\varepsilon/n)$-net $\mathcal{M}$ on $\mathcal{C}(\mathbb{C}^2)$ whose cardinality is bounded above by $(5/\delta)^8 = [5(2\sqrt2 + 1)^2n^2/\varepsilon^2]^8$. Set $\mathcal{N} := \{\gamma_1 \otimes \dots \otimes \gamma_n: \gamma_i \in \mathcal{M}\}.$ By definition, $\mathcal N \subset \mathcal P((\mathbb C^2)^{\otimes n})$. Let $\beta_1 \otimes \dots \otimes \beta_n$
    be an arbitrary basis in $\mathcal P((\mathbb C^2)^{\otimes n})$ and $\gamma_1 \otimes \dots \otimes \gamma_n$ be an arbitrary member of $\mathcal N$. Prop.~\ref{prop:norm-additivity} gives the following inequality
    \begin{equation}
        \left\lVert \bigotimes_{i=1}^n \nolimits \gamma_i - \bigotimes_{i=1}^n \nolimits \beta_i\right\rVert_B \,\le\, \sum_{i=1}^n \nolimits \lVert \gamma_i - \beta_i\rVert_B. \label{eq:8777888787}
    \end{equation}
    Since each $\gamma_i$ comes from the ($\varepsilon/n$)-net $\mathcal M$, by definition we can choose $\gamma_i$ for each $i\in[n]$ such that $\lVert \gamma_i - \beta_i\rVert_B \le \varepsilon/n$ and therefore --- continuing from Eq.~\eqref{eq:8777888787} --- $\sum_{i=1}^n \nolimits \lVert \gamma_i - \beta_i\rVert_B \le n \cdot (\varepsilon/n) = \varepsilon$, which shows that $\mathcal N$ is indeed a basis $\varepsilon$-net. Finally, our choice of $\mathcal N$ implies that $|\mathcal N| = |\mathcal M|^n \le [5(2\sqrt2 + 1)^2n^2/\varepsilon^2]^{8n}.$
\end{proof}

\subsection{Entanglement of Assistance}
\begin{lemma}\label{lem:Fv-lipschitz-wrt-Psi}
    For arbitrary $|v\rangle\in \mathcal S(\mathcal H_A)$ and $|\Psi\rangle, |\Psi'\rangle \in \mathcal S(\mathcal H_A\otimes \mathcal H_B)$, the following probability bound holds:
    \begin{align}
        |F_v(\Psi) - F_v(\Psi')| \leq (4\sqrt{2}+2)\||\Psi\rangle - |\Psi'\rangle\|_2.
    \end{align}
\end{lemma}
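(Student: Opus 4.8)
The plan is to unpack the definition $F_v(\Psi) = p_v(\Psi)\,\tau(M_v(\Psi))$ and control the variation of each factor separately, then combine via a product-rule-style estimate. First I would recall that $p_v(\Psi) = \langle\Psi|(\dya{v}_A\otimes I_B)|\Psi\rangle = \||v\rangle\!\langle v|_A\otimes I_B)|\Psi\rangle\|_2^2$, and that the unnormalized post-measurement vector $|\widehat M_v(\Psi)\rangle := (\langle v|_A\otimes I_B)|\Psi\rangle$ satisfies $\|\,|\widehat M_v(\Psi)\rangle\|_2^2 = p_v(\Psi)$ and $|M_v(\Psi)\rangle = |\widehat M_v(\Psi)\rangle/\sqrt{p_v(\Psi)}$. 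Since $|\widehat M_v(\Psi)\rangle$ is obtained from $|\Psi\rangle$ by applying the contraction $\langle v|_A\otimes I_B$ (operator norm $1$), we get the Lipschitz bound $\|\,|\widehat M_v(\Psi)\rangle - |\widehat M_v(\Psi')\rangle\|_2 \le \|\,|\Psi\rangle - |\Psi'\rangle\|_2$, and consequently $|p_v(\Psi) - p_v(\Psi')| = \big|\|\widehat M_v(\Psi)\|_2^2 - \|\widehat M_v(\Psi')\|_2^2\big| \le \big(\|\widehat M_v(\Psi)\|_2 + \|\widehat M_v(\Psi')\|_2\big)\,\|\,|\Psi\rangle - |\Psi'\rangle\|_2 \le 2\|\,|\Psi\rangle - |\Psi'\rangle\|_2$, using $\|\widehat M_v(\cdot)\|_2\le 1$.

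Next I would write $F_v(\Psi) = \langle\widehat M_v(\Psi)|\widetilde{\widehat M_v(\Psi)}\rangle$ up to modulus, i.e. observe that $p_v(\Psi)\,\tau(M_v(\Psi)) = p_v(\Psi)\,|\langle M_v(\Psi)|\widetilde{M_v(\Psi)}\rangle| = |\langle\widehat M_v(\Psi)|\,\sigma_y^{\otimes N_B}\,|\widehat M_v(\Psi)^\star\rangle|$, because the spin-flip map is conjugate-linear and homogeneous of degree two in the vector, so the normalization factors $1/\sqrt{p_v}$ cancel against the explicit $p_v$. Thus $F_v(\Psi) = |g(\widehat M_v(\Psi))|$ where $g(|w\rangle) := \langle w|\sigma_y^{\otimes N_B}|w^\star\rangle$ is a (bounded) bilinear-in-$(w,\bar w)$ form. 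On the ball $\|w\|_2\le 1$, $g$ is Lipschitz: for $\|w\|,\|w'\|\le 1$, $|g(w) - g(w')| \le \|w\|\,\|w - w'\| + \|w'\|\,\|w - w'\|\,\|\sigma_y^{\otimes N_B}\| \le 2\|w - w'\|$ (treating the conjugation carefully, since $\|w^\star - w'^\star\|_2 = \|w - w'\|_2$), and $|\,|g(w)| - |g(w')|\,| \le |g(w) - g(w')|$. Combining with the contraction bound for $\widehat M_v$, this yields $|F_v(\Psi) - F_v(\Psi')| \le 2\|\,|\Psi\rangle - |\Psi'\rangle\|_2$ directly — which is actually \emph{better} than the claimed constant $4\sqrt2+2$.

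Since the paper states the constant $4\sqrt2+2$, I suspect the intended route instead goes through the \emph{normalized} states $M_v(\Psi)$ and pays a price: one bounds $|p_v(\Psi) - p_v(\Psi')|\le 2\,\delta$ with $\delta := \|\,|\Psi\rangle - |\Psi'\rangle\|_2$ as above, and separately bounds $\|\,|M_v(\Psi)\rangle - |M_v(\Psi')\rangle\|_2$, where the division by $\sqrt{p_v}$ can blow up when $p_v$ is small; one then restricts to the regime $p_v\ge$ some threshold (using that $F_v \le \sqrt{p_v}\cdot$ something so the small-$p_v$ case is handled trivially by $|F_v| \le \sqrt{p_v(\Psi)p_v(\Psi')}$-type bounds) and estimates $\tau(M_v(\Psi)) - \tau(M_v(\Psi'))$ via the $\sqrt2$-Lipschitz constant of $\tau$ composed with the normalization perturbation, then reassembles $F_v = p_v\tau$ by a triangle inequality $|p_v\tau - p_v'\tau'| \le |p_v - p_v'|\,|\tau| + |p_v'|\,|\tau - \tau'|$. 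The main obstacle — and the reason for the larger constant — is the small-probability regime: controlling $|M_v(\Psi)\rangle - |M_v(\Psi')\rangle$ uniformly requires either the "homogeneous cancellation" trick above (giving the clean constant $2$) or a case split with somewhat wasteful constants. I would adopt the homogeneous-cancellation argument, note it gives the stated bound with room to spare, and remark that the looser constant $4\sqrt2+2$ in the statement is still valid \emph{a fortiori}.
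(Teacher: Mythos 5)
Your proposal is correct, and your primary argument takes a genuinely different (and sharper) route than the paper's. The paper proves the lemma exactly by the ``product-rule'' decomposition you describe in your last paragraph: it writes $|p_v\tau(\varphi_v) - p_v'\tau(\varphi_v')| \le |p_v(\tau(\varphi_v)-\tau(\varphi_v'))| + |(p_v-p_v')\tau(\varphi_v')|$, bounds the first term by extending to the full sum $\sum_i p_i|\tau(\varphi_i)-\tau(\varphi_i')|$ over a basis containing $|v\rangle$ and invoking an external Lipschitz estimate (Lem.~16 of \cite{Vairogs2024}) to get $2\sqrt2\,\|\Psi-\Psi'\|_1 \le 4\sqrt2\,\||\Psi\rangle-|\Psi'\rangle\|_2$, and bounds the second term by $\sum_i|p_i-p_i'| \le \|\Psi-\Psi'\|_1 \le 2\||\Psi\rangle-|\Psi'\rangle\|_2$; there is no case split on small $p_v$, so the blow-up you worried about never arises. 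Your homogeneous-cancellation argument --- writing $F_v(\Psi) = \big|\langle\widehat M_v(\Psi)|\sigma_y^{\otimes N_B}|\widehat M_v(\Psi)^\star\rangle\big|$ for the unnormalized vector $|\widehat M_v(\Psi)\rangle = (\langle v|_A\otimes I_B)|\Psi\rangle$, so the factors of $1/\sqrt{p_v}$ cancel against the explicit $p_v$, and then using that $\langle v|_A\otimes I_B$ is a contraction together with the $2$-Lipschitz bound for the quadratic form $w\mapsto\langle w|\sigma_y^{\otimes N_B}|w^\star\rangle$ on the unit ball --- is sound, handles the degenerate case $p_v=0$ automatically, avoids any appeal to the external lemma, and yields the constant $2$ in place of $4\sqrt2+2$. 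What the paper's route buys is uniformity with the rest of the appendix (the same Lem.~16 of \cite{Vairogs2024} is reused in Lem.~\ref{lem:ltl-neighbor-bound}); what yours buys is a self-contained proof and a strictly better Lipschitz constant, which would propagate to mildly tighter exponents downstream.
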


\begin{proof}
    Let $p_u$, $p_u'$, $\varphi_u$, $\varphi_u'$ denote $p_u(\Psi)$, $p_u(\Psi')$, $|M_u(\Psi)\rangle$, $|M_u(\Psi')\rangle$ for $|u\rangle \in \mathcal S(\mathcal H_A)$, respectively. Then
    \begin{align}
        |F_v(\Psi) - F_v(\Psi')| &= |p_v\tau(\varphi_v) - p_v'\tau(\varphi_v')| \\
        &= |p_v\tau(\varphi_v) - p_v\tau(\varphi_v') + p_v\tau(\varphi_v') - p_v'\tau(\varphi_v')| \\
        &= |p_v (\tau(\varphi_v) - \tau(\varphi_v')) + (p_v - p_v')\tau(\varphi_v')| \\
        &\le |p_v (\tau(\varphi_v) - \tau(\varphi_v'))| + |(p_v - p_v')\tau(\varphi_v')|
    \end{align}
    Let $|i\rangle \in \mathcal C(\mathcal H_A)$ denote a basis that contains $|v\rangle$. By \citep[Lem.~16]{Vairogs2024}, we have
    \begin{gather}
        \textstyle |p_v(\tau(\varphi_v) - \tau(\varphi_v'))| \le \sum_i p_i|\tau(\varphi_i) - \tau(\varphi_i')| \le 2\sqrt2\|\Psi - \Psi'\|_1 \le 4\sqrt2\||\Psi\rangle - |\Psi'\rangle\|_2 \\
        \textstyle |(p_v - p_v')\tau(\varphi_v')| \le \sum_i |p_i - p_i'| \le \|\Psi - \Psi'\|_1 \le 2\||\Psi\rangle - |\Psi'\rangle\|_2.
    \end{gather}
\end{proof}

\begin{lemma} \label{lem:Fv-Lipschitz-wrt-v}
    For all $|v\rangle, |w\rangle \in \mathcal S(\mathcal{H}_A)$, the following probability bound holds:
    \begin{equation}
        \left| F_v(\Psi) - F_w(\Psi) \right| \leq \sqrt{2} d_B \left\||v\rangle\langle v| - |w\rangle\langle w|\right\|_1.
    \end{equation}
\end{lemma}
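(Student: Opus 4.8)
The plan is to first eliminate the measurement probability $p_v(\Psi)$ from the definition of $F_v$, reducing it to a bilinear quantity in the unnormalized post-measurement vector, and then to run an elementary Lipschitz estimate, using Proposition~\ref{prop:trace-to-inner} to convert a Hilbert-space distance between chosen representatives of the rays $|v\rangle$, $|w\rangle$ into the trace distance between $|v\rangle\langle v|$ and $|w\rangle\langle w|$.

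First, for $|v\rangle \in \mathcal{S}(\mathcal{H}_A)$ I would set $|\mu_v\rangle \coloneqq (\langle v|_A \otimes I_B)|\Psi\rangle \in \mathcal{H}_B$, so that $p_v(\Psi) = \langle \mu_v | \mu_v\rangle$ and $|M_v(\Psi)\rangle = |\mu_v\rangle/\sqrt{p_v(\Psi)}$. Since $\sqrt{p_v(\Psi)}$ is a positive real scalar and $\sigma_y^{\otimes N_B}$ is unitary, a one-line computation gives $\tau(M_v(\Psi)) = |\langle \mu_v | \widetilde{\mu_v}\rangle| / p_v(\Psi)$ with $|\widetilde{\mu_v}\rangle \coloneqq \sigma_y^{\otimes N_B}|\mu_v^{\star}\rangle$, hence
\begin{equation*}
    F_v(\Psi) = p_v(\Psi)\, \tau(M_v(\Psi)) = \bigl|\langle \mu_v | \widetilde{\mu_v}\rangle\bigr|.
\end{equation*}
Two facts follow at once: $\| |\mu_v\rangle\|_2 = \sqrt{p_v(\Psi)} \le 1$ because $|v\rangle\langle v|_A \otimes I_B$ is a projector and $|\Psi\rangle$ is normalized; and $F_v(\Psi)$ depends on $|v\rangle$ only through $|v\rangle\langle v|$, since replacing $|v\rangle$ by $e^{i\theta}|v\rangle$ multiplies $\langle\mu_v|\widetilde{\mu_v}\rangle$ by $e^{2i\theta}$. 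The same phase-independence holds for $F_w$ and for the right-hand side $\||v\rangle\langle v| - |w\rangle\langle w|\|_1$, so I may choose the phases of $|v\rangle$ and $|w\rangle$ so that $\langle v | w\rangle \ge 0$.

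With that normalization, $\| |v\rangle - |w\rangle\|_2^2 = 2 - 2\langle v | w\rangle$, while Proposition~\ref{prop:trace-to-inner} gives $\||v\rangle\langle v| - |w\rangle\langle w|\|_1 = 2\sqrt{1 - \langle v|w\rangle^2}$; since $1 - \langle v|w\rangle^2 = (1-\langle v|w\rangle)(1+\langle v|w\rangle) \ge 1 - \langle v|w\rangle$, this yields $\| |v\rangle - |w\rangle\|_2 \le \tfrac{1}{\sqrt 2}\||v\rangle\langle v| - |w\rangle\langle w|\|_1$. Next, because the linear map $|x\rangle \mapsto (\langle x|_A\otimes I_B)|\Psi\rangle$ has operator norm at most $\| |\Psi\rangle\|_2 = 1$ (expand $|\Psi\rangle$ in a Schmidt basis of $\mathcal{H}_A$ and apply Cauchy--Schwarz), I get $\| |\mu_v\rangle - |\mu_w\rangle\|_2 \le \| |v\rangle - |w\rangle\|_2$. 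Finally, the add-and-subtract identity $\langle\mu_v|\widetilde{\mu_v}\rangle - \langle\mu_w|\widetilde{\mu_w}\rangle = \langle\mu_v - \mu_w|\widetilde{\mu_v}\rangle + \langle\mu_w|\widetilde{\mu_v - \mu_w}\rangle$, together with $\| |\widetilde{x}\rangle\|_2 = \| |x\rangle\|_2$, Cauchy--Schwarz, and $\| |\mu_v\rangle\|_2, \| |\mu_w\rangle\|_2 \le 1$, gives $|F_v(\Psi) - F_w(\Psi)| \le (\| |\mu_v\rangle\|_2 + \| |\mu_w\rangle\|_2)\,\| |\mu_v\rangle - |\mu_w\rangle\|_2 \le 2\| |\mu_v\rangle - |\mu_w\rangle\|_2$. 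Chaining the three inequalities, $|F_v(\Psi) - F_w(\Psi)| \le 2 \cdot \tfrac{1}{\sqrt 2}\||v\rangle\langle v| - |w\rangle\langle w|\|_1 = \sqrt 2\,\||v\rangle\langle v| - |w\rangle\langle w|\|_1$, which implies the stated bound (in fact with the factor $d_B$ to spare; in the writeup I would either record the stronger bound or simply invoke $\sqrt 2 \le \sqrt 2\, d_B$).

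The main obstacle --- really the only non-routine point --- is the simplification $F_v(\Psi) = |\langle\mu_v|\widetilde{\mu_v}\rangle|$ in the first step: it is what cancels the dangerous $1/\sqrt{p_v(\Psi)}$ normalization (which would blow up for low-probability measurement outcomes, and which is presumably why the stated bound carries the factor $d_B$) and what makes the dependence on $|v\rangle$ manifestly a function of $|v\rangle\langle v|$. After that, the only subtlety is the phase normalization in the second step, needed precisely because $\| |v\rangle - |w\rangle\|_2$ is not itself determined by $|v\rangle\langle v|$ and $|w\rangle\langle w|$ alone; everything else is Cauchy--Schwarz and the triangle inequality.
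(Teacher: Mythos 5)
Your proof is correct, and its skeleton is the same as the paper's: both arguments begin by observing that $F_v(\Psi)=\bigl|\langle \mu_v|\widetilde{\mu_v}\rangle\bigr|$ for the \emph{unnormalized} vector $|\mu_v\rangle=(\langle v|_A\otimes I_B)|\Psi\rangle$ (the paper writes $|P_v\rangle$), which cancels the $1/\sqrt{p_v}$ normalization; both then use the reverse triangle inequality, an add-and-subtract decomposition with Cauchy--Schwarz to get $|F_v-F_w|\le 2\,\||\mu_v\rangle-|\mu_w\rangle\|_2$, a phase normalization so that $\langle v|w\rangle\ge 0$, and Proposition~\ref{prop:trace-to-inner} to convert to the trace norm. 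The one place you genuinely diverge is in bounding $\||\mu_v\rangle-|\mu_w\rangle\|_2$ in terms of $\||v\rangle-|w\rangle\|_2$: you use the operator-norm estimate $\|(\langle\alpha|_A\otimes I_B)|\Psi\rangle\|_2\le\||\alpha\rangle\|_2$, whereas the paper inserts a resolution of the identity on $\mathcal H_B$ and applies $\sqrt{\sum_i a_i}\le\sum_i\sqrt{a_i}$ term by term, which is where its factor of $d_B$ enters. Your route is both simpler and strictly sharper --- it proves $|F_v(\Psi)-F_w(\Psi)|\le\sqrt2\,\||v\rangle\langle v|-|w\rangle\langle w|\|_1$ with no dimension factor --- and since this Lipschitz constant feeds into Lemmas~\ref{lem:neighbor-bound-F-v} and~\ref{lem:tau-bar-lipschitz} and thence into the $\varepsilon$-net mesh sizes, carrying the improved constant through would tighten the prefactors in Theorems~\ref{thm:global-rB} and~\ref{thm:local-rB}. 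One minor remark: your justification of the phase freedom (invariance of $F_v$ under $|v\rangle\mapsto e^{i\theta}|v\rangle$) is actually cleaner than the paper's, which rephases $|\Psi\rangle$ rather than $|v\rangle$ and so does not directly justify assuming $\langle v|w\rangle\ge 0$.
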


\begin{proof}
    For all $|u\rangle \in \mathcal S(\mathcal H_A)$ and for all $\theta \in \R$, we have
    \begin{equation}
        F_u(e^{i\theta}|\Psi\rangle) = \big|\langle P_u(e^{i\theta}|\Psi\rangle)|\wt P_u(e^{i\theta}|\Psi\rangle)\rangle\big| = \big|e^{-2i\theta} \langle P_u(\Psi)|\wt P_u(\Psi)\rangle\big| = F_u(\Psi).
    \end{equation}
    Thus, we may WLOG assume $\langle v|w\rangle =|\langle v|w\rangle|$. Observe the following relations (for simplicity we omit $\Psi$):
    \begin{align}
        \big| F_v - F_w\big|  &= \big||\langle P_v|\wt{P}_v\rangle| - |\langle P_w|\wt{P}_w\rangle|\big| &\text{by Eq.~\eqref{eq:fvpvmv}} \\
        &\le \big| \langle P_v| \wt{P}_v \rangle - \langle P_w|\wt{P}_w \rangle \big| \\
        &= \big| \langle P_v| \wt{P}_v \rangle - \langle P_w  | \wt{P}_v \rangle + \langle P_w  | \wt{P}_v \rangle -\langle P_w|\wt{P}_w \rangle \big| \\
        &= \big| (\langle P_v | - \langle P_w |)| \wt{P}_v  \rangle + \langle P_w | (|\wt{P}_v \rangle - |\wt{P}_w  \rangle) \big| \\
        &\leq \big| (\langle P_v | - \langle P_w |)| \wt{P}_v  \rangle\big| + \big|\langle P_w | (|\wt{P}_v \rangle - |\wt{P}_w  \rangle) \big| \\
        &\le \| |P_v\rangle - |P_w\rangle \|_2 \| |\wt P_v\rangle \|_2 + \| |P_w\rangle \|_2 \| |\wt P_v\rangle - |\wt P_w\rangle \|_2 \label{eq:i-am-here-cuz-cauchy} &\text{by Cauchy-Schwartz} \\
        &= \| |P_v\rangle - |P_w\rangle \|_2 \| |P_v\rangle \|_2 + \| |P_w\rangle \|_2 \| | P_v \rangle - |P_w  \rangle \|_2 \\
        &\le 2\||P_v\rangle - |P_w\rangle\|_2. \label{eq:wait-for-bound}
    \end{align}
    For simplicity let $|\alpha\rangle = |v\rangle - |w\rangle$ and $|i\rangle \in \mathcal C(\mathcal H_B)$. Let us continue from Eq.~\eqref{eq:wait-for-bound}:
    \begin{align}
        \big|F_v - F_w\big| \le 2\||P_v\rangle - |P_w\rangle\| &= 2\|(\langle\alpha| \otimes I_B)|\Psi\rangle\|_2 \\
        &= \textstyle 2\sqrt{\langle\Psi| (|\alpha\rangle\langle \alpha|_A \otimes I_B) |\Psi\rangle} \\
        &= \textstyle 2\sqrt{\sum_{i=1}^{d_B} \langle\Psi|(|\alpha\rangle\langle \alpha|_A \otimes |i\rangle\langle i|_B)|\Psi\rangle} \\
        &\le \textstyle 2\sum_{i=1}^{d_B} \sqrt{\langle\Psi|(|\alpha\rangle\langle \alpha|_A \otimes |i\rangle\langle i|_B)|\Psi\rangle} \\
        &= \textstyle 2\sum_{i=1}^{d_B} \big|(\langle \alpha|_A\langle i|_B)|\Psi\rangle\big| \\
        &\le \textstyle 2\sum_{i=1}^{d_B} \||\alpha\rangle_A|i\rangle_B\|_2 \cdot \||\Psi\rangle\|_2 &\text{by Cauchy-Schwartz} \\
        &= \textstyle 2\sum_{i=1}^{d_B} \||\alpha\rangle\|_2 \cdot \||i\rangle\|_2 \cdot \||\Psi\rangle\|_2 \\
        &= 2d_B\||\alpha\rangle\|_2 \\
        &= 2d_B\sqrt{(\langle v| - \langle w|)(|v\rangle - |w\rangle)} \\
        &= 2d_B\sqrt{2 - 2\Re(\langle v|w\rangle)} \\
        &= 2d_B\sqrt{2 - 2|\langle v|w\rangle|} &\because \langle v|w\rangle = |\langle v|w\rangle| \\
        &\le 2d_B\sqrt{2 - 2|\langle v|w\rangle|^2} &\because |\langle v|w\rangle|\le 1 \\
        &= \sqrt 2d_B\||v\rangle\langle v| - |w\rangle\langle w|\|_1 &\text{by Prop.~\ref{prop:trace-to-inner}.}
    \end{align}
\end{proof}

\begin{lemma} \label{lem:neighbor-bound-F-v}
    Let $\varepsilon, \delta>0$ be such that $\varepsilon - \sqrt{2}d_B\delta>0$. Given a fixed $|v\rangle \in \mathcal S(\mathcal{H}_A)$, we have the following probability bound
    \begin{equation}
        \Pr_{|\Psi\rangle \sim \mu_H}\left(F_{\varphi_{\max}}(|\Psi\rangle) \geq \tfrac{K}{d_A} + \varepsilon ~\text{ and }~ \||\varphi_{\max}\rangle\langle \varphi_{\max}| - |v\rangle\langle v|\|_1 \leq \delta \right) \leq 2\exp\left(- \tfrac{2d_Ad_B(\varepsilon - \sqrt{2}d_B\delta)^2}{9\pi^3 (4\sqrt{2}+2)^2}\right),
    \end{equation}
    which implies that as long as $|\varphi_{\max}\rangle$ and $|v\rangle$ are close enough, the value of $F_{\varphi_{\max}}(\Psi)$ will be close to $K/d_A$. (Recall that $K := \sqrt{2/(d_B + 1)}$ as defined in Eq.~\eqref{eq:K}.)
\end{lemma}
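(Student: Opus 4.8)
The plan is to strip off the neighborhood constraint using the Lipschitz bound in $v$, and then apply Lévy's lemma to the single functional $|\Psi\rangle \mapsto F_v(|\Psi\rangle)$ for the \emph{fixed} state $|v\rangle$, once its mean has been pinned down.

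\textbf{Step 1: reduce to a deviation bound for $F_v$.} Suppose the event inside the probability holds, so in particular $\||\varphi_{\max}\rangle\langle\varphi_{\max}| - |v\rangle\langle v|\|_1 \le \delta$. By Lemma~\ref{lem:Fv-Lipschitz-wrt-v}, $|F_{\varphi_{\max}}(\Psi) - F_v(\Psi)| \le \sqrt{2}\,d_B\,\||\varphi_{\max}\rangle\langle\varphi_{\max}| - |v\rangle\langle v|\|_1 \le \sqrt{2}\,d_B\delta$, so $F_{\varphi_{\max}}(\Psi) \ge K/d_A + \varepsilon$ forces $F_v(\Psi) \ge K/d_A + (\varepsilon - \sqrt{2}\,d_B\delta)$. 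Since $\varepsilon - \sqrt{2}\,d_B\delta > 0$ by hypothesis, it thus suffices to bound $\Pr_{|\Psi\rangle\sim\mu_H}\big(F_v(\Psi) \ge K/d_A + t\big)$ with $t := \varepsilon - \sqrt{2}\,d_B\delta$.

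\textbf{Step 2: the mean of $F_v$.} I claim $\mathbb{E}_{|\Psi\rangle\sim\mu_H}[F_v(\Psi)] \le K/d_A$. Writing $|\chi\rangle := (\langle v|_A\otimes I_B)|\Psi\rangle$, the definitions give $F_v(\Psi) = p_v(\Psi)\,\tau(M_v(\Psi)) = |\langle\chi|\sigma_y^{\otimes N_B}|\chi^*\rangle|$. Parametrize a $\mu_H$-random $|\Psi\rangle$ by its normalized coefficient matrix $C = G/\|G\|_F$, with $G$ a $d_A\times d_B$ matrix of i.i.d.\ standard complex Gaussians; then the row of $C$ indexed by $|v\rangle$ has squared norm $p_v(\Psi)$ and normalized direction $M_v(\Psi)$, and this direction is Haar-distributed on $\mathcal H_B$ and \emph{independent} of $p_v(\Psi)$. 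Hence $\mathbb{E}[F_v(\Psi)] = \mathbb{E}[p_v(\Psi)]\cdot\mathbb{E}_{M\sim\mathrm{Haar}(\mathcal H_B)}[\tau(M)]$. The first factor equals $\mathrm{tr}[(|v\rangle\langle v|_A\otimes I_B)\,I/(d_Ad_B)] = 1/d_A$. For the second, Jensen gives $\mathbb{E}[\tau(M)] \le \sqrt{\mathbb{E}|\langle M|\sigma_y^{\otimes N_B}|M^*\rangle|^2}$, and a standard second-moment computation on a single $d_B$-dimensional Haar-random state — using that $\sigma_y^{\otimes N_B}$ is real and symmetric (as $N_B$ is even) and squares to $I$ — yields $\mathbb{E}|\langle M|\sigma_y^{\otimes N_B}|M^*\rangle|^2 = 2/(d_B+1) = K^2$. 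Thus $\mathbb{E}[\tau(M)] \le K$ and $\mathbb{E}[F_v(\Psi)] \le K/d_A$.

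\textbf{Step 3: Lévy's lemma and conclusion.} By Lemma~\ref{lem:Fv-lipschitz-wrt-Psi}, $|\Psi\rangle\mapsto F_v(\Psi)$ is $(4\sqrt{2}+2)$-Lipschitz with respect to the Euclidean norm on the unit sphere of $\mathcal H_A\otimes\mathcal H_B$, which as a real sphere has dimension $2d_Ad_B$. Lévy's lemma then gives, for the chosen $t>0$, $\Pr_{\mu_H}\big(|F_v(\Psi) - \mathbb{E}[F_v(\Psi)]| \ge t\big) \le 2\exp\big(-\tfrac{2d_Ad_B\,t^2}{9\pi^3(4\sqrt{2}+2)^2}\big)$. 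Combining with Step 2 (so that $F_v(\Psi)\ge K/d_A+t$ implies $F_v(\Psi)-\mathbb{E}[F_v(\Psi)]\ge t$) and then Step 1, and substituting $t = \varepsilon - \sqrt{2}\,d_B\delta$, yields exactly the claimed bound. The one substantive point is Step 2: recognizing that for a Haar-random bipartite state the post-measurement state $M_v(\Psi)$ is itself Haar-random on $\mathcal H_B$ and statistically independent of the outcome probability $p_v(\Psi)$, which makes the mean factor as $\tfrac{1}{d_A}\mathbb{E}[\tau]$; the accompanying second-moment evaluation of $\langle M|\sigma_y^{\otimes N_B}|M^*\rangle$ is routine but must be carried out to certify the constant $K$. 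Everything else is bookkeeping with the two Lipschitz estimates and the standard form of Lévy's lemma already used in this appendix.
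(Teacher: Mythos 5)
Your proof is correct, and Steps 1 and 3 coincide with the paper's argument: the reduction from $F_{\varphi_{\max}}$ to $F_v$ via Lemma~\ref{lem:Fv-Lipschitz-wrt-v}, and the application of L\'evy's lemma with the Lipschitz constant $4\sqrt{2}+2$ from Lemma~\ref{lem:Fv-lipschitz-wrt-Psi}, are exactly what the paper does. Where you diverge is Step 2, the bound $\E_{\mu_H}[F_v(\Psi)] \le K/d_A$. The paper gets this by exploiting unitary invariance of the Haar measure to show $\E[F_w]$ is the same for every $|w\rangle$, summing over any orthonormal basis containing $|v\rangle$ to write $\E[F_v] = \tfrac{1}{d_A}\E[\overline{\tau}_\beta]$, and then importing $\E[\overline{\tau}_\beta]\le K$ from Lemma~22 of Ref.~\cite{Vairogs2024}. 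You instead give a self-contained derivation: the Gaussian parametrization shows $M_v(\Psi)$ is Haar on $\mathcal{H}_B$ and independent of $p_v(\Psi)$, so the mean factorizes as $\tfrac{1}{d_A}\E[\tau(M)]$, and Jensen plus the second-moment identity $\E|\langle M|\sigma_y^{\otimes N_B}|M^\star\rangle|^2 = 2/(d_B+1)$ (using that $\sigma_y^{\otimes N_B}$ is real, symmetric, and squares to the identity for even $N_B$) certifies $\E[\tau(M)]\le K$. Both computations check out; your route has the advantage of not leaning on the external lemma and of making transparent where the constant $K$ comes from, while the paper's route is shorter given that the cited result is already available and generalizes immediately to the basis-averaged quantity $\overline{\tau}_\beta$ needed elsewhere in the appendix.
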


\begin{proof}
    Let us claim that 
    \begin{align} F_{\varphi_{\max}}(\Psi) \ge \tfrac{K}{d_A} + \varepsilon ~\text{ and }~ \||\varphi_{\max}\rangle\langle\varphi_{\max}| - |v\rangle\langle v|\|_1 \le \delta ~\implies~ F_v(\Psi) \ge \tfrac{K}{d_A} + \varepsilon - \sqrt2 d_B\delta. \end{align}
    Proof:
    \begin{align}
        F_v(\Psi) &= F_{\varphi_{\max}}(\Psi) - (F_{\varphi_{\max}}(\Psi) - F_v(\Psi)) \label{eq:fvp6a0-1} \\
        &\ge F_{\varphi_{\max}}(\Psi) - |F_{\varphi_{\max}}(\Psi) - F_v(\Psi)| \\
        &\ge F_{\varphi_{\max}}(\Psi) - \sqrt2 d_B\||\varphi_{\max}\rangle\langle\varphi_{\max}| - |v\rangle\langle v|\|_1 &\text{by Lem.~\ref{lem:Fv-Lipschitz-wrt-v}} \\
        &\ge \tfrac{K}{d_A} + \varepsilon - \sqrt2 d_B \delta. \label{eq:fvp6a0-2}
    \end{align}
    Moreover, if $\beta = \{|\varphi_i\rangle\}_{i=1}^{d_A}$ is an arbitrary basis of $\mathcal H_A$, by \citep[Lem.~22]{Vairogs2024} we have
    \begin{align}
        \E_{|\Psi'\rangle \sim \mu_H} [F_v(\Psi')] &= \frac{1}{d_A}\sum_{i=1}^{d_A} \nolimits \E_{|\Psi'\rangle \sim \mu_H} [F_{\varphi_i}(\Psi')] \label{eq:haar-invariance-1944-1} \\
        &= \frac{1}{d_A}\E_{|\Psi'\rangle \sim \mu_H} [\overline \tau_\beta(\Psi')] \\
        &\le \frac{K}{d_A}. \label{eq:haar-invariance-1944-2}
    \end{align}
    where Eq.~\eqref{eq:haar-invariance-1944-1} is obtained because of the left/right invariance of the Haar measure: Suppose $|w\rangle = U|v\rangle$ for an arbitrary unitary matrix $U\in \mathrm U(d_A)$. Then
    \begin{align}
        F_w(|\Psi'\rangle) &= \big|\langle P_w(\Psi')| \widetilde P_w(\Psi')\rangle\big| \\
        &= \big|\langle \Psi'|(|w\rangle_A \otimes I_B) \sigma_y^{\otimes N_B}(\langle w^*|_A \otimes I_B)|\Psi'^*\rangle\big| \\
        &= \big|\langle \Psi'|(|w\rangle\langle w^*| \otimes \sigma_y^{\otimes N_B})|\Psi'^*\rangle\big| \\
        &= \big| \langle\Psi'|(U|v\rangle\langle v^*|U^T \otimes \sigma_y^{\otimes N_B})|\Psi'^*\rangle \big| \\
        &= \big| \langle\Psi'|(U\otimes I)(|v\rangle \langle v^*| \otimes \sigma_y^{\otimes N_B})(U^T \otimes I)|\Psi'^*\rangle \big| \\
        &= F_v \big( (U^\dagger \otimes I)|\Psi'\rangle \big).
    \end{align}
    Hence,
    \begin{align}
        \Pr_{|\Psi\rangle \sim \mu_H} &\left(F_{\varphi_{\max}}(|\Psi\rangle) \geq \tfrac{K}{d_A} + \varepsilon ~\text{ and }~ \| |\varphi_{\max}\rangle\langle\varphi_{\max}| - |v\rangle\langle v| \|_1 \le \delta \right) \\
        &\le \Pr_{|\Psi\rangle \sim \mu_H} \left(F_v(\Psi) \ge \tfrac{K}{d_A} + \varepsilon - \sqrt2 d_B \delta\right) & \text{by Eqs.~\eqref{eq:fvp6a0-1}$-$\eqref{eq:fvp6a0-2}} \\
        &\le \Pr_{|\Psi\rangle \sim \mu_H} \left(F_v(\Psi) \ge \E_{|\Psi'\rangle \sim \mu_H}[F_v(\Psi')] + \varepsilon - \sqrt2 d_B \delta\right) & \text{by Eqs.~\eqref{eq:haar-invariance-1944-1}$-$\eqref{eq:haar-invariance-1944-2}} \\
        &= \Pr_{|\Psi\rangle \sim \mu_H} \left(F_v(\Psi) - \E_{|\Psi'\rangle \sim \mu_H}[F_v(\Psi')] \ge \varepsilon - \sqrt2 d_B \delta\right) \\
        &\le 2\exp\left(- \tfrac{2d_Ad_B(\varepsilon - \sqrt{2}d_B\delta)^2}{9\pi^3 (4\sqrt{2}+2)^2}\right). & \text{by Levy's lemma \cite{Mele2024}}
    \end{align}
\end{proof}

\begin{proof}[Proof of Thm.~\ref{thm:global-rB}]
    For all $|\Psi\rangle \in \mathcal H_A \otimes \mathcal H_B$, the following inequality holds:
    \begin{equation} \label{eq:190j4-1}
        L^\tau_{\text{global}}(\Psi) = \max_{\{|\varphi_i\rangle\} \,\in\, \mathcal C(\mathcal H_A)} \sum_{i=1}^{d_A} \nolimits F_{\varphi_i}(\Psi) \le d_AF_{\varphi_{\max}}(\Psi).
    \end{equation}
    Let $\|\varphi_{\max} - v\|_1 := \||\varphi_{\max}\rangle \langle \varphi_{\max}| - |v\rangle \langle v|\|_1$ in the following. 
    Also, by \citep[Lem.~II.4]{Hayden_2004}, we can choose an $(\varepsilon/2\sqrt2d_Ad_B)$-net $\mathcal N$ on $\mathcal H_A$ with $|\mathcal N| \le (10\sqrt2 d_Ad_B/\varepsilon)^{2d_A}$, which suggests
    \begin{equation} \label{eq:smth-must-happen0}
        \Pr_{|\Psi\rangle \sim \mu_H} \left( \bigvee_{|v\rangle \in \mathcal H_A} \nolimits \|\varphi_{\max} - v\|_1 \le \tfrac{\varepsilon}{2\sqrt2 d_Ad_B} \right) = 1.
    \end{equation}
    Then
    \begin{align}
        &\text{\hspace*{-2em}} \Pr_{|\Psi\rangle \sim \mu_H} \big( L^\tau_{\text{global}}(\Psi) \ge K + \varepsilon \big) \le \Pr_{|\Psi\rangle \sim \mu_H} \left( F_{\varphi_{\max}}(\Psi) \ge \tfrac{K}{d_A} + \tfrac{\varepsilon}{d_A}\right) \qquad \text{by Eq.~\eqref{eq:190j4-1}} \\
        &= \Pr_{|\Psi\rangle \sim \mu_H} \left( \bigcup_{|v\rangle \in \mathcal N} \nolimits \left\{F_{\varphi_{\max}}(\Psi) \ge \tfrac{K}{d_A} + \tfrac{\varepsilon}{d_A} ~\wedge~ \|\varphi_{\max} - v\|_1 \le \tfrac{\varepsilon}{2\sqrt2 d_Ad_B} \right\}\right) \qquad \text{by Eq.~\eqref{eq:smth-must-happen0}} \\
        &\le \sum_{|v\rangle \in \mathcal N} \Pr_{|\Psi\rangle \sim \mu_H} \left( F_{\varphi_{\max}}(\Psi) \ge \tfrac{K}{d_A} + \tfrac{\varepsilon}{d_A} ~\wedge~ \|\varphi_{\max} - v\|_1 \le \tfrac{\varepsilon}{2\sqrt2 d_Ad_B} \right) \\
        &\le |\mathcal N| \cdot 2\exp\left(-\tfrac{d_B\varepsilon^2}{18\pi^3(4\sqrt2+2)^2d_A}\right) \qquad \text{by Lem.~\ref{lem:neighbor-bound-F-v}} \\
        &\le 2\left(\tfrac{10\sqrt2 d_Ad_B}{\varepsilon}\right)^{2d_A} \exp\left(-\tfrac{d_B\varepsilon^2}{18\pi^3(4\sqrt2+2)^2d_A}\right).
    \end{align}
\end{proof}

\subsection{Localizable Multipartite Entanglement}

\begin{lemma} \label{lem:tau-bar-lipschitz}
    Let $\beta = \{|\varphi_i\rangle\}_{i=1}^{d_A}, \gamma = \{|\eta_i\rangle\}_{i=1}^{d_A} \in \mathcal C(\mathcal H_A)$ be ordered orthonormal bases. Then 
    \begin{equation}
        |\overline{\tau}_\beta(|\Psi\rangle) - \overline{\tau}_{\gamma}(|\Psi\rangle)| \leq \sqrt{2}d_Ad_B\|\beta - \gamma\|_B.
    \end{equation}
\end{lemma}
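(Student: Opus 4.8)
The plan is to reduce the statement directly to the per-vector Lipschitz bound of Lemma~\ref{lem:Fv-Lipschitz-wrt-v} via the triangle inequality. Recall that by definition $\overline{\tau}_\beta(\Psi) = \sum_{i=1}^{d_A} F_{\varphi_i}(\Psi)$ and $\overline{\tau}_\gamma(\Psi) = \sum_{i=1}^{d_A} F_{\eta_i}(\Psi)$, where $\beta = \{|\varphi_i\rangle\}_{i=1}^{d_A}$ and $\gamma = \{|\eta_i\rangle\}_{i=1}^{d_A}$. First I would write the difference of the two averaged tangles as a single sum over the shared index set,
\begin{equation}
    |\overline{\tau}_\beta(\Psi) - \overline{\tau}_\gamma(\Psi)| = \left|\sum_{i=1}^{d_A} \bigl(F_{\varphi_i}(\Psi) - F_{\eta_i}(\Psi)\bigr)\right| \le \sum_{i=1}^{d_A} \bigl|F_{\varphi_i}(\Psi) - F_{\eta_i}(\Psi)\bigr|,
\end{equation}
using the ordinary triangle inequality. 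This is exactly the point where the convention of comparing \emph{like-indexed} basis elements (Def.~\ref{def:basis-norm}) is essential: pairing $|\varphi_i\rangle$ with $|\eta_i\rangle$ lets each summand be controlled by a single term of the $B$-norm.

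Next I would apply Lemma~\ref{lem:Fv-Lipschitz-wrt-v} to each summand with $|v\rangle = |\varphi_i\rangle$ and $|w\rangle = |\eta_i\rangle$, giving $|F_{\varphi_i}(\Psi) - F_{\eta_i}(\Psi)| \le \sqrt{2} d_B \||\varphi_i\rangle\langle\varphi_i| - |\eta_i\rangle\langle\eta_i|\|_1$, and then bound the trace-norm factor by the maximum over all indices, which is precisely $\|\beta - \gamma\|_B$ by Def.~\ref{def:basis-norm}. Summing the resulting $d_A$ identical bounds yields $|\overline{\tau}_\beta(\Psi) - \overline{\tau}_\gamma(\Psi)| \le d_A \cdot \sqrt{2} d_B \|\beta - \gamma\|_B = \sqrt{2} d_A d_B \|\beta - \gamma\|_B$, which is the claim. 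There is no serious obstacle here — the lemma is a bookkeeping consequence of Lemma~\ref{lem:Fv-Lipschitz-wrt-v}; the only thing to be careful about is that the two sums run over the same index set so that the triangle inequality can be applied term by term, and that one then invokes the $\max$ defining $\|\cdot\|_B$ rather than, say, a sum of trace distances.
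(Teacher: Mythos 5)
Your proposal is correct and follows exactly the paper's own argument: split the difference term by term via the triangle inequality, apply Lemma~\ref{lem:Fv-Lipschitz-wrt-v} to each like-indexed pair, and bound each trace distance by the $B$-norm before summing the $d_A$ terms. No differences worth noting.
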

\begin{proof}
    \begin{align}
        |\overline{\tau}_\beta(|\Psi\rangle) - \overline{\tau}_{\gamma}(|\Psi\rangle)| &= \bigg| \sum_{i = 1}^{d_A} \nolimits F_{\varphi_i}(\Psi) - \sum_{i = 1}^{d_A} \nolimits F_{\eta_i} (\Psi) \bigg| \\
        &= \bigg| \sum_{i = 1}^{d_A} \nolimits F_{\varphi_i} (\Psi) - F_{\eta_i} (\Psi) \bigg| \\
        &\leq \sum_{i = 1}^{d_A} \nolimits | F_{\varphi_i} (\Psi) - F_{\eta_i}(\Psi) | \label{eq:abs-val-inside}\\
        & \leq \sum_{i = 1}^{d_A} \nolimits \sqrt{2} d_B \| | \varphi_i \rangle \langle \varphi_i | - | \eta_i \rangle \langle \eta_i | \|_1 \qquad \text{by Lem.~\ref{lem:Fv-Lipschitz-wrt-v}} \\
        &= \sqrt{2} d_B \sum_{i = 1}^{d_A} \nolimits \| | \varphi_i \rangle \langle \varphi_i | - |\eta_i \rangle \langle \eta_i | \|_1 \\
        &\leq \sqrt{2} d_A d_B \| \beta - \gamma \|_B. \label{eq:B-norm-result}
    \end{align}
\end{proof}

\begin{lemma} \label{lem:ltl-neighbor-bound}
    For any $|\Psi\rangle \in \mathcal S(\mathcal H_A \otimes \mathcal H_B)$, choose $\beta_{\max} \in \mathcal P(\mathcal H_A)$ to be the ordered basis such that $\overline \tau_\beta(\Psi) \le \overline \tau_{\beta_{\max}}(\Psi)$ for all $\beta \in \mathcal P(\mathcal H_A)$. By Eq.~\eqref{eq:ltl-def-alt}, $\ltl(\Psi) = \overline \tau_{\beta_{\max}}(\Psi)$.
    Let $\varepsilon, \delta>0$ be such that $\varepsilon - \sqrt{2}d_Ad_B\delta>0$ and $\gamma \in \mathcal{P}(\mathcal{H}_A)$ be some fixed basis. Then the following probability bound holds:
    \begin{equation} \label{eq:lemma-11-main}
        \Pr_{|\Psi\rangle \sim \mu_H} \Big(\overline\tau_{\beta_{\max}}(\Psi) \geq K + \varepsilon ~\text{ and }~ \|\beta_{\max} - \gamma\|_B \leq \delta \Big) \leq 2\exp\left(- \tfrac{2d_Ad_B(\varepsilon - \sqrt{2}d_Ad_B\delta)^2}{9\pi^3 (4\sqrt{2}+2)^2}\right)
    \end{equation}
\end{lemma}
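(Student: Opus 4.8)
The plan is to follow the proof of Lemma~\ref{lem:neighbor-bound-F-v} almost verbatim, with three substitutions: the pointwise functional $F_{\varphi_{\max}}$ is replaced by the averaged tangle $\overline\tau_{\beta_{\max}}$, the Lipschitz-in-$v$ estimate of Lemma~\ref{lem:Fv-Lipschitz-wrt-v} is replaced by the Lipschitz-in-basis estimate of Lemma~\ref{lem:tau-bar-lipschitz}, and the threshold $K/d_A$ becomes $K$ (since $\overline\tau_\gamma$ already sums over all of $A$).

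First I would establish the deterministic implication: if $\overline\tau_{\beta_{\max}}(\Psi)\ge K+\varepsilon$ and $\|\beta_{\max}-\gamma\|_B\le\delta$, then $\overline\tau_\gamma(\Psi)\ge K+\varepsilon-\sqrt2\,d_Ad_B\delta$. This follows by writing $\overline\tau_\gamma(\Psi)=\overline\tau_{\beta_{\max}}(\Psi)-\big(\overline\tau_{\beta_{\max}}(\Psi)-\overline\tau_\gamma(\Psi)\big)\ge\overline\tau_{\beta_{\max}}(\Psi)-|\overline\tau_{\beta_{\max}}(\Psi)-\overline\tau_\gamma(\Psi)|$ and invoking Lemma~\ref{lem:tau-bar-lipschitz}; the hypothesis $\varepsilon-\sqrt2\,d_Ad_B\delta>0$ guarantees the shifted threshold stays positive. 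Next I would use the mean bound $\E_{|\Psi'\rangle\sim\mu_H}[\overline\tau_\gamma(\Psi')]\le K$ --- this is precisely Eq.~\eqref{eq:haar-invariance-1944-2}, resting on \citep[Lem.~22]{Vairogs2024} and left/right invariance of the Haar measure --- to conclude that the event of interest lies inside $\{\overline\tau_\gamma(\Psi)-\E_{|\Psi'\rangle\sim\mu_H}[\overline\tau_\gamma(\Psi')]\ge\varepsilon-\sqrt2\,d_Ad_B\delta\}$. Finally I would apply Lévy's lemma to $|\Psi\rangle\mapsto\overline\tau_\gamma(\Psi)$ on the unit sphere of $\mathcal H_A\otimes\mathcal H_B$ (real dimension $2d_Ad_B$) with Lipschitz constant $4\sqrt2+2$, which produces exactly $2\exp\!\big(-\tfrac{2d_Ad_B(\varepsilon-\sqrt2\,d_Ad_B\delta)^2}{9\pi^3(4\sqrt2+2)^2}\big)$.

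The main obstacle is the Lévy step, specifically checking that $\overline\tau_\gamma$ is $(4\sqrt2+2)$-Lipschitz in $\||\cdot\rangle\|_2$ with an $O(1)$ constant rather than an $O(d_A)$ one: a naive term-by-term bound $\overline\tau_\gamma=\sum_i F_{\eta_i}$ via Lemma~\ref{lem:Fv-lipschitz-wrt-Psi} would cost a factor of $d_A$ and ruin the exponent. The point is to keep the normalization $\sum_i p_i(\Psi)=1$ intact: writing $\overline\tau_\gamma(\Psi)=\sum_i p_i(\Psi)\,\tau(M_{\eta_i}(\Psi))$, one bounds $|\overline\tau_\gamma(\Psi)-\overline\tau_\gamma(\Psi')|\le\sum_i p_i(\Psi)|\tau(M_{\eta_i}(\Psi))-\tau(M_{\eta_i}(\Psi'))|+\sum_i|p_i(\Psi)-p_i(\Psi')|$, and controls the two aggregate sums by $2\sqrt2\|\Psi-\Psi'\|_1$ and $\|\Psi-\Psi'\|_1$ respectively through \citep[Lem.~16]{Vairogs2024} (using $\tau\le1$), for a total of $(2\sqrt2+1)\|\Psi-\Psi'\|_1\le(4\sqrt2+2)\||\Psi\rangle-|\Psi'\rangle\|_2$ --- the exact analogue of Lemma~\ref{lem:Fv-lipschitz-wrt-Psi} for the full basis sum. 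Everything else is a routine repackaging of the global-case argument, so once this Lipschitz constant is secured the result follows.
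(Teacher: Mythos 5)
Your proposal matches the paper's proof essentially step for step: the same deterministic implication via Lemma~\ref{lem:tau-bar-lipschitz}, the same mean bound $\E[\overline\tau_\gamma]\le K$ from \citep[Lem.~22]{Vairogs2024}, and the same Lipschitz estimate $(1+2\sqrt2)\|\Psi-\Psi'\|_1\le(2+4\sqrt2)\||\Psi\rangle-|\Psi'\rangle\|_2$ obtained from \citep[Lem.~16]{Vairogs2024} before applying L\'evy's lemma. You also correctly identified and resolved the one genuine subtlety --- that the Lipschitz constant must be dimension-independent rather than the $O(d_A)$ one a term-by-term bound would give --- exactly as the paper does.
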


\begin{proof} Let us claim that
    \begin{align}
        \overline\tau_{\beta_{\max}}(\Psi) \geq K + \varepsilon ~\text{ and }~ \|\beta_{\max} - \gamma\|_B \leq \delta ~\implies~ \overline \tau_\gamma(\Psi) \ge K + \varepsilon - \sqrt{2}d_Ad_B\delta.
    \end{align}
    Proof:
    \begin{align}
        \overline{\tau}_{\gamma} (\Psi) &= \overline{\tau}_{\beta_{\max}}(\Psi) - (\overline{\tau}_{\beta_{\max}}(\Psi) - \overline{\tau}_{\gamma}(\Psi)) \label{eq:tau-gamma-bond-1} \\
        &\geq \overline{\tau}_{\beta_{\max}}(\Psi)- | \overline{\tau}_{\beta_{\max}}(\Psi) - \overline{\tau}_{\gamma}(\Psi) | \\
        &\ge \overline{\tau}_{\beta_{\max}}(\Psi) - \sqrt{2}d_Ad_B \| \beta_{\max} - \gamma \|_B  &\text{by Lem.~\ref{lem:tau-bar-lipschitz}} \\
        &\ge K + \varepsilon - \sqrt{2}d_Ad_B\delta. \label{eq:tau-gamma-bond-2}
    \end{align}
    Hence,
    \begin{align}
        \Pr_{|\Psi\rangle \sim \mu_H} &\Big(\overline\tau_{\beta_{\max}}(\Psi) \geq K + \varepsilon ~\text{ and }~ \|\beta_{\max} - \gamma\|_B \leq \delta \Big) \\
        &\le \Pr_{|\Psi\rangle \sim \mu_H} \Big(\overline{\tau}_{\gamma}(\Psi)\geq K + \varepsilon - \sqrt{2}d_Ad_B\delta \Big) &\text{by Eq.~\eqref{eq:tau-gamma-bond-1}$-$\eqref{eq:tau-gamma-bond-2}} \\
        &\leq \Pr_{|\Psi\rangle \sim \mu_H} \left( \overline{\tau}_{\gamma}(\Psi) \geq \E_{|\Psi'\rangle \sim \mu_H}[\overline{\tau}_{\gamma}(\Psi')] + \varepsilon - \sqrt{2}d_Ad_B\delta \right) &\text{by \citep[Lem.~22]{Vairogs2024}} \\
        &= \Pr_{|\Psi\rangle \sim \mu_H} \left( \overline{\tau}_{\gamma}(\Psi) - \E_{|\Psi'\rangle \sim \mu_H}[\overline{\tau}_{\gamma}(\Psi')] \ge \varepsilon - \sqrt{2}d_Ad_B\delta \right) \\
        &\leq \Pr_{|\Psi\rangle \sim \mu_H} \left( \left|\overline{\tau}_{\gamma}(\Psi) - \E_{|\Psi'\rangle \sim \mu_H}[\overline{\tau}_{\gamma}(\Psi')]\right| \ge \varepsilon - \sqrt{2}d_Ad_B\delta \right). \label{eq:levys-lemma}
    \end{align}

    By \citep[Lem.~16]{Vairogs2024}, suppose that a concave, increasing function $f:[0,\infty) \to [0,\infty)$ satisfies $|\tau(|\psi\rangle) - \tau(|\psi'\rangle)| \le  f(\|\psi - \psi'\|_1)$ and $\tau(|\psi\rangle) \le f(\|\psi\|_1)$ for all normalized $|\psi\rangle, |\psi'\rangle$. Then for all $|\Psi\rangle, |\Psi'\rangle \in \mathcal S(\mathcal H_A \otimes \mathcal H_B)$, the following inequality holds:
    \begin{equation}
        |\overline{\tau}_{\beta} (|\Psi\rangle) -\overline{\tau}_{\beta}(|\Psi'\rangle)| \leq f(2 \| \Psi - \Psi' \|_1) + \| \Psi - \Psi' \|_1.
    \end{equation}
    Also, we can easily show that $f(x) = \sqrt2 x$ satisfies the aforementioned conditions. Thus, we have the following
    \begin{align}
        | \overline{\tau}_{\gamma}(|\Psi\rangle) - \overline{\tau}_{\gamma}(|\Psi' \rangle ) | \leq (1+2\sqrt2) \|\Psi - \Psi'\|_1 \leq (2 + 4\sqrt2) \|\Psi - \Psi'\|_2,
    \end{align}
    The above implies that $\overline \tau_\gamma(\Psi)$ is Lipschitz continuous with Lipschitz constant $4\sqrt2 + 2$, and hence from Eq.~\eqref{eq:levys-lemma}, we may use Levy's lemma \cite{Mele2024} to finish the proof.
\end{proof}

\begin{proof}[Proof of Thm.~\ref{thm:local-rB}]
    By Thm.~\ref{thm:basis-net-thm}, there exists a basis $(\varepsilon/2\sqrt2d_Ad_B)$-net $\mathcal N$ on $\mathcal P(\mathcal H_A)$ with
    \begin{equation} \label{eq:ncard-bond}
        |\mathcal N| \le \left(\tfrac{5(1+2\sqrt2)^2N_A^2}{(\varepsilon/2\sqrt2d_Ad_B)^2}\right)^{8N_A} = \left(\tfrac{40(1+2\sqrt2)^2N_A^2d_A^2d_B^2}{\varepsilon^2}\right)^{8N_A}.
    \end{equation}
    Then by Def.~\ref{def:basis-epsnet} there must be $\gamma \in \mathcal N$ such that $\|\beta_{\max} - \gamma\| \le \varepsilon/2\sqrt2d_Ad_B$, which suggests
    \begin{equation} \label{eq:smth-must-happen}
        \Pr_{|\Psi\rangle \sim \mu_H} \left(\bigvee_{\gamma \in \mathcal N} \nolimits \|\beta_{\max} - \gamma\| \le \tfrac{\varepsilon}{2\sqrt2 d_Ad_B}\right) = 1.
    \end{equation}
    Hence,
    \begin{align}
        &\text{\hspace*{-4em}}\Pr_{|\Psi\rangle \sim \mu_H} \big(\ltl(\Psi) \geq K + \varepsilon \big) \nonumber \\
        = ~~&\Pr_{|\Psi\rangle \sim \mu_H} \big(\overline\tau_{\beta_{\max}}(\Psi) \geq K + \varepsilon \big) \\
        = ~~&\Pr_{|\Psi\rangle \sim \mu_H} \left(\bigcup_{\gamma \in \mathcal N} \nolimits \left\{\overline\tau_{\beta_{\max}}(\Psi) \ge K + \varepsilon ~\wedge~ \|\beta_{\max} - \gamma\| \le \tfrac{\varepsilon}{2\sqrt2 d_Ad_B}\right\}\right)\,, \qquad \text{by Eq.~\eqref{eq:smth-must-happen}} \\
        \le ~~&\sum_{\gamma \in \mathcal N} \nolimits \Pr_{|\Psi\rangle \sim \mu_H}\left(\overline\tau_{\beta_{\max}}(\Psi) \ge K + \varepsilon ~\wedge~ \|\beta_{\max} - \gamma\| \le \tfrac{\varepsilon}{2\sqrt2 d_Ad_B}\right) \\
        \le ~~&|\mathcal N| \cdot 2\exp\left(-\tfrac{d_Ad_B\varepsilon^2}{18\pi^3(2+4\sqrt2)^2}\right)\,, \qquad \text{by Lem.~\ref{lem:ltl-neighbor-bound}} \\
        \le ~~&2\left(\tfrac{40(1+2\sqrt2)^2N_A^2d_A^2d_B^2}{\varepsilon^2}\right)^{8N_A}\exp\left(-\tfrac{d_Ad_B\varepsilon^2}{18\pi^3(2+4\sqrt2)^2}\right)\,, \qquad \text{by Eq.~\eqref{eq:ncard-bond}.}
    \end{align}   
\end{proof}

\section{Simulation Details}
\label{app:sim-details}

\subsection{Nonisomorphic Graphs and Bipartitions Thereof}
\label{app:noniso_sim}
Due to the large number of possible graphs and bipartitions, it is only feasible to enumerate small $n$, specifically $n \leq 10$, where $n$ is the number of vertices in the graph. We use the \href{https://users.cecs.anu.edu.au/~bdm/data/graphs.html}{online database} of nonisomorphic graphs due to Brendan McKay. Immediately we observe that number of nonisomorphic graphs for each reported by this database are consistent with OEIS1349 (number of graphs with n unlabeled vertices). \\

For $n \leq 8$, the number of nonisomorphic graphs and possible bipartitions are small enough to perform a complete enumeration of all bipartitions. For $9\leq n\leq 10$, we continue to sample randomly from all nonisomorphic graphs and all bipartitions. \\

An issue is that bipartitions may be isomorphic, which again causes uneven probability distribution. We define bipartition isomorphism as follows. Consider two graphs of $n$ vertices $G_1(V_1,E_1), G_2(V_2,E_2)$. Then 2 bipartitions $(G_1,X_1), X_1 \subset V_1, (G_2,X_2), X_2 \subset V_2$ are isomorphic if $|x_1| = |x_2|$ and there exists relabeling of vertices $p:\{1,\cdots,n\}\rightarrow \{1,\cdots,n\}$ s.t. $(G_1,x_1)$ and $(p(G_2),p(x_2))$ have identical adjacency matrices for the $X$ subgraph, $G-X$ subgraph, and for edges between $X$ and $G-X$. \\

To generate nonisomorphic bipartitions of a graph for $n \leq 8$, we can enumerate all relabelings $p$ and see if the resulting adjacency matrices has already been seen. Note that two nonisomorphic graphs cannot produce isomorphic bipartitions; otherwise, the 3 adjacency matrices uniquely determine the adjacency matrix of the original graph, so two graphs would be isomorphic by the relabeling from bipartition isomorphism. As a result, we only need to compare each relabeling's adjacency matrix tuple against relabeling tuples of the same (nonisomorphic) graph. \\

For $9\leq n\leq 10$, we use \citep[Thm.~1]{bandit} to give an approximation argument. Due to this theorem, we can use the following procedure to obtain a close approximation of the true proportion of solvable bipartitions for a fixed bipartition size. For each nonisomorphic graph, we take exactly 1 (random) bipartition. Note this is a random sample of all nonisomorphic bipartitions. For our case, a specific element of the sample has value 1 if it is solvable and 0 otherwise, thus $b=1$. The sample size are $261080, 11716571$ for $n=9,10$ respectively, which is sufficient for an error of $<0.004$ for $n=9$ and $<0.0005$ for $n=10$ with probability of $>99.98\%$. This accuracy is sufficient to see a trend in the data: the percentage solvable appears to be a flipped logistic (sigmoid) function over $k$, with a midpoint around $k=n/2$. 

\subsection{Graph generation for graph families}\label{app:graph-families}

Generation of the families of graphs was done by taking all possible bipartitions for graphs of a certain family, by fixing the number of vertices. So, if we fix the number of vertices to be $10$ and the family to be a cycle, for example, then all possible bipartitions were considered for the $10-$cycle, and the probability of solution was calculated as the ratio of these bipartitions that give a solution to the total number of bipartitions of a $10-$cycle. The different families of graphs with these probabilities for graphs with $16$ vertices are shown in Fig. \ref{fig:alternative-ensembles}b.

\end{document}